\documentclass[10pt]{article}
\usepackage{enumerate}
\usepackage{amsmath,amssymb}
\usepackage{caption}
\usepackage{dsfont}
\usepackage[usenames]{color}
\usepackage{bm}
\usepackage{ying}
\usepackage{multirow}
\usepackage{rotating}
\usepackage{fullpage} 
\usepackage{authblk}
\usepackage{enumerate}
\usepackage{comment}
\usepackage{geometry}
\geometry{verbose,tmargin=1in,bmargin=1in,lmargin=1in,rmargin=1in}

\usepackage{float}

\usepackage{mathtools}
\mathtoolsset{showonlyrefs}
 
\newcolumntype{g}{>{\columncolor{red}}c}

 \usepackage{tikz}
\usepackage{pgf}
\usepackage{pgfplots} 
\usepackage{listings}
\usepackage[colorlinks,
            linkcolor=red,
            anchorcolor=blue,
            citecolor=blue
            ]{hyperref}

\usepgfplotslibrary{fillbetween}
\usetikzlibrary{patterns}

\pgfmathdeclarefunction{gauss}{3}{%
  \pgfmathparse{#2+1/(#3*sqrt(2*pi))*exp(-((x-#1)^2)/(2*#3^2))}%
}

\newenvironment{customlegend}[1][]{%
    \begingroup
    \csname pgfplots@init@cleared@structures\endcsname
    \pgfplotsset{#1}%
}{%
    \csname pgfplots@createlegend\endcsname
    \endgroup
}%

\def\addlegendimage{\csname pgfplots@addlegendimage\endcsname}

\usepackage{graphicx,amssymb}
\usepackage{xcolor}

\let\emptyset\varnothing
\let\hat\widehat
\let\tilde\widetilde

\def\asto{{\stackrel{\textrm{a.s.}}{\to}}}

\newcommand{\cond}{ {\textrm{cond}} }
\newcommand{\new}{ {\textrm{new}} }
\newcommand{\shift}{ {\textrm{shift}} }
\newcommand{\trans}{ {\textrm{trans}} }

\def\asto{{\stackrel{\textrm{a.s.}}{\to}}}

\def \iid {\stackrel{\text{i.i.d.}}{\sim}}
\def \iidtext {\textrm{i.i.d.}}

\def\given{{\,|\,}}
\def\biggiven{\,\big{|}\,}
\def\Biggiven{\,\Big{|}\,}
\def\bigggiven{\,\bigg{|}\,}

\usepackage{xcolor}

\def\##1\#{\begin{align}#1\end{align}}
\def\$#1\${\begin{align*}#1\end{align*}}

\begin{document} 
 
\title{Tailored inference for finite populations: conditional validity \\
and transfer across distributions
} 
\author{Ying Jin} 
\author{Dominik Rothenh\"ausler}
\affil{Department of Statistics, Stanford University}
\date{\today}

\maketitle

\begin{abstract}
 
Parameters of sub-populations  
can be 
more relevant than 
super-population ones. 
For example, a healthcare provider 
may be interested in 
the effect of 
a treatment plan for a specific subset of their patients;  
policymakers may be concerned with 
the impact of 
a policy in a particular state within a given population.  
In these cases, the focus is on a specific finite population, 
as opposed to an infinite super-population. 
Such a population can be 
characterized by fixing 
some attributes that are intrinsic to them, 
leaving unexplained variations like 
measurement error as random. 
Inference for  a population  
with fixed attributes 
can then be modeled
as inferring parameters 
of a conditional distribution. 
Accordingly, it is desirable 
that  confidence intervals 
are conditionally valid 
for the realized population, 
instead of marginalizing over many possible draws of populations.

We provide a statistical inference framework 
for parameters of finite populations 
with known attributes.   
Leveraging the attribute information, 
our estimators and confidence intervals 
closely target  
a specific finite population.
When the data is from the population of interest, 
our confidence intervals   
attain asymptotic conditional validity 
given the attributes, 
and are  shorter 
than those for super-population inference. 
In addition, 
we develop procedures to 
infer parameters of 
new populations with differing covariate distributions; the   
confidence intervals are also conditionally valid 
for the new populations
 under mild conditions. 
Our methods extend to situations where 
the fixed information has a weaker structure 
or is only partially observed.  
We demonstrate 
the validity and applicability of our methods 
using simulated and real-world data.

\end{abstract}
 

\section{Introduction}
  
Statistical inference targets populations of
various resolutions, 
from  
super-population to individuals.   
In causal inference for example,  
traditionally,  
average treatment effects 
describe properties of a 
hypothetical super-population. 
Driven by the need for individualization 
in domains like 
precision medicine~\citep{kosorok2019precision}, 
there is a surge of interest in 
heterogeneous treatment effects   
to provide unit-specific information that 
varies with individual characteristics.   

This paper studies  
a situation that lies 
between unit-specific and super-population inference.  
For instance, to decide  whether to deploy a 
novel treatment plan, 
it is sensible for a healthcare provider   
to focus on its own patients.   
The population of interest 
might be best described 
as a finite set of units, as opposed to  
one unit or a hypothetical super-population. 
Characterizing these patients as drawn from 
a super-population is reasonable  
if no other knowledge is available.  
However, if some information 
such as their demographics is given,  
averaging over
many draws of such information -- a super-population 
perspective -- becomes
inappropriate. 
Instead, to describe these patients,  
their demographics should be viewed 
as fixed or conditioned on the realized values.  

To model this scenario,  
we will allow practitioners to choose certain 
information  
about these units as fixed; 
{depending on the application, 
it may or may not 
be fully observed}.  
Our estimand is a parameter of 
the conditional distribution of data given such information, 
termed \emph{conditional parameter}.   
It is argued in~\cite{abadie2014inference} that 
sometimes
conditional parameters can 
be more relevant 
than super-population parameters.  
Let us continue with the healthcare example 
and consider 
two settings where
conditional parameters are of interest.  
 
The first setting is 
described in~\cite{abadie2014inference} 
as ``the sample 
is from the population of interest''.  
Suppose 
the healthcare provider is interested 
in its patients' health conditions 
after deploying the treatment plan 
and collecting the health data from them.  
The left panel of
Figure~\ref{fig:intro_visual} visualizes 
the super-population (blue) 
versus a realized sub-population (red), 
also showing other potential sub-populations (grey).  
Once certain attributes are fixed, 
the current population is  
represented by the red-shaded
conditional distribution from which the collected data are drawn. In contrast, 
the super-population  marginalizes over all potential sub-populations, 
including those less relevant for the current patients. 
The task is to infer 
parameters of the realized conditional distribution 
using data from it.  
We call this setting \emph{inference 
for the population at hand}.
In this case, 
quantifying the uncertainty for 
the (red) sub-population leads to distinct (shorter) 
confidence intervals.

\begin{figure}[ht]
\centering
\scalebox{0.8}{%
\begin{tikzpicture}
  \begin{axis}[
  no markers, domain=0:15, samples=100,
 axis x line=middle,
      axis y line=none,
  every axis y label/.style={at=(current axis.above origin),anchor=south},
  every axis x label/.style={at=(current axis.right of origin),anchor=west},
  height=4cm, width=12cm,
  xtick=\empty,
  ytick=\empty,
  enlargelimits=false, clip=false, axis on top,
  grid = major
  ]   
     \addplot [very thick,cyan!60!black, domain=5:13] {gauss(8.2,0.2,0.9)};   
\begin{scope}[scale=0.8]
   \addplot [very thick,red!80!black, domain=5.8:11.8, name path=C] {gauss(8.8,0,0.7)}; 
    \addplot [thick, white!60!black, dashed, domain=6.5:14.5] {gauss(10.5,0,0.7)};
    \addplot [thick,  white!60!black, dashed, domain=5.4:13.4] {gauss(9.4,0,0.7)};
    \addplot [thick,  white!60!black, dashed, domain=4.4:12.4] {gauss(8.4,0,0.7)};
    \path[name path=xaxis]
      (0,0) -- (\pgfkeysvalueof{/pgfplots/xmax},0);
    \addplot[white!80!red] fill between[of=xaxis and C, opacity=0, soft clip={domain=-1:15}];  
\end{scope} 
\end{axis} 
\draw [black] (0,0) -- (2, 1.2); 
\draw [black] (1.25, 0.75) -- (10, 0.75); 
\draw [dashed, thick, red!80!black] (3.65,0) -- (3.65,1.65);
\node[red!80!black] at (3.65,-0.3) {\footnotesize $ \theta_n^\cond$} ;

\begin{customlegend}[
legend cell align=left,
legend entries={ 
Super-pop.,
Realized sub-pop., 
Other sub-populations
},
legend style={at={(11,2.5)},font=\footnotesize}] 
    \addlegendimage{no markers,cyan!60!black}
    \addlegendimage{no markers,red }
    \addlegendimage{no markers,white!40!black,dashed}
\end{customlegend}

\end{tikzpicture}
}
\scalebox{0.8}{ 
\begin{tikzpicture}
  \begin{axis}[
  no markers, domain=0:15, samples=100,
 axis x line=middle,
      axis y line=none,
  every axis y label/.style={at=(current axis.above origin),anchor=south},
  every axis x label/.style={at=(current axis.right of origin),anchor=west},
  height=4cm, width=12cm,
  xtick=\empty,
  ytick=\empty,
  enlargelimits=false, clip=false, axis on top,
  grid = major
  ]  
    
     \addplot [very thick,green!50!black, domain=10:19] {gauss(14,0.16,1.2)};   

\begin{scope}[scale=0.8] 
    
    \addplot [very thick, magenta!70!black, domain=10:17, name path=T] {gauss(13.85,0,1.1)};
    \addplot [thick,  white!60!black, dashed, domain=9.4:16.4] {gauss(12.9,0,1.1)};
    \addplot [thick,  white!60!black, dashed, domain=11:19] {gauss(15,0,1.1)};
    \addplot [thick,  white!60!black, dashed, domain=12:20] {gauss(16,0,1.1)};
    \path[name path=xaxis2]
      (0,0) -- (15,0);
    \addplot[white!80!magenta] fill between[of=xaxis and T, opacity=0, soft clip={domain=-1:25}]; 
\end{scope} 
\end{axis}

\draw [black] (0,0) -- (2, 1.2); 
\draw [black] (1.25, 0.75) -- (10, 0.75);   
\draw [dashed, thick, magenta!70!black] (3.5,0) -- (3.5,1.45); 
\node[magenta!70!black] at (3.5,-0.3) {\footnotesize $ \theta_m^{\cond,\new}$} ;

\begin{customlegend}[
legend cell align=left,
legend entries={ 
New super-pop.,
New realized sub-pop., 
Other sub-populations
},
legend style={at={(11,2.5)},font=\footnotesize}] 
    \addlegendimage{no markers,thick,green!50!black}
    \addlegendimage{no markers,magenta }
    \addlegendimage{no markers,white!40!black,dashed}
\end{customlegend}
 
\end{tikzpicture}
}
\caption{Visualization of super-population inference (blue) 
versus conditional (red) 
and transductive inference (purple). 
Dashed lines
are other potential sub-populations.} 
\label{fig:intro_visual}
\end{figure}
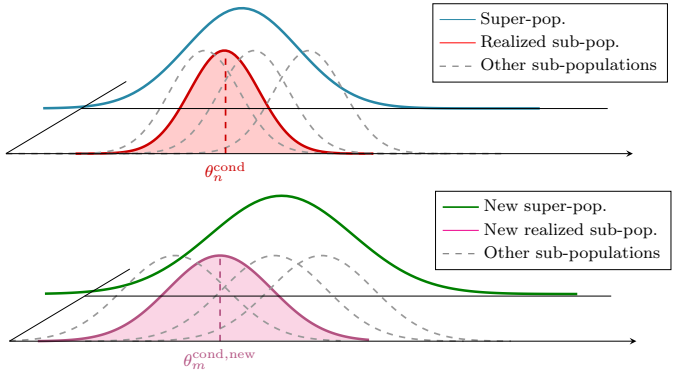

Another setting that is 
newer to the literature is  
where the sub-population of interest  
differs from the sample.  
After collecting data from a batch of patients, 
the healthcare provider may also want  
to predict the effect on 
a set of new patients 
\emph{before} deploying it 
on them, i.e.,   
without observing their responses, 
and the new patients may come from 
another super-population.  
{If the healthcare provider has observed
a few attributes of the new patients,  
then the conditional parameter for the new patients
informs their 
expected behavior  based on the available information.}
In the lower left of Fig.~\ref{fig:intro_visual}, the green curve represents the (shifted) super-population 
the new units are from. 
Newly observed attributes  give rise to
 the purple-shaded sub-population.
The goal is thus to 
transfer the knowledge to the new population, 
i.e., to infer parameters of the new
conditional distribution that potentially shifts. 
We call 
this setting \emph{transductive inference}.

Targeting specific sub-populations leads to 
estimators that differ from 
super-population ones. In Example~\ref{ex:gotv}, we give a sneak peek at a procedure that will be formally developed in Section~\ref{sec:trans_iid}.

\begin{example}[Super- versus targeted sub-population~estimation]
\label{ex:gotv}
Motivated by \cite{arceneaux2006comparing} who studied the effect of get-out-the-vote mails on voter turnout, 
we consider a scenario where a local politician is interested in using such mails in a particular region. 
In this setting, the estimand is the average treatment effect in the sub-population, conditionally on observed attributes. 
We repeatedly generate 
i.i.d.~training data 
and covariates of a small disjoint target population $(n=100)$, 
and 
compute covariate-adjusted estimators for
the average treatment effect of 
the super-population and 
the sub-population.  
We depict the results over $1000$ runs in Figure~\ref{fig:GOTV};
each point stands for 
one of the potential sub-populations in 
Fig.~\ref{fig:intro_visual}.   
\begin{figure}[ht]
\centering
\includegraphics[scale=.45]{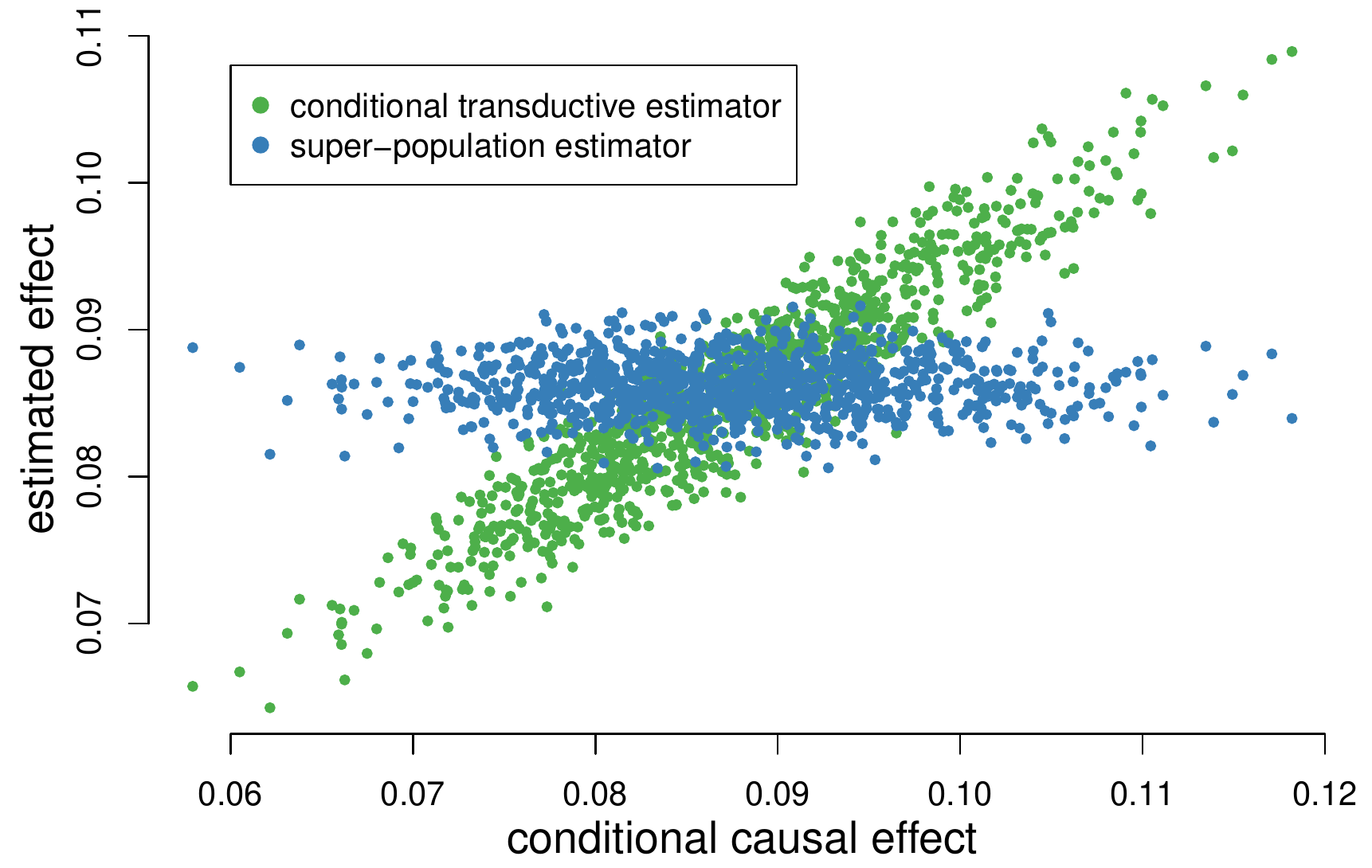}
\caption{Conditional versus super-population estimation in Example~\ref{ex:gotv}.  }\label{fig:GOTV} 
\end{figure}
Our transductive estimator uses covariates to target the sub-population 
and thus achieves much higher accuracy. 
Such targeted information, equipped with 
its reliable coverage guarantee we are to introduce, 
supports decision-making for specific sub-populations.  
\end{example} 
 
Fixing certain attributes  
motivates \emph{conditional} 
inference guarantees. 
A marginally valid confidence interval 
covers the target with a prescribed probability, 
averaged across many draws of the attributes. 
However, 
ideally, confidence intervals should be valid for the 
specific population we are interested in, 
that means, conditional on the attributes. 
In an illustrative example in Section~\ref{sec:super-vs-condit} 
of the supplementary material, 
we find that super-population inference 
lacks conditional validity 
even without transfer. 
In contrast, our inference 
is conditionally valid 
for the specific population of interest. 
For example, 
our method builds a confidence interval around 
\emph{each} conditional transductive estimator 
in Fig.~\ref{fig:GOTV} 
that is valid for the sub-population.

In this paper, 
we provide a framework for statistical 
inference tailored to finite populations 
with fixed attributes.  
%
%
In both {settings introduced above}, compared with 
super-population inference, 
our framework leads to shorter confidence intervals 
and more reliable inference with conditional coverage. 
To our knowledge, this is the first work to deal with both conditional inference and distribution shift in general estimation tasks.
To address all these problems in a single framework, 
throughout the main text, 
we use a sampling-based justification 
by assuming the attributes are i.i.d.~drawn and then 
conditioned on. 

{We provide R-packages for the first setup in 
\texttt{https://github.com/ying531/condinf}  
and for transductive inference in \texttt{https://github.com/ying531/transinf}. The packages
are easy to use, and 
allow transporting generalized linear models to a new (conditional) distribution with (conditionally) valid confidence intervals.} 



 
\section{Inferential targets}\label{sec:cond_para}

\subsection{Conditional parameters}
\label{subsec:cond_para}

We now formally 
define the \emph{conditional parameter} as 
our estimand,   
which characterizes a 
conditional distribution.  
Conditional parameters have a long history in statistics and econometrics; we give an overview of the literature in Section~\ref{sec:related-work}.

Let us begin with a recap on 
classical settings~\citep{Vaart1998,tsiatis2007semiparametric}.  
For a super-population $\mathbb{P}$ 
from which a random variable $D\in \cD$ is drawn, 
an unknown parameter $\theta_0 \in \Theta \subset \RR^p$ 
of dimension $p$ 
is defined  as a solution to 
\begin{equation}\label{eq:est}
   \EE\big\{ s(D,\theta)\big\} = 0 
\end{equation}
for some score function $s \colon \cD \times \Theta \rightarrow  \RR^p$, 
where $E$ denotes the expectation under $\mathbb{P}$.  
Here and in the following, we 
adopt the common assumption in the literature~\citep{Vaart1998} that the solution to equation~\eqref{eq:est} is unique.  
Inference for $\theta_0$ is often  
based on i.i.d.\ data
$\{D_i\}_{i=1}^n$ from $\mathbb{P}$. 
 
In situations where some attributes are fixed, 
as the inferential target  we consider  
a functional of a conditional distribution. 
Following \cite{abadie2014inference,Buja2016,Buja2019}, 
we 
suppose $(D_1,Z_1),\ldots,(D_n,Z_n)$ 
are i.i.d.~from an unknown distribution $\mathbb{P}$, 
where $D_i \in\mathcal{D}$ are the full data, 
and $Z_i \in\mathcal{Z} $ are the 
attributes we condition on.  
Given the attributes $Z_{1:n} = (Z_1,\dots,Z_n)$, 
the data 
$D_{1:n}=(D_1,\dots,D_n)$ are from  
the conditional distribution given $Z_{1:n}$.  
The i.i.d.~assumption on the attributes 
could be relaxed later on; for now, 
we keep 
it for consistency across all scenarios. 
We define the $Z_{1:n}$-conditional parameter 
$\theta_{n}^\cond = \theta_n^\cond(Z_{1:n})$ 
as the solution to
\#\label{eq:def_cond_para}
    \sum_{i=1}^n \EE\big\{  s(D_i,\theta) \biggiven Z_i\big\} = 0,
\#
which is assumed to be unique.   
From a marginal perspective, 
$\theta_n^\cond$ depends on $Z_{1:n}$ 
and is thus random.  
$\theta_n^\cond$ describes properties of $D_{1:n}$ 
given that $Z_{1:n}$  
are fixed at the realized values, 
as opposed to $\theta_0$ that describes the 
unconditional super-population. 
Conditional parameter can also 
be seen as a generalization of  super-population parameters in the sense that $\theta_0 = \theta_n^\cond(\emptyset)$.

\begin{example}[Patient health]
\label{ex:health}
In our motivating example 
of patient health,  
we let $Z$ be the attributes, 
$D=Y\in \RR$ be the measured health condition,
and assume one observes i.i.d.~data 
$(D_i,Z_i)_{i=1}^n$ from  $\mathbb{P}$. 
The super-population parameter is $\theta_0 = \EE(D_i)$, 
the solution to~\eqref{eq:est} 
where $s(D,\theta) = D-\theta$. 
If we view the 
attributes as fixed, the conditional parameter is 
$\theta_n^\cond = \frac{1}{n}\sum_{i=1}^n \EE(D_i\given Z_i)$, 
i.e., the average of conditional means 
for these patients,
marginalizing out other unexplained variations. 
If we view some (perhaps unobserved) 
underlying health condition $Y_{i}^*$
as fixed, then 
$\theta_n^\cond = \frac{1}{n}\sum_{i=1}^n \EE(D_i\given Y_{i}^*)$, 
marginalizing out the measurement error.  
\end{example}

Conditional parameters 
generalize several 
well-studied settings such as 
fixed-$X$ regression and finite-population causal effects, where 
fixing some  attributes 
leads to inference that is closely related to 
the populations at hand. 
{We defer detailed discussion on these examples in 
Section~\ref{app:subsec_causal} in the 
supplementary material, where conditional parameters can differ from unconditional ones.}

\subsection{Related work}\label{sec:related-work}

Several strands of literature have touched on conditional estimation or inference of a similar estimand as ours, usually with different guarantees or motivations from ours.  

\vspace{0.8em}

\noindent\textit{Conditional parameter with random covariates.}
There are several works 
that study the same  conditional parameters 
under similar assumptions yet with different guarantees, 
such as 
\citet{abadie2014inference} 
for conditional parameters for maximum likelihood and method of moments and
\citet{Buja2016,Buja2019} in the context of model misspecification. 
These works argue to treat the covariates as random 
and focus on marginal inference. 
Similar considerations also 
arise in the econometrics literature~\citep{manski1991,angrist1995estimating} that 
different sources
of variation may give different results.
We
substantially generalize  their framework 
by providing conditionally valid inference 
and studying transductive inference 
on new populations. 

\vspace{0.8em}

\noindent\textit{Asymptotics conditional on covariates.} 
More broadly, we connect  to a 
literature of asymptotics conditional on covariates. 
The major difference 
is that we study a new transductive inference 
setting. 
In fixed-design setting 
where the covariates are arbitrarily fixed,  
early works~\citep{white1980heteroskedasticity,goldberger1991course} study inference 
under well-specified models, 
and~\citet{fahrmexr1990maximum,kuchibhotla2018model,abadie2020sampling} study misspecified models.  
Among those for i.i.d.~attributes, 
the closest to ours is~\citet{andrews2019inference}, 
which derives conditionally valid confidence intervals for linear moment models.  
For inferring the population at 
hand,
assuming i.i.d.~attributes is not essential 
but does allow us to  consistently estimate the 
asymptotic variance, which is otherwise impossible~\citep{white1980heteroskedasticity,kuchibhotla2018model}. 
The i.i.d.\ assumption also guarantees that the proposed conditional confidence intervals are shorter than super-population confidence intervals.
In the new transductive inference problem, 
i.i.d.~attributes ensure  a shared structure 
between the two populations. 
Our transductive inference results potentially also generalize to 
fixed design, but for readability, we will state our results under an i.i.d.\ assumption.  

\vspace{0.8em}

\noindent\textit{Finite-population causal inference.}
In causal inference (see Example~\ref{ex:causal} 
in Section~\ref{app:subsec_causal} of the 
supplementary material for references), it is common to condition on potential outcomes and derive bounds for the asymptotic variance of estimators of causal effects. 
It usually does not rely on super-population assumptions, 
which is close to our fixed-design extension {(Remark~\ref{rmk:fix}); 
however, our framework implicitly assumes the treatment indicators
are mutually independent given the covariates, 
while in finite-population causal inference the treatment 
indicators can sometimes be dependent.}  
Furthermore, 
in this literature, 
conditional inference results are usually derived on a case-by-case basis, 
while we study a general class of estimators. 

\vspace{0.8em}

\noindent\textit{Distribution shift and missing data.} 
Our transductive inference is 
connected to a vast literature of inference under covariate shift, an important condition 
for transferring knowledge to new populations. 
In a general spirit, 
our method is similar to 
AIPW estimators~\citep{robins1994estimation}.  
\cite{rotnitzky2012improved} and~\cite{liu2020doubly} 
also study 
inference under unknown covariate shift  
with the doubly-robust property. 
The distinction  is that 
we provide conditional validity 
for conditional parameters instead  
of marginal validity for super-population quantities, 
leading to new targets and different variances. 
The estimands we study are also more general than theirs. 

\vspace{0.8em} 

\noindent\textit{Classical conditional inference.} A classical line of work \citep{hinkley1980likelihood,cox1987parameter} 
draw inference by conditioning on ancillary statistics or estimators of nuisance parameters (see e.g.,~a review in~\cite{CasellaConditional}), stemming from the ideas of~\citet{fisher1935statistical,fisher35inductive}. 
While we share the spirits 
of conducting inference that is closely related to the data at hand~\citep{fisher1937design,ernst2004permutation,edgington2007randomization}, 
compared to these works,  
we specifically condition on some attributes, leading to different parameters, different interpretations, and different inferential guarantees 
that rely on  the asymptotics of  
general semi-parametric and parametric estimators.


\section{Conditional inference}\label{sec:cond_inf}

\subsection{Conditional inference}
\label{subsec:guarantee}
 
Conditional parameters can 
describe a population at hand, or 
a new population with some observed attributes.  
It is desirable that 
confidence intervals are  conditionally valid 
for the specific sub-population we focus on,  
instead of marginalizing 
over all potential sub-populations.
We now formalize  
these two settings
and  
the conditional inference guarantees 
we aim to provide.

\vspace{0.4em}

\noindent\textit{Conditional inference for the population at hand.}  
As discussed earlier, the healthcare provider
might be interested in the health of 
its own patients, 
holding some intrinsic information as fixed 
and 
averaging over other variations. 
When inferring the population at hand,
we observe i.i.d.~data $\{(D_i,Z_i)\}_{i=1}^n$ 
from a super-population $\mathbb{P}$,  
where $Z_{1:n}=\{Z_i\}_{i=1}^n$ 
are the conditioning variables
(e.g., the attributes of the patients that are viewed as fixed), 
and $D_{1:n}=\{D_i\}_{i=1}^n$ are the observations 
(e.g., the observed health conditions). 
The conditional parameter $\theta_n^\cond = \theta_n^\cond(Z_{1:n})$ 
defined in~\eqref{eq:def_cond_para} provides 
a more precise characterization of the current patients
than the super-population 
quantity; the latter instead characterizes 
the overall  
health of a  hypothetical infinite patient base. 
In Section~\ref{sec:cond_inf_single}, 
we construct a confidence interval $\hat{C}(D_{1:n},Z_{1:n})$ 
obeying
\#\label{eq:cond_guarantee}
\PP\big\{\theta_n^\cond\in \hat{C}(D_{1:n},Z_{1:n})\biggiven Z_{1:n}\big\} \rightarrow  1-\alpha
\#
in probability 
as $n\to \infty$. 
Put another way,   
our inference on $\theta_n^\cond$ is 
valid conditional on any realized attributes. 
In our motivating example, 
the conditional guarantee~\eqref{eq:cond_guarantee} 
means the validity 
given the current patients. 
%
We will also extend
conditionally valid inference 
to situations where $Z_{1:n}$ is fixed at any value 
without being i.i.d., and 
where it is more reasonable to condition on 
some unobserved attributes 
$X_{1:n}$.  

\vspace{0.4em}

\noindent\textit{Transductive inference  for 
a new population.}
The healthcare provider might also be interested in 
estimating the health condition of another  subgroup 
of its patients, 
based on measurements of the first subgroup of 
patients. 
We formalize this problem as follows. 

We denote the target data as 
$\{(D_j^\new, Z_j^\new)\}_{j=1}^m$ from 
a super-population $\mathbb{Q}$, 
where $Z_{1:m}^\new=\{Z_j^\new\}_{j=1}^m$ 
are the new attributes we condition on, 
and $D_{1:m}^\new = \{D_j^\new\}_{j=1}^m$ 
are the unobserved data (e.g., 
the health measurements of the target units). 
The source units $\{(D_i,Z_i)\}_{i=1}^n $ 
are i.i.d.~from a super-population $\mathbb{P}$ 
(e.g., the health measurements and attributes 
of the source units). 
For transductive inference, we always impose the 
super-population assumption on the attributes to ensure  
sufficient structure.
The quantity of interest is 
$\theta_m^{\cond,\new} = \theta_m^{\cond}(Z_{1:m}^\new)$ 
as a functional of 
the conditional distribution of $D_{1:m}^\new$ given $Z_{1:m}^\new$. 
In Sections~\ref{sec:trans_iid} and~\ref{sec:known_shift}, 
we construct a confidence interval $\hat{C}(D_{1:n}, {Z}_{1:n},Z_{1:m}^\new)$ 
that obeys 
\$
\PP\big\{\theta_m^{\cond,\new}\in \hat{C}(D_{1:n},Z_{1:n},Z_{1:m}^\new)\biggiven Z_{1:m}^\new,Z_{1:n}\big\} \to  1-\alpha 
\$
in probability 
as $m,n\to \infty$. 
In particular, 
we allow $\mathbb{Q}$ to admit a covariate shift
$w(z) = d{\mathbb{Q}} / d\mathbb{P}(d,z)$ from the fully observed data, 
and the conditional distribution of $D$ given $Z$ 
is invariant.  
When $w(z)$ is unknown and needs to be estimated from data, 
our procedure yields valid inference even if nuisance components are estimated at slow rates.

\subsection{Conditional inference for the population at hand}
\label{sec:cond_inf_single}

Recall the motivating example 
where the healthcare provider is interested in 
a conditional 
parameter that is specific to its current patients.
In this part, we construct 
confidence interval 
with conditional validity. 
Our results imply that 
 inference for super-population quantities 
can be overly conservative, 
since it unnecessarily takes into account 
the variation in the attributes.

As introduced in Section~\ref{subsec:guarantee}, 
we assume access to i.i.d.~data $\{(D_i,Z_i)\}_{i=1}^n$ 
from a super-population $\mathbb{P}$. The conditional parameter is defined 
in equation~\eqref{eq:def_cond_para}. 
For simplicity of illustration, 
we present our theoretical results for $p=1$ 
throughout the rest of the paper, 
while all of them can be generalized to fixed-$p$ settings with 
variances replaced by covariance matrices.

Assume we are given an asymptotically linear estimator $\hat \theta_n = \hat\theta_n(D_{1:n}) \in \RR$, i.e.,
\begin{equation}\label{eq:linear_hat_n}
    \sqrt{n} ( \hat \theta_n - \theta_0 ) = \frac{1}{\sqrt{n}} \sum_{i=1}^n \phi(D_i) + o_{P}(1),
\end{equation}
for some $\phi \in L_2(\mathbb{P})$ with mean zero. 
Many parametric and semi-parametric estimators 
are asymptotically linear in standard asymptotics, 
see, e.g.,~\cite{Vaart1998} or \cite{tsiatis2007semiparametric}. 
Under regularity conditions, the conditional parameter  \eqref{eq:def_cond_para} satisfies
\begin{equation}\label{eq:exp}
    \sqrt{n} ( \theta_n^{\cond} - \theta_0 ) = \frac{1}{\sqrt{n}} \sum_{i=1}^n \EE\big\{\phi(D_i)\biggiven Z_i\big\} + o_P(1).
\end{equation} 
Note that~\eqref{eq:exp} 
implies the conditional parameters converge to the super-population 
parameters when the attributes are i.i.d. Their difference is of order $O_P(n^{-1/2})$, 
which is of the same order as the difference between 
the 
estimator and super-population parameters in~\eqref{eq:linear_hat_n}. 
Although such difference converges to zero, 
it is not negligible 
in standard statistical settings; 
here, it translates to 
 conditional confidence intervals that 
are shorter by a constant factor than marginal ones.

We establish sufficient conditions for~\eqref{eq:linear_hat_n}-\eqref{eq:exp} to hold for i.i.d.~$(D_i,Z_i)$. The proof of Proposition~\ref{prop:linear_exp} is in Section~\ref{app:linear_z} 
of the supplementary material.

\begin{proposition}[Asymptotic linearity of conditional parameters]\label{prop:linear_exp}
Suppose the following conditions hold:  
(i)  $\hat\theta_n$ is the unique solution to $\sum_{i=1}^n s(D_i,\theta)=0$, 
$\theta_0$ is the unique solution to~\eqref{eq:est} 
and $\theta_n^\cond$ is the unique solution to~\eqref{eq:def_cond_para}.   
(ii)  The parameter space $\Theta$ is compact. (iii) In a small neighborhood of $\theta_0$, 
$s(D,\theta)$ and $t(Z,\theta) = \EE\{s(D,\theta)\given Z\}$ are twice differentiable in $\theta$, 
with 
$\dot{s}(D,\theta)= \nabla_\theta s(D,\theta) \in \RR^{p\times p}$ 
the derivative matrix of $s(D,\theta)$ at $\theta$
and $\ddot{s}(D,\theta) = \nabla_\theta \dot{s}(D,\theta)$
the derivative tensor of $\dot{s}(D,\theta)$ at $\theta$. 
Additionally, 
$\dot{t}(Z,\theta) = \nabla_\theta t(Z,\theta) = \EE\{\dot{s}(D,\theta)\given Z\}$
and $\ddot{t}(Z,\theta) = \nabla_\theta \dot{t}(Z,\theta) = \EE\{\ddot{s}(D,\theta)\given Z\}$.
(iv) For each $j,k$, $\|\ddot{s}_{jk}(D,\theta)\|= \|\partial s(D,\theta)/\partial\theta_j \partial \theta_k\| \leq g(D)$ for some  $g$ with $\EE\{|g(D)|\}<\infty$.  
Also, the matrix $\EE\{\dot{s}(D,\theta_0)\}$ is  assumed to be non-singular. 
Then equations~\eqref{eq:linear_hat_n} and~\eqref{eq:exp} hold with influence function 
\#\label{eq:infl_form}
\phi(d) =  - \big[\EE\{\dot{s}(D ,\theta_0)\} \big]^{-1} s(d ,\theta_0),
\#
where all the expectations are induced by the joint distribution of $(D,Z)$. 
\end{proposition}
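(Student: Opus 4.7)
The plan is to reduce both displays~\eqref{eq:linear_hat_n} and~\eqref{eq:exp} to a single Z-estimator argument, run twice: first with the score $s(D,\theta)$ itself, and then with its conditional counterpart $t(Z,\theta) = \EE\{s(D,\theta)\given Z\}$. The key enabling observation is that, under (iii)--(iv), the envelope $g(D)$ has $\EE\{g(D)\given Z\} < \infty$ almost surely, so dominated convergence lets derivatives pass through the conditional expectation and yields $\dot t(Z,\theta) = \EE\{\dot s(D,\theta)\given Z\}$ and $\ddot t(Z,\theta) = \EE\{\ddot s(D,\theta)\given Z\}$ with an $L_1$ envelope of their own. Iterated expectations then give $\EE\{t(Z,\theta_0)\} = \EE\{s(D,\theta_0)\} = 0$ and $\EE\{\dot t(Z,\theta_0)\} = \EE\{\dot s(D,\theta_0)\}$, so the two estimating problems share the same limiting target $\theta_0$ and the same ``bread'' matrix.

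The first step is consistency: $\hat\theta_n \to \theta_0$ and $\theta_n^\cond \to \theta_0$ in probability. Compactness in (ii), continuity in (iii), and the envelope in (iv) supply a uniform law of large numbers for both $\frac{1}{n}\sum_i s(D_i,\theta)$ and $\frac{1}{n}\sum_i t(Z_i,\theta)$ over $\theta \in \Theta$. Combined with the uniqueness of the zero of $\theta \mapsto \EE\{s(D,\theta)\}$ assumed in (i), a standard argzero argument forces both sample solutions to concentrate at $\theta_0$.

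The second step is a first-order Taylor expansion of each estimating equation around $\theta_0$. For $\hat\theta_n$ I write $0 = \frac{1}{n}\sum_i s(D_i,\theta_0) + \bigl[\frac{1}{n}\sum_i \dot s(D_i,\theta_0)\bigr](\hat\theta_n - \theta_0) + R_n$, and bound $\|R_n\| \le \tfrac{1}{2}\|\hat\theta_n - \theta_0\|^2 \cdot \frac{1}{n}\sum_i g(D_i)$ via the second-order mean-value form and (iv); consistency together with the LLN applied to $g(D_i)$ makes this $o_P(n^{-1/2})$. The LLN for $\dot s(D_i,\theta_0)$ gives $\frac{1}{n}\sum_i \dot s(D_i,\theta_0) \to \EE\{\dot s(D,\theta_0)\}$, which is invertible by hypothesis, so inverting produces~\eqref{eq:linear_hat_n} with influence function~\eqref{eq:infl_form}. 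Running the identical argument with $(t,Z)$ replacing $(s,D)$ yields $\sqrt n(\theta_n^\cond - \theta_0) = -[\EE\{\dot s(D,\theta_0)\}]^{-1} \tfrac{1}{\sqrt n}\sum_i t(Z_i,\theta_0) + o_P(1)$, which is exactly~\eqref{eq:exp} since $\EE\{\phi(D_i)\given Z_i\} = -[\EE\{\dot s(D,\theta_0)\}]^{-1} t(Z_i,\theta_0)$ by linearity of conditional expectation.

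The main obstacle I anticipate is not the computation but the bookkeeping around the second application to $t$: one has to justify interchanging derivatives and conditional expectations in a neighborhood of $\theta_0$, transfer the envelope in (iv) to $t$ uniformly in $\theta$, and verify that the mean-value intermediate points (which differ across coordinates when $p>1$) do not disturb the remainder bound. All of these are handled cleanly by the envelope condition and compactness; beyond these checks, the proof is textbook Z-estimator theory applied in parallel to $s$ and $t$.
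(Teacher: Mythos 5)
Your proposal is correct and follows essentially the same route as the paper: consistency of both solutions via a compactness/uniqueness argzero argument, then a second-order Taylor expansion of each estimating equation at $\theta_0$ with the remainder controlled by the envelope $g(D)$ (transferred to $t$ via Jensen and the tower property), and inversion of the limiting Jacobian $\EE\{\dot s(D,\theta_0)\}$. The only bookkeeping note is that the quadratic remainder should be grouped with the Jacobian as an $o_P(1)$ perturbation of the coefficient matrix multiplying $\sqrt{n}(\hat\theta_n-\theta_0)$ (as the paper does), rather than asserted to be $o_P(n^{-1/2})$ from consistency alone, since the latter presupposes $\sqrt{n}$-consistency.
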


The conditions in Proposition~\ref{prop:linear_exp} 
resemble the well-established results for Z-estimators~\citep{Vaart1998}, 
and has been  informally stated in~\cite{Buja2016}. 
For the convenience of reference later, 
we impose the linear expansion as an assumption.

\begin{assumption} 
$\hat\theta_n$ and $\theta_n^\cond$ obey equations \eqref{eq:linear_hat_n} and \eqref{eq:exp}, respectively.
\label{assump:linear_main}
\end{assumption}

\begin{assumption} 
The influence function $\phi(\cdot)$ defined in~\eqref{eq:infl_form} satisfies $\EE\{\phi(D)^4\}<\infty$. 
\label{assump:moment_main}
\end{assumption}

Let $[a,b]$ denote 
the closed interval with endpoints $a,b\in \RR$, $a<b$. 
Theorem~\ref{thm:cond_intv}  constructs conditionally valid confidence intervals 
for conditional parameters, whose proof 
is deferred to Section \ref{app:cond_inf_single} 
in the supplementary material. 
 
\begin{theorem}[Asymptotic conditional validity]
Suppose Assumptions \ref{assump:linear_main} and \ref{assump:moment_main} hold. If 
    an estimator $\hat \sigma$ converges in probability to $\sigma>0$, 
    where 
\#\label{eq:def_asymp_var}
\sigma^2 = \EE\Big(\big[\phi(D) - \EE\{\phi(D)\given Z\}\big]^2\Big), 
\# 
then for any $\alpha\in (0,1)$, it holds that 
    the conditional coverage 
    \begin{equation}
        \PP \Big( \theta_n^{\cond} \in \big[\,\hat \theta_n -  z_{1-\alpha/2}  \hat \sigma/\sqrt{n},~ \hat \theta_n + z_{1-\alpha/2}  \hat \sigma/\sqrt{n}\,\big] \Biggiven Z_{1:n}\Big), 
    \end{equation}
    as a random variable measurable with respect to $Z_{1:n}$, 
    converges in probability to $1-\alpha$ as $n\to \infty$, where $z_{1-\alpha/2}$ is the $(1-\alpha/2)$ quantile of  standard Gaussian distribution. 
    \label{thm:cond_intv}
    \end{theorem}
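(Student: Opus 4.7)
My strategy is to combine the two asymptotic expansions, identify that the resulting difference is a conditionally centered sum, and then invoke a conditional Lindeberg--Feller CLT together with Slutsky.

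\textbf{Step 1: A conditionally centered expansion.} Subtracting equation~\eqref{eq:exp} from \eqref{eq:linear_hat_n} under Assumption~\ref{assump:linear_main} gives
\[
\sqrt{n}\,(\hat\theta_n - \theta_n^{\cond}) \;=\; \frac{1}{\sqrt{n}}\sum_{i=1}^n U_i + o_P(1),\qquad U_i := \phi(D_i) - \EE\{\phi(D_i)\given Z_i\}.
\]
Given $Z_{1:n}$, the summands $U_i$ are independent (from i.i.d.\ $(D_i,Z_i)$) with $\EE(U_i\given Z_i)=0$ and conditional variance $v(Z_i):=\mathrm{Var}(\phi(D_i)\given Z_i)$. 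Note $\EE\{v(Z)\}=\sigma^2$ by the tower property and the definition in \eqref{eq:def_asymp_var}, and Assumption~\ref{assump:moment_main} gives $\EE\{v(Z)^2\}<\infty$ by Jensen.

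\textbf{Step 2: Conditional CLT in probability.} I claim that, in probability over $Z_{1:n}$,
\[
\sup_{t\in\mathbb{R}}\left| \PP\Big(\tfrac{1}{\sqrt n}\sum_{i=1}^n U_i \le t\sigma \,\Big|\, Z_{1:n}\Big) - \Phi(t)\right| \longrightarrow 0.
\]
To see this, I verify the two Lindeberg--Feller premises conditionally. First, $\tfrac{1}{n}\sum_{i=1}^n v(Z_i)\to \sigma^2$ in probability by the weak law of large numbers for the i.i.d.\ summands $v(Z_i)$. Second, the Lindeberg array condition: for every $\varepsilon>0$,
\[
L_n(\varepsilon) := \tfrac{1}{n}\sum_{i=1}^n \EE\Big\{U_i^2\,\mathbf{1}\!\{|U_i|>\varepsilon\sqrt{n}\}\Biggiven Z_i\Big\} \longrightarrow 0 \text{ in probability}.
\]
Taking unconditional expectation, $\EE\{L_n(\varepsilon)\} = \EE\{U_1^2\mathbf{1}(|U_1|>\varepsilon\sqrt n)\}\le \varepsilon^{-2} n^{-1}\EE\{U_1^4\}\to 0$, using Assumption~\ref{assump:moment_main} and the fact that $U_i$ is a contraction of $\phi(D_i)$, so $\EE\{U_1^4\}<\infty$. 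Markov's inequality then yields convergence in probability. To pass from these two ``in probability'' statements to the claimed conditional CLT in probability, I will use the standard subsequence trick: any subsequence of $n$ admits a further subsequence along which both conditions hold almost surely in $Z_{1:\infty}$, and along such a sub-subsequence the classical Lindeberg--Feller theorem applied to the independent conditional triangular array yields Gaussian convergence of the conditional law; hence the original sequence of conditional distribution functions converges to $\Phi$ in probability (uniformly in $t$ by P\'olya's theorem since $\Phi$ is continuous).

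\textbf{Step 3: Slutsky and conclusion.} The $o_P(1)$ remainder in Step 1 is also $o_P(1)$ conditionally on $Z_{1:n}$ in probability (again via a subsequence passage). Combined with $\hat\sigma\to_P \sigma>0$, a conditional Slutsky argument (justified by the same subsequence device) gives
\[
\sup_{t\in\mathbb{R}}\left| \PP\Big(\sqrt{n}(\hat\theta_n - \theta_n^{\cond})/\hat\sigma \le t \,\Big|\, Z_{1:n}\Big) - \Phi(t)\right| \xrightarrow{P} 0.
\]
Specializing to $t = \pm z_{1-\alpha/2}$ and differencing yields the stated conditional coverage in probability.

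\textbf{Main obstacle.} The real work is the promotion of unconditional convergence in probability (of the Lindeberg sum and of the conditional variance) to a genuine \emph{conditional} CLT in probability. The subsequence-along-a-subsequence argument paired with P\'olya's uniform convergence theorem is what bridges this gap cleanly; getting the conditional version of Slutsky to cooperate with the $o_P(1)$ remainder in the same mode of convergence is the other place where I need to be careful, but the same subsequence device handles it uniformly.
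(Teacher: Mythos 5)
Your proposal is correct, and its overall architecture --- a linear expansion reducing $\sqrt{n}(\hat\theta_n-\theta_n^\cond)$ to a conditionally centered i.i.d.\ sum, a conditional CLT for that sum, then absorption of the $o_P(1)$ remainder and of $\hat\sigma$ --- matches the paper's. The one genuinely different ingredient is how the conditional CLT is obtained. The paper (Lemma~\ref{lem:cond_clt}) shows that the conditional characteristic function $\prod_i \EE\{e^{itU_i/\sqrt{n}}\given Z_i\}$ converges \emph{almost surely} to $e^{-t^2\sigma^2/2}$, using the strong law of large numbers for $\EE(U_i^2\given Z_i)$ and $\EE(|U_i|^3\given Z_i)$ (this is where Assumption~\ref{assump:moment_main} enters), so the conditional law converges a.s.; you instead verify the Lindeberg--Feller premises only \emph{in probability} (weak LLN for $v(Z_i)$, Markov's inequality for the Lindeberg sum) and promote this to a conditional CLT in probability via the subsequence-of-a-subsequence device plus P\'olya's theorem. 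Both routes work; yours is marginally more economical on moments (the Lindeberg sum vanishes under just $\EE\{\phi(D)^2\}<\infty$ by dominated convergence, whereas the paper's SLLN-based characteristic-function argument genuinely uses the higher moments), at the cost of one care point you should make explicit: along the chosen sub-subsequence the Lindeberg condition must hold a.s.\ for \emph{all} $\varepsilon>0$ simultaneously, which you get by diagonalizing over $\varepsilon=1/k$ and using monotonicity of $L_n(\varepsilon)$ in $\varepsilon$. For the remainder $d_n=o_P(1)$ and for $\hat\sigma$, the cleanest bridge is not a subsequence passage but the direct Markov bound $\EE\{\PP(|d_n|>\varepsilon\given Z_{1:n})\}=\PP(|d_n|>\varepsilon)\to 0$ --- exactly the paper's Lemma~\ref{lem:op_to_op} --- after which the sandwich between $\PP(\cdot\le x-\varepsilon\given Z_{1:n})-\PP(|d_n|>\varepsilon\given Z_{1:n})$ and $\PP(\cdot\le x+\varepsilon\given Z_{1:n})+\PP(|d_n|>\varepsilon\given Z_{1:n})$ finishes the argument as in the paper.
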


The asymptotic conditional validity 
relies on 
the convergence of the conditional distribution 
of $\sqrt{n}(\hat\theta_n - \theta_n^\cond)$, 
derived from a conditional central limit theorem~\citep{dedecker2003conditional,grzenda2008conditional}; we include Lemma \ref{lem:cond_clt} in 
Section~\ref{app:subsec:cond_law} of 
the supplement for completeness.
As a clarification note, 
the conditional coverage 
converges in probability (with respect to 
the attributes) to the nominal level  
instead of uniformly over all possible values.  

It remains to construct a consistent estimator 
$\hat\sigma^2$ for the asymptotic variance~\eqref{eq:def_asymp_var}. 
In Section~\ref{sec:algo} of the supplementary
material, we describe 
a detailed stand-alone estimation procedure (c.f.~Algorithm~\ref{alg:sigma_est}) 
with consistency guarantees, 
relying on the formula~\eqref{eq:infl_form} 
and nonparametric regression for
$\varphi(Z)=\EE\{\phi(D)\given Z\}$. 
\cite{abadie2014inference} 
propose a matching-based algorithm
to estimate
the same asymptotic variance, 
whose proof relies on assuming compactness of $\cZ$ 
and smoothness of $\varphi(\cdot)$. 
In contrast, we prove that our estimator is consistent  
under generic consistency 
conditions on nonparametric regression. 
This relaxes the technical assumptions 
and overcomes the computational difficulty 
of matching in practice. 


Many results in the literature are close 
to Theorem~\ref{thm:cond_intv}, yet   
all providing marginal coverage guarantees~\citep{abadie2014inference},   
which
can be insufficient for reliable inference 
for a specific population (see Section~\ref{sec:super-vs-condit}). 
For fixed-design OLS,~\cite{kuchibhotla2018model} shows 
it is impossible to estimate the asymptotic variance without assumptions; instead, our sampling justification 
allows for consistent estimation of the variance 
and leads to a feasible conditional inference recipe.

\begin{remark}
Super-population inference carries out a similar protocol  
with an estimator of the (unconditional) asymptotic variance, 
usually of the form $\sigma_0^2 := \Var\{\phi(D)\}$. 
The variance 
for conditional inference is always no greater, as
$
\sigma^2  = \Var\{\phi(D)\} - \Var[ \EE\{\phi(D)\given Z\}]
\leq \Var\{\phi(D)\}.
$
Taking the OLS example,~\eqref{eq:linear_hat_n} and~\eqref{eq:exp} 
hold with  
$
\phi(D)= \{\EE (XX^\top ) \}^{-1} X (Y-X^\top \theta_0 ).
$  
If the linear model $Y=X^\top \theta_0 + \epsilon$ is well-specified, 
i.e., $\EE(\epsilon\given X) = 0$ a.s.,  
we have $\sigma^2=\sigma_0^2$ 
when $Z$ is contained in $X$.
With a mis-specified linear model,  
if $\EE(X\epsilon \given Z)$ is not a.s.~zero,  
our confidence interval is shorter 
than that for super-population inference. 
\end{remark}

We finally note two generalizations 
of the current framework of 
conditional inference. 
 
\begin{remark}[Non-i.i.d.~attributes]
\label{rmk:fix}
Conditional inference generalizes 
to fixed attributes $\{z_i\}_{i=1}^n$
without an i.i.d.~structure. 
In Section~\ref{app:fix} of 
the supplementary material, we 
provide a set of results that are 
parallel to this part, 
without any probabilistic assumption  
on the attributes. 
In that case, the asymptotic variance 
$\sigma_n^2$ for conditional inference
depends on $\{z_i\}_{i=1}^n$, 
whose estimation requires certain assumptions. 
For i.i.d.~attributes, one could also use $\sigma_n^2$ 
instead of $\sigma^2$  
for covariate-dependent uncertainty 
quantification  
while maintaining conditional validity; however,  
the difference is 
negligible. 
We discuss these issues in detail in the supplementary material. 
\end{remark}

\begin{remark}[Conditioning on unobserved variables]
In our motivating example, 
the practitioner may instead characterize the 
patients by fixing hidden intrinsic health $Y^*$. 
While this variable is unobserved, 
we could still conduct $\{Y_i^*\}_{i=1}^n$-conditionally 
valid inference 
with observed attributes $\{Z_i\}_{i=1}^n$. 
The resulting confidence intervals are 
shorter than super-population inference, 
yet perhaps unavoidably conservative for 
conditioning on unobserved variables. 
We provide formal results and detailed discussion  
in Section~\ref{app:cond_unobs} 
of the supplementary material. 
\end{remark}

\subsection{Transductive inference across data sets from the same super-population}  
\label{sec:trans_iid}

Prepared with the above conditional 
inference techniques, we now study 
transductive inference. 
This tackles situations 
where  
a healthcare provider has deployed 
a novel treatment plan for a subset of its patients, 
and would like to infer the effect 
on the remaining ones. 
To fix ideas, we first discuss the setting where 
the units in both populations 
are drawn from the same super-population. 
The case with different super-populations is discussed in Section~\ref{sec:known_shift}.

With access to i.i.d.~observations 
$\{(D_i,Z_i)\}_{i=1}^n \sim \mathbb{P}$, the 
i.i.d.~new units 
$\{(D_j^\new,Z_j^\new)\}_{j=1}^m\sim \mathbb{P}$ are
from the same distribution, 
where only $Z_{1:m}^\new:=\{ Z_j^\new \}_{j=1}^m$ 
are observed. 
We are interested in the finite population, 
which is from a conditional distribution 
given $Z_{1:m}^\new$.  
The 
new conditional parameter $\theta_m^{\cond,\new} := \theta_m^\cond(Z_{1:m}^\new)$ is 
the unique solution to
$
     \sum_{j=1}^m \mathbb{E} \{s( D_j^\new ,\theta)\given Z_j^\new \} = 0.
$

We assume  
an estimator $\hat\theta_n$ satisfies~\eqref{eq:linear_hat_n} 
(e.g., a Z-estimator given in Proposition~\ref{prop:linear_exp}), 
and similar to~\eqref{eq:exp},
$\theta_m^{\cond,\new }$ satisfies the asymptotic linearity 
$
\sqrt{m} (\theta_m^{\cond,\new }-\theta_0 )
= \frac{1}{\sqrt{m}}\sum_{j=1}^m \varphi(Z_j^\new)+o_P(1)
$
for $\varphi(\cdot):=\EE \{\phi(D_i)\given Z_i=\cdot\}$.  
We use $\hat \theta_n$ as a starting point and add a correction term to account for the fact that we target $\theta_{m}^{\text{cond},\text{new}}$. 
Specifically, we define 
\#\label{eq:hat_trans_iid}
\hat\theta_{m,n}^{\trans} = \hat\theta_n - \frac{1}{n}\sum_{i=1}^n \hat\varphi(Z_i) + \frac{1}{m}\sum_{j=1}^m \hat\varphi(Z_j^\new),
\#
where with a slight abuse of notation, we let 
$\hat\varphi(\cdot)$ 
be an estimator for 
$\varphi(\cdot)$ 
obtained  
from cross-fitting~\citep{chernozhukvo2018debiased}:  
we first randomly split $\cI=\{1,\dots,n\}$ 
into two equal-sized folds  $\cI_1$ and $\cI_2$, 
then use $\{(D_i,Z_i)\}_{i\in\cI_k}$ to obtain an estimator $\hat\varphi^{(k)}$ for $\varphi(\cdot)$ for each $k=1,2$ 
(a special case of Algorithm~\ref{alg:eta_est} in
the supplementary material 
by taking  weight $w(z)\equiv 1$ 
provides a detailed algorithm for estimating $\varphi$)
and then define
$\hat\varphi(Z_i)=\hat\varphi^{(k)}(Z_i)$ for $i\notin \cI_k$, 
and $\hat\varphi(Z_{j}^\new) = 
\{\hat\varphi^{(1)}(Z_j^\new)+ \hat\varphi^{(2)}(Z_j^\new)\}/2$ 
for all $j$. 

To gain some more intuition on the bias correction term, 
note 
the asymptotic expansion 
\$
\hat\theta_n - \theta_m^{\cond,\new} 
= \underbrace{\textstyle \frac{1}{n}\sum_{i=1}^n \phi(D_i)}_{\text{biased conditional on $Z_{1:n}$}} - 
\underbrace{\textstyle \frac{1}{m}\sum_{j=1}^m \varphi(Z_j^\new)}_{\text{  biased conditional on $Z_{1:m}^\new$}} + o_P(1/\sqrt{n}+1/\sqrt{m}). 
\$  
Conditional on $Z_{1:n}$ and $Z_{1:m}^\new$, 
the conditional mean of the first term is 
$\frac{1}{n}\sum_{i=1}^n\varphi(Z_i)$, 
and that of the second is 
$\frac{1}{m}\sum_{j=1}^m \varphi(Z_j^\new)$. 
These could be viewed 
as the \emph{conditional bias} of $\hat\theta_n$ for $\theta_m^{\cond,\new}$, and motivates 
our correction term in~\eqref{eq:hat_trans_iid}. 
Intuitively, 
correcting for this conditional bias
ensures that the resulting $\hat\theta_{m,n}^{\trans}$ 
centers around $\theta_m^{\cond,\new}$ 
conditional on $Z_{1:m}^\new$ and $Z_{1:n}$.

The following result shows the asymptotic 
conditional validity of confidence intervals based on 
our bias-corrected estimator. Its proof is in 
Section~\ref{app:thm_iid_simple} 
of the supplementary material. 

\begin{theorem}\label{thm:iid_simple}
Suppose $\hat\theta_n$ satisfies~\eqref{eq:linear_hat_n}, 
$\sqrt{m}\big(\theta_m^{\cond,\new }-\theta_0\big)
= \frac{1}{\sqrt{m}}\sum_{j=1}^m \varphi(Z_j^\new)+o_P(1)$, 
and Assumption~\ref{assump:moment_main} holds.
Assume an estimator $\hat\sigma^2$ converges in probability to 
$\sigma^2$ in~\eqref{eq:def_asymp_var}, 
and  
$
 \max_{k=1,2}\|\hat\varphi^{(k)}(\cdot)-\varphi(\cdot)\|_{L_2(\mathbb{P})}$ 
converges in probability to $0$. 
Let $\hat\theta_{m,n}^{\trans}$ 
be defined in~\eqref{eq:hat_trans_iid}.
Then 
\$
\PP\Big( \theta_m^{\cond,\new} \in \big[  
\hat\theta_{m,n}^{\trans} - z_{1-\alpha/2}\hat\sigma/\sqrt{n},~
\hat\theta_{m,n}^{\trans} + z_{1-\alpha/2}\hat\sigma/\sqrt{n}
\big] \Biggiven Z_{1:m}^\new, Z_{1:n} \Big)
\$
converges in probability to $1-\alpha$ as $n\to \infty$.
\end{theorem}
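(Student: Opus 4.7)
My plan is to reduce the theorem to showing that the studentized quantity $\sqrt{n}(\hat\theta_{m,n}^{\trans} - \theta_m^{\cond,\new})/\hat\sigma$ converges in distribution to $N(0,1)$ conditionally on $(Z_{1:n}, Z_{1:m}^\new)$, and then invoke Slutsky together with the assumed consistency $\hat\sigma \to \sigma$. The backbone is the same conditional CLT (Lemma~\ref{lem:cond_clt}) used in the proof of Theorem~\ref{thm:cond_intv}; the new work is to control the cross-fit bias correction.

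Concretely, I would start by combining the two asymptotic linearity assumptions to write
\[
\sqrt{n}(\hat\theta_{m,n}^{\trans} - \theta_m^{\cond,\new}) = \frac{1}{\sqrt{n}}\sum_{i=1}^n\bigl\{\phi(D_i)-\varphi(Z_i)\bigr\} + R_n + o_P(1),
\]
where the remainder term is
\[
R_n = \frac{1}{\sqrt{n}}\sum_{i=1}^n\bigl\{\varphi(Z_i)-\hat\varphi(Z_i)\bigr\} + \frac{\sqrt{n}}{m}\sum_{j=1}^m\bigl\{\hat\varphi(Z_j^\new)-\varphi(Z_j^\new)\bigr\}.
\]
The leading term is a sum of independent random variables given $Z_{1:n}$ with mean zero (because $\varphi(Z)=\EE\{\phi(D)\given Z\}$) and per-observation conditional variance whose average converges in probability to $\sigma^2$ by the law of large numbers and Assumption~\ref{assump:moment_main}. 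A Lindeberg-style condition follows from the existence of fourth moments, so the conditional CLT delivers $\frac{1}{\sqrt{n}}\sum_i\{\phi(D_i)-\varphi(Z_i)\} \Rightarrow N(0,\sigma^2)$ given $Z_{1:n}$.

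The main obstacle is showing $R_n = o_P(1)$; this is where cross-fitting and the i.i.d.\ assumption on the attributes become essential. Writing $h^{(k)} = \hat\varphi^{(k)} - \varphi$, split $R_n$ into the fold-$k$ contributions
\[
R_n^{(k)} = -\frac{1}{\sqrt{n}}\sum_{i\in\cI_{3-k}} h^{(k)}(Z_i) + \frac{\sqrt{n}}{2m}\sum_{j=1}^m h^{(k)}(Z_j^\new),\quad k=1,2.
\]
Condition on the fold used to train $\hat\varphi^{(k)}$, so that $h^{(k)}$ is deterministic and the $Z_i$ ($i\in\cI_{3-k}$) and $Z_j^\new$ are i.i.d.\ from the common marginal $P_Z$. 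With equal-sized folds $|\cI_{3-k}|=n/2$, the coefficients in front of $\EE\{h^{(k)}(Z)\given\hat\varphi^{(k)}\}$ cancel exactly (this is the crucial bookkeeping step), so the conditional mean of $R_n^{(k)}$ vanishes. The conditional variance is then bounded by $(\tfrac12 + \tfrac{n}{4m})\|h^{(k)}\|_{L_2(P_Z)}^2$, which is $o_P(1)$ under the $L_2$-consistency assumption on $\hat\varphi^{(k)}$, provided $n/m$ stays bounded (the natural transductive regime). Chebyshev then yields $R_n^{(k)} = o_P(1)$ unconditionally, hence $R_n = o_P(1)$.

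The hardest piece is this last step: extracting $o_P(1)$ from only an $L_2$-rate on the nuisance estimator. The gain comes entirely from the two identical marginals of $Z$ and $Z^\new$, which turns what would otherwise be an uncontrolled bias term into a mean-zero quantity with variance of order $\|h^{(k)}\|_{L_2}^2$; without cross-fitting or without the equal-fold balance, a product-rate condition would be needed instead. Once $R_n = o_P(1)$ is in hand, Slutsky combined with the conditional CLT and $\hat\sigma\to\sigma$ gives conditional asymptotic normality of the pivot and hence the claimed coverage.
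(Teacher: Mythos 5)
Your proposal is correct and is essentially the argument the paper uses: the paper formally deduces Theorem~\ref{thm:iid_simple} as the $w\equiv 1$ special case of Theorem~\ref{thm:transfer}, whose proof performs exactly your decomposition --- a leading term $\frac{1}{\sqrt n}\sum_i\{\phi(D_i)-\varphi(Z_i)\}$ handled by the conditional CLT (Lemma~\ref{lem:cond_clt}), plus fold-wise remainders that are centered by adding and subtracting $\EE\{\hat\varphi^{(k)}(Z)\given \cI_k\}$ (your exact cancellation of the $\sqrt n/2$ coefficients, which in the paper appears as the identity between the $\mathbb{P}$- and $\mathbb{Q}$-expectations of the fitted nuisance) and then killed via a conditional variance bound of order $\|\hat\varphi^{(k)}-\varphi\|_{L_2}^2$ together with Lemma~\ref{lem:cond_to_op}/\ref{lem:op_to_op}. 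The only caveat, which you correctly flag, is that the argument needs $n/m$ bounded (i.e., $m\geq \epsilon n$), a condition the paper imposes in Theorem~\ref{thm:transfer} though it is left implicit in the statement of Theorem~\ref{thm:iid_simple}.
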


From a practical perspective, 
the above theorem enables targeted inference of sub-population parameters in settings where both groups follow the same distribution. For example, a company can run an experiment on 
a representative subset of the users, and then generalize the results to 
the other users based on 
their covariate information.

Let us discuss the mathematical consequences of this theorem.
First,
the length of the confidence interval 
is asymptotically the same as conditional inference, 
without any additional uncertainty from the new population. Thus, roughly speaking, in this setting, 
we do not pay any price for the parameter transfer in terms of asymptotic variance. 
Furthermore, 
such inference guarantee does not require any 
convergence rate of $\hat\varphi(\cdot)$; 
this is because the attributes in the two groups follow the same 
distribution, hence the cross-fitted $\hat\varphi(\cdot)$ 
is able to accurately cancel  out the 
variation in attributes. 
Similar ideas 
apply to settings with 
covariate shifts we study next.

\subsection{Transductive inference across distributions}\label{sec:known_shift}
 
In transductive inference, the 
first batch of patients and the target population 
might follow different distributions. 
In the following, we show that 
when the two distributions only differ in 
the covariate distribution, 
one could still  
reliably infer parameters for the new population. 
From now on, we assume the new i.i.d.~data
$\{(D_j^\new,Z_j^\new)\}_{j=1}^m\sim \mathbb{Q}$ with a  perhaps unknown 
covariate shift 
$
w(z)=  d\mathbb{Q}/ d \mathbb{P}(d,z) 
$, 
and assume 
$w(z)<\infty$ for $\mathbb{P}$-almost all $z$ 
to ensure transferability. 
The 
identical distribution setting 
is a special case with $w(z)\equiv 1$.

\begin{remark}
Covariate shift 
is a popular setting in 
machine learning~\citep{quinonero2008dataset} 
and 
social sciences~\citep{tipton2014sample,egami2021covariate}. 
In our context, it 
ensures identifiability 
of the new conditional parameter.   
It holds when the two populations are selected 
only based on the attributes, 
similar to the unconfoundedness assumption
in causal inference~\citep{Imbens2015}. 
In the context of Example~\ref{ex:health}, this assumption implies that for two patients from the two different distributions who have 
the same observed attributes,  the conditional distributions of 
their health outcomes must be identical. For example, if the observed attributes are age and gender, then men aged 75 must have the exact same distribution of health outcomes in the two populations. In addition, $w(z)<\infty$ 
resembles the overlap condition in causal inference, which 
rules out any sample space 
that is never observed under $\mathbb{P}$. 
\end{remark}

Recall that 
the new conditional parameter 
$\theta_m^{\cond,\new}$ is the (unique) solution to  
\#\label{eq:eq_new_cond}
     \sum_{j=1}^m \EE\big\{ s( D_j^\new ,\theta)\biggiven Z_j^\new\big\} = 0, 
\#
with the conditional expectation  induced by $\mathbb{Q}$. 
Given that  $\mathbb{P}_{D\given Z}=\mathbb{Q}_{D\given Z}$ 
are invariant, one might consider 
solving~\eqref{eq:eq_new_cond} for $\theta_{m}^{\cond,\new}$ 
by estimating $\EE\{s(D,\theta)\given Z=\cdot\}$ for every $\theta$. 
However, estimating infinitely many conditional expectations 
might be infeasible in general or lead to slow convergence rates that hinders statistical inference; 
we will briefly discuss other potential approaches in Remark~\ref{rm:est_eq}. 
Now, we describe a procedure that  estimates the new conditional parameter with $\sqrt{n}$-convergence rate even when the distribution shift is unknown. 

At a high level, our approach relies on 
the fact that  $\theta_m^{\cond,\new}$ 
is close to $\theta_0^\new$, the new super-population 
parameter, which is defined as 
the unique solution to 
\#\label{eq:theta_dag}
\EE\big\{ w(Z ) s(D ,\theta)\big\} =\EE_Q\big\{s(D^\new, \theta)\big\} = 0,
\#
with expectations over $(D,Z)\sim \mathbb{P}$ 
and $D^\new \sim \mathbb{Q}$, respectively. 
We will use the asymptotic linearity 
of $\theta_{m}^{\cond,\new}$
to 
correct for conditional bias and 
conduct conditionally valid inference.

 Let $\hat w (\cdot)$ be an estimator of $w(\cdot)$; 
 if $w$ is known, one can simply set $\hat w = w$. 
 We assume $\hat{w}$ is obtained from another independent set of data. 
 Alternatively, one could use cross-fitting~\citep{chernozhukvo2018debiased} 
 to yield
 the same guarantees under similar conditions 
 only using the data at hand. 
 However, since this increases the complexity of notation and exposition,  
 we defer the details to Appendix~\ref{app:subsec_cross_fitting_known}
 when $w(\cdot)$ is known, and Appendix~\ref{app:subsec_cross_fitting_est} 
 when $w(\cdot)$ is estimated.   

 To account for the covariate shift, 
we begin with a reweighted estimator 
 $\hat\theta_n^\trans$ 
 that is close to $\theta_0^\new$, 
 defined as the unique solution to 
\#\label{eq:hat_theta_dag}
\sum_{i=1}^n \hat{w}(Z_i) s(D_i,\theta) = 0.
\# 
It can be shown that $\hat\theta_n^\trans$ and $\theta_m^{\cond,\new}$ are asymptotically linear 
around $\theta_0^\new$ 
under mild assumptions. For simplicity of exposition, we will state these as assumptions; in Section~\ref{app:subsec_main_linear_trans}
of the supplementary material, 
we provide justifications  
under sup-norm consistency of $\hat{w}$ 
and mild regularity conditions 
that are 
similar to those in Proposition~\ref{prop:linear_exp}. 

\begin{assumption}
\label{assump:linear_expansion_known_shift}
As $m,n\to \infty$, 
$\sup_z|\hat{w}(z)-w(z)|\to 0$ in probability, and  
\# 
\sqrt{n}(\hat\theta_n^\trans - \theta_0^\new) &= \frac{1}{\sqrt{n}} \sum_{i=1}^n \psi(D_i)\hat w(Z_i) + o_P(1), \label{eq:new_hat_lin_exp}\\
\sqrt{m}(\theta_m^{\cond,\new} - \theta_0^\new )&= \frac{1}{\sqrt{m}}\sum_{j=1}^m \eta(Z_j^\new)
 + o_P(1),\label{eq:new_cond_lin_exp}
\# 
where 
$ 
\psi(d) = - \big(\EE_Q[\dot{s}(D^\new, \theta_0^\new)]\big)^{-1} s(d, \theta_0^\new)
$, 
and $\eta(z) =\EE[\psi(D_j^\new)\given Z_j^\new=z]$. 
\end{assumption}
  
Similar to the preceding subsection, 
we add a bias correction term to $\hat\theta_n^\trans$ 
and construct
\#\label{eq:trans_new_simple}
 \hat\theta_{m,n}^{\trans } = \hat\theta_n^\trans - \hat{c}^\trans, 
 \quad \hat c^\trans  :=   \frac{1}{n} \sum_{i =1}^n  \hat \eta (Z_i) \hat w(Z_i)  - \frac{1}{m} \sum_{j=1}^m   \hat\eta(Z_j^\new) .
\# 
Again, for ease of illustration, 
we assume $\hat\eta(\cdot)$ 
is an estimator for $\eta(\cdot)$ obtained elsewhere, 
such that it  
is independent of all the data we have. 
A rigorous treatment without referring to external datasets 
is in Sections~\ref{app:subsec_cross_fitting_known} 
and~\ref{app:subsec_cross_fitting_est} 
in the supplementary material. 
 
\begin{remark}\label{rm:est_eq}
The bias correction technique in~\eqref{eq:trans_new_simple} can be seen as 
generalizing the ideas in 
the missing data literature, e.g., 
the AIPW estimator~\citep{robins1994estimation}, 
where outcomes (score functions in our setting) 
in the original group are used for correcting 
for the bias in the estimated nuisance components. 
While our method directly focuses on 
the asymptotic linear expansion of our 
estimands and estimators, 
{there may be alternative estimators that yield similar guarantees.} 
In particular, aggregate the observations 
into $\{(T_i,T_iD_i,Z_i)\}_{i=1}^{m+n}$ 
and model them as i.i.d.~from a joint distribution, where $T_i\in\{0,1\}$ indicates 
whether unit $i$ belongs to the original group.
One may view $\theta_m^{\cond,\new}$ 
as solving the following estimation equation: 
\$
\sum_{i=1}^{m+n} \EE\bigg[ \frac{T_i}{pr(T_i=1)}w(Z_i)\big\{ s(D_i,\theta) - \eta(Z_i,\theta) \big\} + \frac{1-T_i}{pr(T_i=0)} \eta(Z_i,\theta) \bigggiven Z_i \bigg] = 0,
\$
where $\eta(Z_i,\theta):=\EE\{s(D_i,\theta)\given Z_i\}$ for any $\theta\in \RR^p$.   
This representation may also motivate 
other estimation approaches such as one-step estimators and under-smoothing~\citep{newey1994large}.
However, 
developing concrete procedures 
and theoretical guarantees for conditionally valid inference based on such an approach is 
beyond the scope of this work; 
we conjecture that such methods may 
lead to similar guarantees as those derived from our approach. 
\end{remark} 
 
Theorem~\ref{thm:est_cov_shift_simple} 
establishes conditional inference guarantee 
that is robust  
to estimation error: we obtain $n^{-1/2}$-rate inference,  
as long as the product of the errors 
in the estimation of $\hat{w}$ and $\hat\eta$ 
is no greater than $o_P(n^{-1/2})$.  
As a special case, 
when $w(\cdot)$ is known, 
we achieve conditionally valid inference 
under $L_2$-consistency of $\hat\eta$ 
similar to Theorem~\ref{thm:iid_simple}.
In the rigorous treatment with cross-fitting, 
the same result holds under similar 
convergence rates of the estimated covariate shift 
and influence functions;   
these are shown to be achievable 
under generic conditions for nonparametric regression,  
see Proposition~\ref{prop:consist_eta} 
of Section~\ref{subsec:est_consist} 
in the supplementary material. 
The proof of Theorem~\ref{thm:est_cov_shift_simple} 
is in Section~\ref{app:thm_cov_shift_simple} 
of the supplementary material. 

\begin{theorem}\label{thm:est_cov_shift_simple}
Under Assumption~\ref{assump:linear_expansion_known_shift}, 
suppose $m\geq \epsilon n$ for a fixed $\epsilon>0$, $\| \hat\eta (\cdot) - \eta(\cdot) \|_{L_2(\mathbb{Q})}=o_P(1)$, 
$\|w(\cdot)\{\hat\eta (\cdot) - \eta(\cdot)\}\|_{L_2(\mathbb{P})}=o_P(1)$,  
$\EE_{\mathbb{P}}[w(Z_i)^4\psi(D_i)^4]<\infty$, and 
$
\big\|\hat{w} (\cdot) - w(\cdot) \big\|_{L_2(\mathbb{P})} \cdot \big\|  \hat\eta (\cdot) - \eta(\cdot)\big\|_{L_2(\mathbb{P})} = o_P(1/\sqrt{n})
$.  
If an estimator $\hat\sigma_{\shift}^2$ 
converges in probability
to 
\#\label{eq:def_shift_var}
\sigma^2_{\shift} = \Var\big[ w(Z_i) \big\{\psi(D_i) - \eta(Z_i)\big\}   \big],
\#
where the variance is induced by $(D_i,Z_i)\sim \mathbb{P}$,
then the random variable 
\$
\PP\Big( \theta_m^{\cond,\new} \in \big[ \hat\theta_{m,n}^{\trans } - \hat\sigma_{\shift} \cdot  z_{1-\alpha/2}/\sqrt{n},    \hat\theta_{m,n}^{\trans } + \hat\sigma_{\shift} \cdot z_{1-\alpha/2}/\sqrt{n}    \big] \Biggiven Z_{1:m}^\new, Z_{1:n}  \Big)
\$
converges in probability to $1-\alpha$ as $n\to \infty$, 
where $\hat\theta_{m,n}^{\trans }$ is defined 
in equation~\eqref{eq:trans_new_simple}. 
\end{theorem}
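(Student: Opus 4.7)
The plan is to derive the linear expansion $\sqrt{n}(\hat\theta_{m,n}^\trans - \theta_m^{\cond,\new}) = \frac{1}{\sqrt{n}}\sum_{i=1}^n \{\psi(D_i) - \eta(Z_i)\}\, w(Z_i) + o_P(1)$ and then invoke the conditional central limit theorem (Lemma~\ref{lem:cond_clt}) on the leading term. Combining Assumption~\ref{assump:linear_expansion_known_shift} with the definition~\eqref{eq:trans_new_simple} and canceling the common centering $\theta_0^\new$ yields $\sqrt{n}(\hat\theta_{m,n}^\trans - \theta_m^{\cond,\new}) = \frac{1}{\sqrt{n}}\sum_{i=1}^n \{\psi(D_i) - \hat\eta(Z_i)\}\hat w(Z_i) + \frac{\sqrt{n}}{m}\sum_{j=1}^m \{\hat\eta(Z_j^\new) - \eta(Z_j^\new)\} + o_P(1)$, so the task reduces to showing this expression differs from the desired leading term by $o_P(1)$.

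Next I would expand $\{\psi(D_i) - \hat\eta(Z_i)\}\hat w(Z_i) - \{\psi(D_i) - \eta(Z_i)\}w(Z_i) = \{\psi(D_i) - \eta(Z_i)\}(\hat w - w)(Z_i) - (\hat\eta - \eta)(Z_i)\, w(Z_i) - (\hat\eta - \eta)(Z_i)(\hat w - w)(Z_i)$. Summed and scaled by $1/\sqrt{n}$, the first piece is mean-zero conditional on the externally fit $\hat w$ (since $\mathbb{E}\{\psi(D_i) - \eta(Z_i)\mid Z_i\}=0$) with conditional variance bounded by $\sup_z|\hat w - w|^2\cdot \mathbb{E}\{(\psi-\eta)^2\}$, hence $o_P(1)$ under the sup-norm consistency of $\hat w$; the third piece is controlled by a Cauchy--Schwarz bound of order $\sqrt{n}\,\|\hat\eta - \eta\|_{L_2(\mathbb{P})}\,\|\hat w - w\|_{L_2(\mathbb{P})} = o_P(1)$, which is exactly the doubly-robust product-rate hypothesis.

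The crucial cancellation is between the remaining middle term $-\frac{1}{\sqrt{n}}\sum_i (\hat\eta - \eta)(Z_i)\, w(Z_i)$ and the residual $\frac{\sqrt{n}}{m}\sum_j (\hat\eta - \eta)(Z_j^\new)$. Conditional on $\hat\eta$, both have conditional expectation $\mp\sqrt{n}\cdot\mathbb{E}_{\mathbb{Q}}[\hat\eta - \eta]$ by the change-of-measure identity $\mathbb{E}_{\mathbb{P}}[w(Z)(\hat\eta - \eta)(Z)] = \mathbb{E}_{\mathbb{Q}}[\hat\eta - \eta]$, so these centerings cancel exactly; the deviations around them have conditional variances bounded by $\|w(\hat\eta - \eta)\|_{L_2(\mathbb{P})}^2$ and $(n/m)\,\|\hat\eta - \eta\|_{L_2(\mathbb{Q})}^2$, each $o_P(1)$ by the stated consistency conditions together with $m\geq \epsilon n$. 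What remains is precisely the desired expansion.

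Finally I would apply the conditional CLT (Lemma~\ref{lem:cond_clt}) to $\frac{1}{\sqrt{n}}\sum_i \{\psi(D_i) - \eta(Z_i)\}w(Z_i)$ given $(Z_{1:n}, Z_{1:m}^\new)$. Because $Z_{1:m}^\new$ is independent of the source data, conditioning effectively reduces to $Z_{1:n}$; each summand has conditional mean zero and conditional variance $w(Z_i)^2\,\Var\{\psi(D_i)\mid Z_i\}$, whose empirical average converges in probability to $\sigma_\shift^2$ by the law of large numbers, while $\mathbb{E}_{\mathbb{P}}[w^4\psi^4]<\infty$ supplies the Lindeberg condition. Combined with Slutsky's lemma and the assumed consistency of $\hat\sigma_\shift$, the conditional coverage converges in probability to $1-\alpha$. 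The main obstacle is the disciplined bookkeeping of four distinct sources of randomness---the source data, the target attributes, the externally fit $\hat w$, and the externally fit $\hat\eta$---so that the doubly-robust product-rate assumption precisely offsets the $\sqrt{n}$ scaling in the cross-error while the individual errors in $\hat w$ and $\hat\eta$ are absorbed by the sup-norm and $L_2$ consistency, respectively.
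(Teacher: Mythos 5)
Your proposal is correct and follows essentially the same route as the paper's proof: the identical three-way error decomposition, the same change-of-measure cancellation between the reweighted source residual $\frac1n\sum_i(\hat\eta-\eta)(Z_i)w(Z_i)$ and the target-sample residual, the same Cauchy--Schwarz use of the product-rate hypothesis for the cross-error, and the same conclusion via the conditional CLT (Lemma~\ref{lem:cond_clt}) plus Slutsky. The only cosmetic difference is that you bound the $(\psi-\eta)(\hat w-w)$ term via its conditional variance under the sup-norm consistency of $\hat w$, whereas the paper phrases the same bound through Markov's inequality; both are valid.
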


Theorem~\ref{thm:est_cov_shift_simple} 
shows 
how to conduct estimation and inference of sub-population parameters under distribution shift.  
For instance, after running an experiment 
on a set of patients, 
a hospital could infer its effect  on a 
new set of patients who have a different covariate distribution.

Perhaps surprisingly, 
even with distribution shift,  
the asymptotic variance $\sigma_{\shift}^2/n$ 
does not depend on $m$; this is due to the fact that bias correction is statistically an easy task. 
We pay some price for the transfer to the new super-population 
since the variance term is weighted with $w(\cdot)$; 
however, we do not pay any price 
in efficiency for the transfer to the new sub-population.

To complete the picture, 
it remains to construct a consistent estimator 
for $\sigma_{\shift}^2$ defined in~\eqref{eq:def_shift_var}.  
In Section~\ref{sec:algo} 
of the supplementary material, 
we detail a stand-alone estimation procedure 
for $\sigma_{\shift}^2$ 
(Algorithm~\ref{alg:sigma_shift_est}) 
that does not rely on external data. 
An intermediate step 
relies on estimating $\hat\eta(\cdot)$; 
we offer a detailed procedure (Algorithm~\ref{alg:eta_est})
in Section~\ref{sec:algo}
with rigorous guarantee, 
Theoretical analysis for these algorithms 
is in Section~\ref{subsec:est_consist} 
of the supplementary material.

\begin{remark}[Transfer to super-populations] 
We have described how to conduct inference for the sub-population parameter $\theta_m^{\cond,\new}$, the parameter for the new distribution conditionally on $(Z_1^\new,\ldots,Z_m^\new)$. 
Practitioners may also be interested in the super-population parameter of the new distribution or may want to condition on a different set of variables. The proposed approach can be extended to this setting by adjusting the confidence intervals appropriately. 
We discuss these issues in more detail in 
Section~\ref{app:subsec_trans_subset} of 
the supplementary material.  
\end{remark}


\section{Simulations}\label{sec:simu}

\subsection{Conditional inference}
\label{subsec:simu_cond_inf}
In this part, 
we evaluate the conditional inference 
procedure in Section~\ref{sec:cond_inf_single} 
with simulations.
The results 
validate the conditional coverage 
and show the robustness 
to estimation error. 

We generate data $D_i=(X_i,Y_i)$ with covariates $X\in \RR^{10}$ and response $Y\in \RR$ according to
\$
&X_1,X_2,X_5,\dots,X_{10}\iid N(0,1),~X_3=X_1+\varepsilon_1,~X_4=X_1+\varepsilon_2,\\
&(\varepsilon_1,\varepsilon_2)^\top \sim N(0,\Sigma),~\Sigma_{11}=\Sigma_{22}=1,~\Sigma_{12}=\Sigma_{21}=1/2,\\ 
& Y = X_1+|X_1|+X_3+ \varepsilon',~~\varepsilon'\sim N(0,\nu^2).
\$
Here the linear model is misspecified but the OLS projection coefficient is still well-defined. 
We focus on two
conditional parameters: 
the first two entries  of the ordinary least square coefficient
$
\theta_n^\cond = \argmin_{\beta\in \RR^p} \sum_{i=1}^n\EE\{ (Y_i - \beta^\top X_i)^2\given Z_i\}, 
$ 
where the conditioning set is $Z=(X_1,X_2)$. 
The super-population estimands are $\theta_1 = 1$ and $\theta_2=0$.
The influence function is 
\#\label{eq:simu_inf_fct_form}
\phi(d;\theta) = \big\{\EE(X X^\top)\big\}^{-1} x(y-\theta^\top x),\quad \text{where }~ d=(x,y)\in \RR^p\times \RR.
\#

The procedure in Section~\ref{sec:cond_inf_single} 
is carried out for sample sizes $n\in\{200,1000,2000, 5000\}$ 
and $\nu\in\{0.1,0.2,0.5\}$ with $\alpha=0.05$. 
We first generate i.i.d.~observations $\{Z_i\}_{i=1}^n = \{(X_{i1},X_{i2})\}_{i=1}^n$; 
then we repeatedly sample $\{D_i\}_{i=1}^n$ 
conditional on $\{Z_i\}_{i=1}^n$ 
for $N_Y=10000$ times. 
We construct 
confidence intervals 
and evaluate the coverage of 
the two conditional parameters over $N_Y$ times. 
The asymptotic variance 
is estimated with Algorithm~\ref{alg:sigma_est} 
in the supplementary material,
where we use  \texttt{loess} function in R
for  the nonparametric regression. 
The procedure is repeated 
for $N_X=2000$ draws of 
the conditioning set.
 
We summarize the $N_X$ conditional coverage for $\theta_n^{\cond}$  
in Figure~\ref{fig:cond_cov}; 
each subplot corresponds to a configuration of $\nu$. 
Both figures confirm the conditional validity of our procedure 
(the boxplots mark the median and quarter quantiles of the conditional coverage). 
In particular, the estimation error of variance for the second entry 
with smaller sample sizes 
leads to overcoverage 
on the right-hand side of Fig.~\ref{fig:cond_cov}. 
It shows the robustness of our procedure 
to the estimation error of $\varphi(\cdot)$: 
in cases where the estimation of $\varphi(\cdot)$ is inaccurate, the algorithm tends to overestimate the variance, 
so that the procedure still provides valid coverage. 
This is because
using Algorithm~\ref{alg:sigma_est} (see details in Section~\ref{sec:algo} of the supplementary material), when $\hat\varphi(\cdot)$ 
converges to a function  $\varphi'(\cdot)$, 
our output $\hat\sigma^2$ 
converges to $\EE[\{\phi(D)-\varphi'(Z)\}^2] \geq \EE[\{\phi(D)-\varphi (Z)\}^2]=\sigma^2$, as $\varphi(Z)$ 
is the least-square projection of $\phi(D)$ onto 
the space of measurable functions of $Z$.

\begin{figure}[h]
  \centering
  \begin{minipage}{.5\textwidth}
    \centering
    \includegraphics[width=1\linewidth]{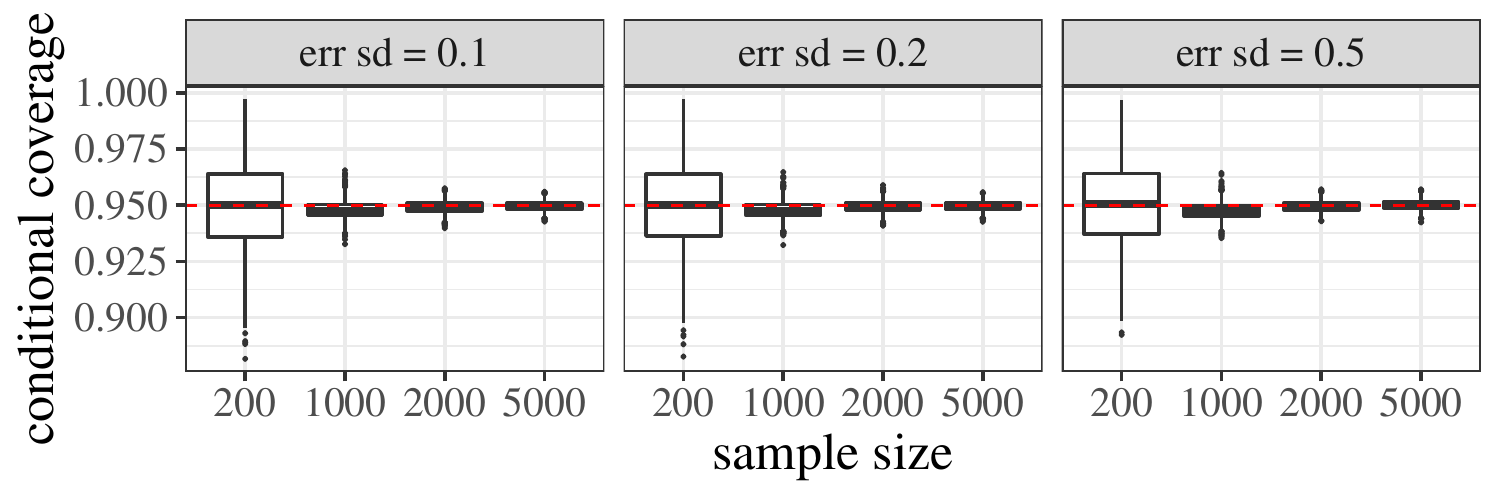}
  \end{minipage}%
  \begin{minipage}{.5\textwidth}
    \centering
    \includegraphics[width=1\linewidth]{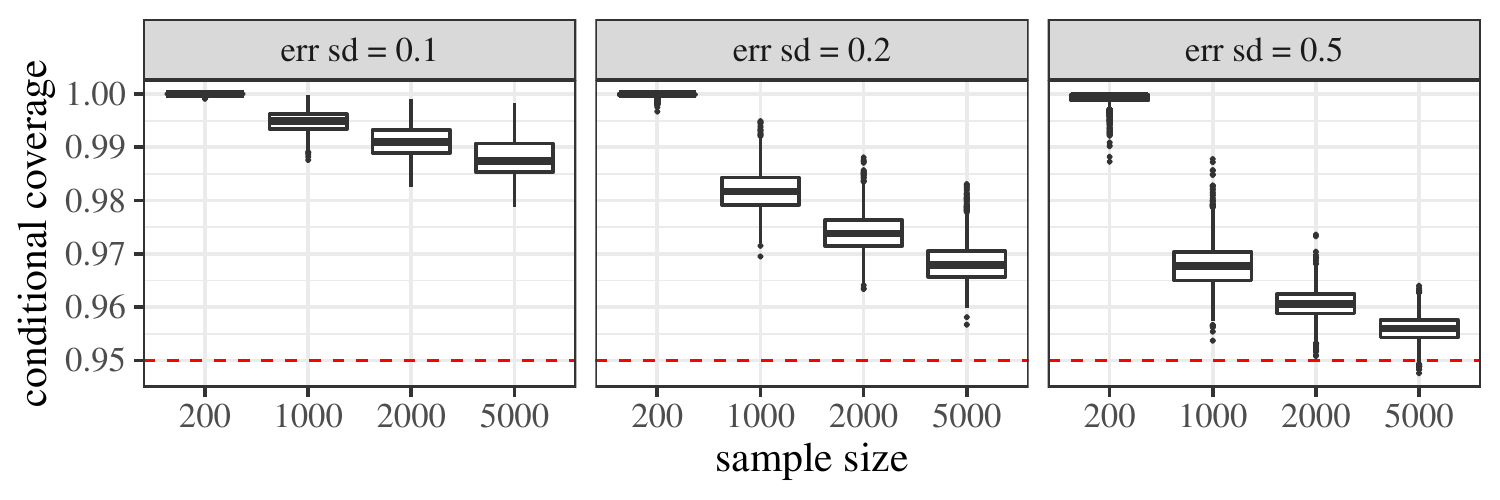}
  \end{minipage}
  \caption{Conditional coverage of $\theta_n^{\cond}$  for the first (left) and second (right) entry. 
  Red dashed lines are the nominal level $1-\alpha=0.95$.
  }
  \label{fig:cond_cov}
\end{figure}

Furthermore, 
we see that  
conditional inference leads to shorter 
confidence intervals once the estimation error 
is reasonably small; see Figure~\ref{fig:cond_len} in Section~\ref{app:subsec_simu}
of the supplementary material.

\subsection{Transductive inference under covariate shift}
\label{subsec:simu_trans}  
In this part, 
we evaluate the
transductive inference procedures.  
Our results show that 
the conditional coverage  
is close to the nominal level
even with estimated covariate shift. 
 
The data-generating process and 
parameters of interest are the same as Section~\ref{subsec:simu_cond_inf}, 
while we set the conditioning set as $Z=X_1$ 
and  the covariate shift as $w(z)=0.5 + \ind\{z>0\}$.
We set sample sizes $n \in \{200,1000,2000,5000\}$ and $m=n\cdot \epsilon$, where $\epsilon \in \{0.5, 1, 2\}$. 
We independently draw $N_X=2000$ times 
of i.i.d.~attributes $Z^\new=(Z_j^\new)_{1\leq j\leq m}$. 
Each time, we
fix the new attributes  
and repeatedly draw $\{D_i,Z_i\}_{1\leq i\leq n}$,  
then apply the procedures in Section~\ref{sec:known_shift} 
for $N_Y=10000$ times. 
We follow algorithms in Section~\ref{sec:algo} 
in the supplementary material 
to construct $\hat\sigma_{\shift}^2$, 
$\hat\theta_{m,n}^{\trans}$ and the confidence intervals, 
where the meta algorithm~\ref{alg:meta_cond_reg} 
uses the \texttt{loess} function in R. 
When covariate shift is estimated, 
we let $\hat{w}(\cdot) = \frac{\hat{e}(\cdot) }{1-\hat{e}(\cdot)}\cdot\frac{1-\hat p}{p}$, 
where $T_i=\ind\{i~\text{is in the new dataset}\}$, 
and $\hat{e}(x)$ (resp.~$\hat{p}$) estimates 
$\PP(T_i =1\given X_i=x)$  
(resp.~$\PP(T_i=1)$)
by pooling the two datasets, 
and $\hat{e}(x)$ is obtained by \texttt{randomForest} function in R.  
 
Given $\alpha=0.05$, we evaluate 
the  conditional coverage of the two procedures 
given each draw of new attributes
by empirical coverage among the $N_Y=10000$ replicates. 
Coverage for $\theta_m^{\cond,\new}$ 
associated with the first (left) and second (right) entries
is in Figure \ref{fig:simu_trans_cov_cond}. 
The conditional coverage is close to the nominal level $95\%$ 
with 
both ground truth (blue) and estimated (yellow) covariate shift. 
The proposed procedure works slightly better with larger noise $\nu$; 
it is due to over-estimation of asymptotic variance. 
Also, the coverage is higher for large proportion of $m/n$. 
This might be due to smaller approximation error 
of asymptotic linear expansion. 

\begin{figure}[h]
  \centering
  \begin{minipage}{.5\textwidth}
    \centering
    \includegraphics[width=1\linewidth]{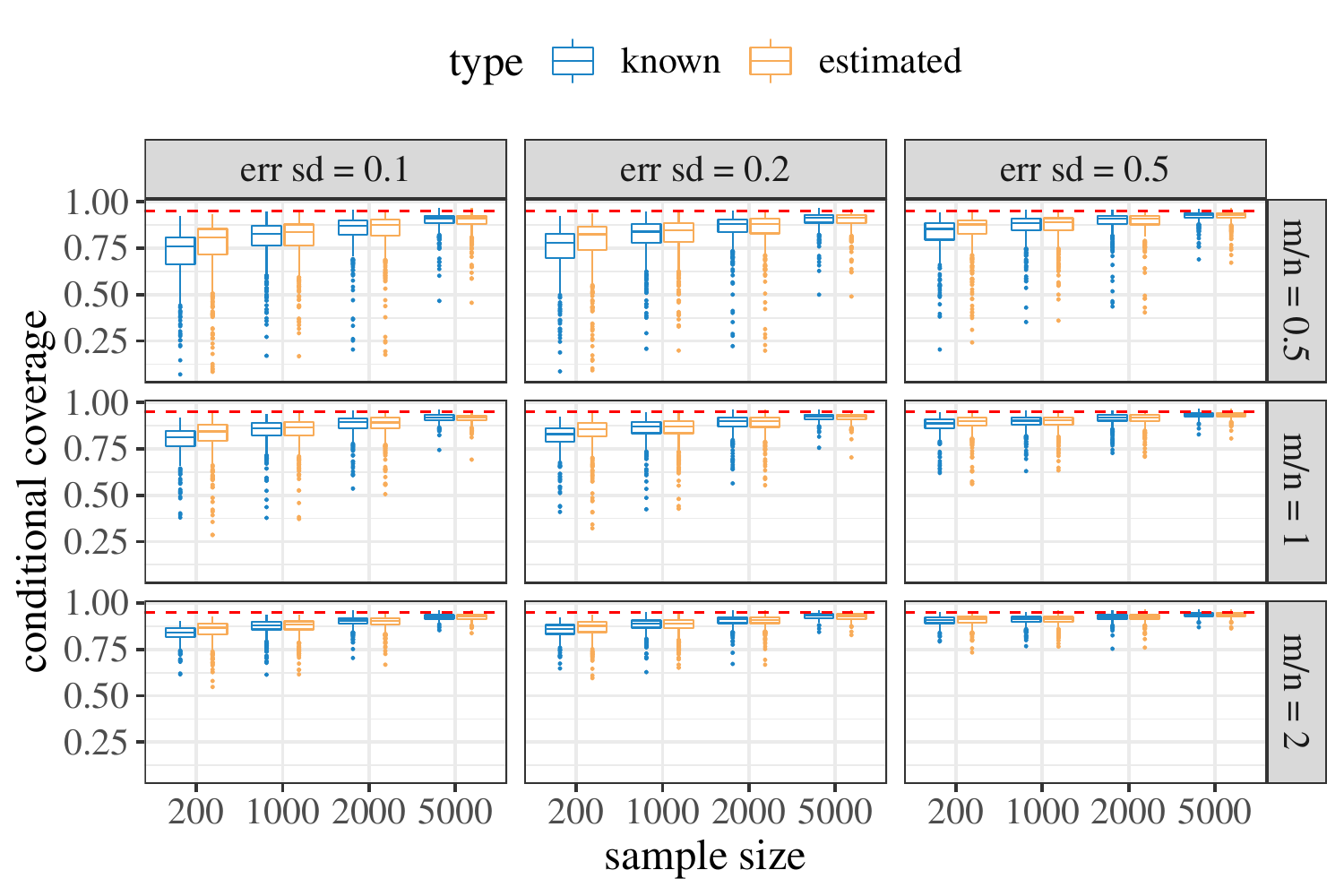}
  \end{minipage}%
  \begin{minipage}{.5\textwidth}
    \centering
    \includegraphics[width=1\linewidth]{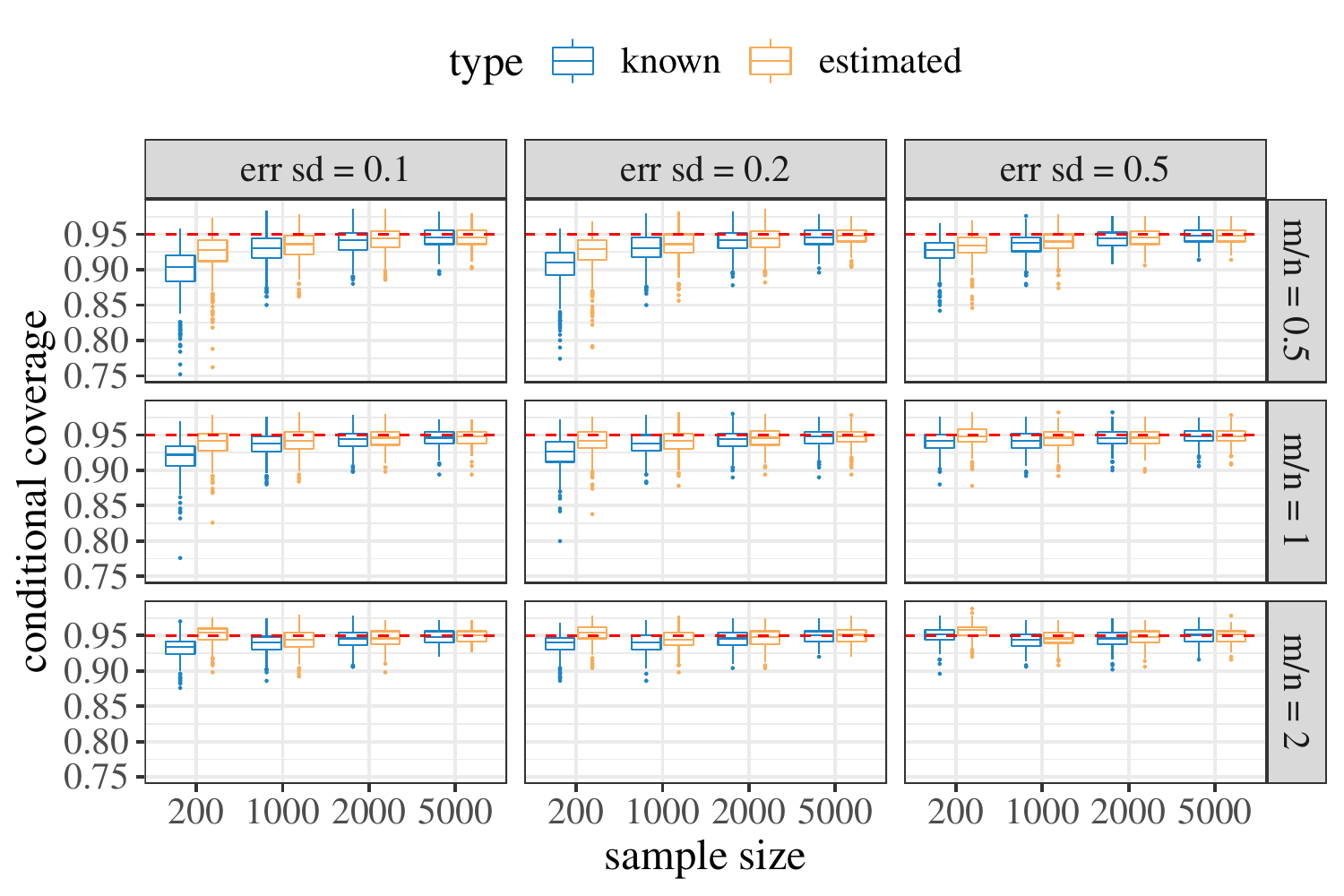}
  \end{minipage}
  \caption{Conditional coverage of $\theta_m^{\cond,\new}$  
  for the first (left) and second (right) entry. 
  Red dashed lines are the nominal level.}
  \label{fig:simu_trans_cov_cond}
\end{figure}

\vspace{-1em}

\section{Real data analysis}\label{sec:realdata}

Besides the real data analysis we show in the introduction, 
we also apply the 
transductive inference procedure in Section \ref{sec:known_shift} 
to a real-world dataset for predicting car prices. 
The dataset is from Ebay-Kleinanzeigen and consists of around 50,000 observations. Features include continuous ones like registration year and discrete ones like brand and make. 
The dataset has been studied in \cite{kuenzel2019heterogeneous}, where reliable prediction of car prices is found to be challenging. 
In particular, it is difficult to predict the individual prices of some `unsual' cars, 
such as old cars (registered before 2000), vintage cars and race cars. 

Our framework constructs 
conditionally valid confidence intervals 
for the mean price of a subset of cars. 
This is suitable when a dealer or agent is interested in 
whether to sell its own cars through this platform. 
This situation, as we introduced at the beginning of this paper, 
is in between predicting an individual price or 
inferring the overall mean price of cars. 
In the following, we conduct conditional inference for 
the mean of a sub-population of old cars and evaluate the performance by the conditional coverage. 

We first generate a semi-synthetic dataset for evaluation.  
We fit a random forest model $\hat{m}(\cdot)$ for the conditional mean $m(x)=\EE(Y_i\given X_i=x)$ 
on the whole dataset, and view the fitted values $\hat{m}(X_i)$ as the conditional mean, 
then compute the residuals $\epsilon_i = Y_i - \hat{m}(X_i)$. 
To create the synthetic dataset, 
we randomly sample (without replacement) 
a population of size $N\in \{2, 5, 10, 20, 50\}\times 10^3$ 
from the original dataset.  
We focus on the particularly difficult task 
of inferring the price of old cars~\citep{kuenzel2019heterogeneous}. 
We choose the old cars with registration year earlier than 2000, 
and take a subsample of proportion $r\in \{0.1,0.2,\dots,0.9\}$ 
as the new (shifted) dataset $\{(Y_j^{*\new},X_j^\new)\}_{j=1}^m$; 
The original dataset $\{(Y_i^*,X_i)\}_{i=1}^n$  
consists of the rest of the old cars 
and all newer cars, so that $m+n=N$. 
In particular, we fix the covariates and 
randomly resample the errors 
to generate $\{Y_i^*\}_{i=1}^n$ and $\{Y_j^{*\new}\}_{j=1}^m$, 
and evaluate conditional coverage. 

The transductive inference procedure discussed in Section \ref{sec:known_shift} is applied to the synthetic dataset, where the confidence interval is constructed as
\$
\big[\, \hat\theta_{m,n}^{\trans,\shift} + z_{0.025} \cdot \hat\sigma_{\shift}/\sqrt{n} ,~\hat\theta_{m,n}^{\trans,\shift} + z_{0.975} \cdot \hat\sigma_{\shift}/\sqrt{n}   \,  \big].
\$
Specifically, with 
$T_i=1$ indicating $(X_i,Y_i^*)$ is in the new (shifted) dataset, 
the weight function 
is obtained by $\hat{w}(\cdot) = \frac{\hat{e}(\cdot) }{1-\hat{e}(\cdot)}\cdot\frac{1-\hat p}{p}$, 
where $\hat{e}(x)$ estimates 
$\PP(T_i =1\given X_i=x)$  
and $\hat{p}$ estimates $\PP(T_i=1)$
by pooling the two datasets. 
The coverage for the conditional parameter 
$
\theta^{\cond,\new} = \frac{1}{m}\sum_{i=1}^N T_i \cdot \hat{m}(X_i)
$
is evaluated over 1000 replicates; 
the results are summarized in Figure \ref{fig:realdata_car}. 

\begin{figure}[h]
    \centering
    \includegraphics[width=6in]{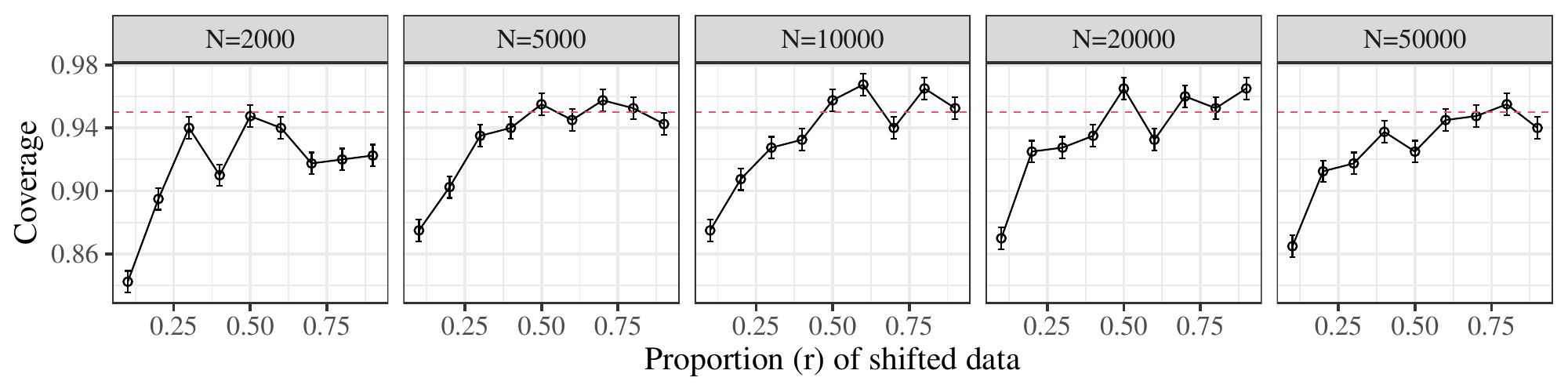}
    \caption{Conditional coverage versus proportions $r$ of shifted data; 
    each subplot corresponds to a sample size $N$.
    The red dashed lines indicate the nominal level 0.95.
    }
    \label{fig:realdata_car}
\end{figure}

Our procedure works well
especially if the original data set is reasonably large and the proportion $r$ of shifted data is moderate.  
The coverage improves as the sample size gets larger, especially when the proportion of shifted data is not too large or too small 
(so that old cars appear reasonably often in both datasets). 
We observe that the coverage might be deteriorated when the proportion of shifted data is large (like $r=0.9$), in which case there are fewer representative  observations 
of old cars in the original data, 
so that training a model for those conditional means gets harder. 
On the other hand, when the sample size (for example $N=2000$) and 
the proportion $r$ (such as $r=0.1$) is relatively small, 
we observe undercoverage  in the first plot in Figure \ref{fig:realdata_car}. This is because the sample 
size of the new data $m$ is relatively small, and the 
normal approximation of $\theta_m^{\cond,\new}$ 
such as imposed in Theorem~\ref{thm:iid_simple} is less accurate.
Meanwhile, the outliers in this car dataset potentially
make  inference for small subsets less stable. 

\section*{Acknowledgement}
We are grateful to three anonymous referees, the associate editor, and the editor for valuable comments and suggestions.
We thank Peng Ding, Kevin Guo, Guido Imbens,
and Zhimei Ren for helpful discussions and feedback. 

\bibliographystyle{apalike}
\bibliography{reference}

\newpage 
\thispagestyle{empty}

\appendix

\begin{center}
{\Large\bf Supplementary material for ``Tailored inference for finite populations: conditional validity 
and transfer across distributions''} 
\end{center}

\medskip


\appendix

\section{Deferred discussion and results}
\subsection{Conditional versus marginal inference}\label{sec:super-vs-condit}

Continuing the  
example of a healthcare provider (for simplicity, let us say they are hospitals 
in a city) estimating 
health conditions discussed in the introduction, 
suppose there are $N=1000$ hospitals $j=1,\ldots,N$, 
each having $n=10000$ fixed patients  
with i.i.d.~attributes $Z_{ij} \sim \mathbb{P}_Z$, 
$i=1,\ldots,n$. 
For simplicity, 
we assume that  the fixed population is 
defined by fixing their attributes, so that  
the observations are $Y_{ij} =  f_j(Z_{ij}) + \epsilon_{ij}$,
where $\epsilon_{ij}\sim N(0,1)$ is i.i.d.~measurement noise and unexplained variation, 
and $f_j(z)$ is the
average health of a patient with attributes $Z=z$, 
which can vary with $j$. 
We also assume $f_j(Z_{ij})$ and 
$\epsilon_{ij}$ have finite second moments. 
In our simulation, the attributes $Z_{ij}$ are 
fixed at their observed values, 
while $\epsilon_{ij}$ 
are repeatedly drawn. 
For simplicity, we assume the marginal variance 
$\sigma_{\text{m}}=\text{sd}(Y_{ij})$ 
and measurement noise $\sigma_{\text{e}}=\text{sd}(\epsilon_{ij})$ 
are known. 

\begin{figure}[ht]
  \centering
    \includegraphics[width=1\linewidth]{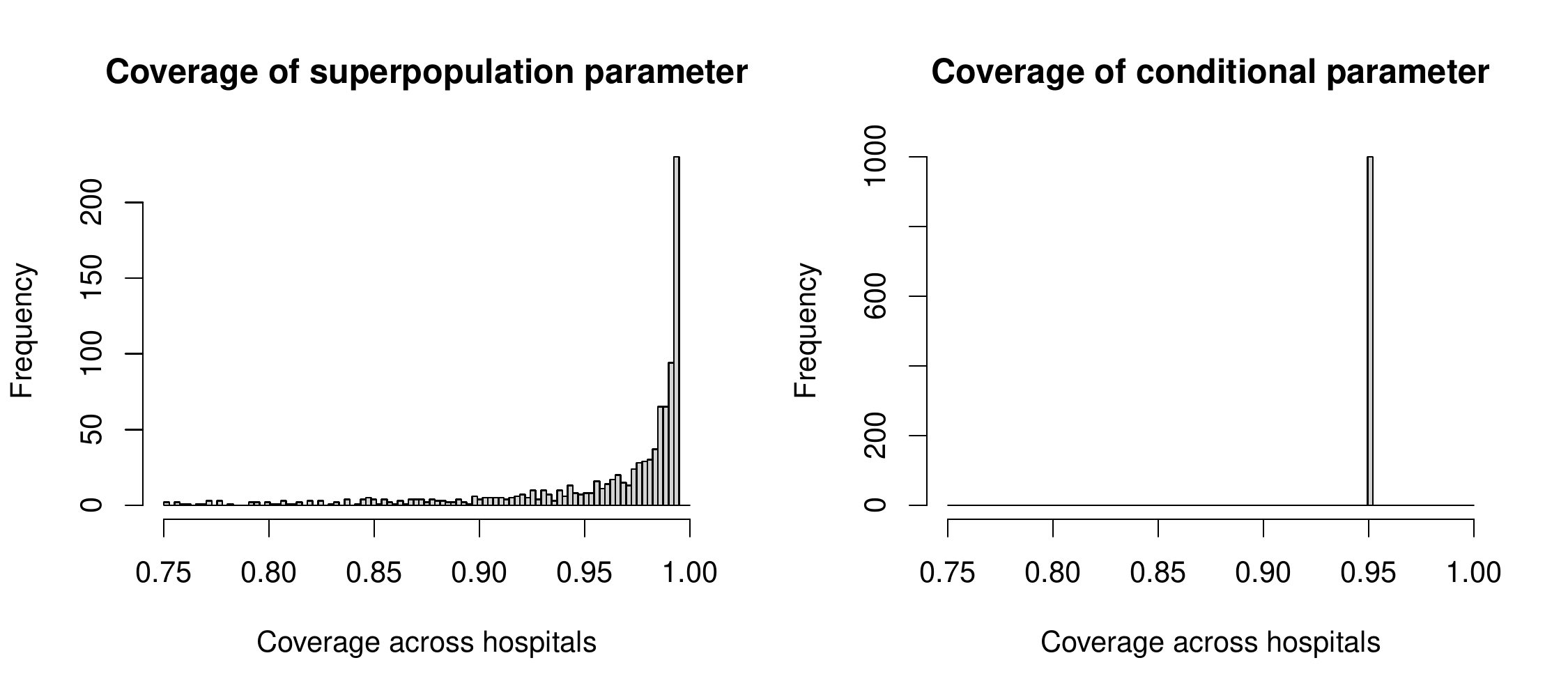}
    \caption{Left: coverage of super-population confidence intervals across hospitals ($37$ hospitals with coverage $< .75$ are not shown). Right: coverage of conditional confidence intervals across hospitals. In both cases, the marginal coverage is $.95$. Details of the simulation are in Section~\ref{sec:super-vs-condit}}
  \label{fig:super-vs-condit}
\end{figure}

In super-population inference,  
each hospital can construct  
$95 \%$ confidence intervals 
for the super-population parameter 
$  \EE(Y_{ij})$ via 
$\frac{1}{n} \sum_{i=1}^n Y_{ij}  \pm 1.96 \sigma_{\text{m}} /\sqrt{n}$. 
We show the histogram of coverage across the hospitals
in the left-hand side of Figure~\ref{fig:super-vs-condit}, 
where 
we observe \emph{under-coverage} for some hospitals. 
Indeed, $26\%$ of hospitals have coverage below $.95$, and the average coverage among these hospitals is only $.85$. 
Furthermore, if these hospitals would 
repeat similar examinations many times, 
their confidence intervals would 
consistently suffer from under-coverage.  
On the other hand, the confidence 
intervals of some other hospitals 
will consistently over-cover  if similar 
examinations are repeated many times.   
Such lack of \emph{conditional} coverage 
illustrates the risk  of 
super-population inference, 
which is especially pressing in high-stakes 
applications such as healthcare.

Mathematically, the issue is that for each hospital $j$, 
the customers defined by $\{Z_{ij}\}_{i=1}^n$ are fixed and 
only the remaining variation in $\{\epsilon_{ij}\}_{i=1}^n$ 
is drawn repeatedly.  
Super-population inference 
that accounts for 
the randomness of both $Z_{ij}$ and $\epsilon_{ij}$ 
is marginally valid (the coverage is .95 
averaged over the hospitals). 
In this situation, 
however,  
it would be more desirable 
to have coverage close to $.95$ for each hospital, 
i.e., 
conditional on  
$\{Z_{ij}\}_{i=1}^n$ for each  $j$. 

As discussed above, 
the data scientist might find conditional parameters 
more relevant. 
One can conduct inference for the conditional parameter $\frac{1}{n} \sum_{i=1}^n f_j(Z_{ij})$, 
the average intrinsic health risk of the fixed 
patients in hospital $j$, ruling out the measurement error. 
In our framework, 
$95 \%$ confidence intervals can be constructed via $\frac{1}{n} \sum_{i=1}^n Y_{ij}  \pm 1.96 \sigma_{\text{e}} /\sqrt{n}$. 
%
The histogram of coverage of these confidence intervals 
are shown on the right-hand side of 
Figure~\ref{fig:super-vs-condit}, 
where we observe coverage consistently close to $.95$ for all hospitals. 
By switching to conditional parameters, 
the confidence intervals are more relevant 
and more reliable for the 
patients in each hospital. 
We also note that conditional confidence intervals 
are shorter than those for super-population inference. 

We finally remark  
a few over-simplified aspects in 
this stylized example.   
Firstly,  
replacing 
$\sigma_{\text{e}}$  with a consistent or 
conservative estimator 
preserves 
conditional validity under mild conditions. 
Secondly, one may want to condition on unobserved variables. To be more precise,  
one may want to infer some health indicator $Y^* = \mathbb{E}(Y|Z^*)$ 
(where $Z^*$ is an unobserved variable that is finer than $Z$), 
we still allow for $Y^*$-conditionally valid 
(yet conservative) inference for 
the corresponding conditional parameter
$\frac{1}{n}\sum_{i=1}^n Y_{ij}^*$. 
More details are discussed in Section~\ref{sec:cond_inf_single}. 

\subsection{Connection to fixed-design regression and finite-population causal inference}
\label{app:subsec_causal}

\begin{example}[Linear regression]
For linear regression with misspecified model~\citep{Buja2016}, 
conditional parameters are defined by conditional 
ordinary least-squares (OLS).  
Assume $D=(X,Y)$ for a response $Y\in \RR$ 
and predictors $X \in \mathbb{R}^p$. 
The OLS parameter is 
$\theta_0 = \argmin_\theta E\{(Y-X ^\top \theta)^2\}$
with 
$s(D,\theta) = 2 X (Y-X^\top \theta)$. 
The conditional parameter is 
$
    \theta_n^{\cond} = \argmin_b \sum_{i=1}^n \EE\big\{ (Y_i- X_i^\top  b)^2 \biggiven Z_i\big\},
$ 
the least-square projection of $Y$ on $X$ when the observations are 
drawn conditional on $(Z_1,\dots,Z_n)$.  
If $Z_i = X_i$, then $\theta_n^\cond$ can be 
viewed as the regression coefficient 
for a set of subjects with fixed regressors, 
averaging over measurement noise. 
In model-based inference, 
if $Z_i=X_i$ and $Y_i=X_i^\top \theta_0 + \epsilon_i$ 
for $\EE(\epsilon_i\given X_i)=0$, i.e., well-specified model,
we would have $\theta_n^\cond = \theta_0$. 
In practice, however, this will usually not hold, and 
the conditional parameter 
may vary with the realization  of $X_i$. 
More generally, $Z$ can also be a variable 
outside  the set of predictors; 
conditioning on $Z$ can change the parameters 
if it is correlated with both the predictors and the residuals.  
\end{example}

\begin{example}[Finite-population causal inference]
\label{ex:causal} 
Finite-population treatment effects are 
a common target in causal inference.
In social sciences, for example, it is expected that individuals react differently to treatments. 
In this case, conditional inference can be used to understand 
the reaction of 
a specific population.  
Suppose $(T,X,Y(1),Y(0))$ are sampled from a super-population $\mathbb{P}$, 
where $T\in\{0,1\}$ is the treatment indicator, $X$ is the covariates, 
$Y(1),Y(0)$ are the potential outcomes 
if the treatment is received ($T=1$) and not ($T=0$). 
Under SUTVA and consistency~\citep{Imbens2015}, 
for each unit we observe $D=(T,X,Y)$,  where
$
    Y = T Y(1) + (1-T) Y(0).
$ 
The (super-population) average treatment effect $\theta_0 = \mathbb{E}\{Y(1)- Y(0)\}$ is the solution to~\eqref{eq:est}  
where
$
    s(D,\theta) = Y(1) - Y(0) - \theta.
$
There are many choices of conditioning variables $Z$. 
Conditioning on the (unobserved) potential outcomes $Z_i = (Y_i(1),Y_i(0))$, 
$
    \theta_n^{\cond} = \frac{1}{n} \sum_{i=1}^n  \{Y_i(1) - Y_i(0) \},
$ 
characterizes the population where 
potential outcomes of the subjects 
are fixed, a 
common  target in finite-population causal inference \citep{splawa1990application,hinkelmann1994design,freedman2008regression,rosenbaum2010design,Imbens2015}, where 
only the randomness in treatment assignment is accounted for. 
It can also be sensible to condition on covariates and average over measurement noise, leading to 
$
\theta_n^{\cond} = \frac{1}{n} \sum_{i=1}^n \EE \{ Y_i(1) - Y_i(0)\given X_i \} = \frac{1}{n}\sum_{i=1}^n \tau(X_i),
$ 
which is
the best prediction for the treatment effects 
of the population given $\{X_i\}_{i=1}^n$. 
Here $\tau(x)=\EE\{Y(1)-Y(0)\given X=x\}$ is the 
conditional average treatment effect (CATE) that indicates
treatment effect heterogeneity on the covariate level. 
Finally, conditioning on the empty set gives $\theta_0$
for the super-population.  
\end{example}

\subsection{Deferred simulation results}
\label{app:subsec_simu}

In our simulation studies in Section~\ref{sec:simu},
we compute the ratio of estimated standard deviation 
(i.e., that of confidence interval lengths) 
for conditional inference and super-population inference 
in Figure~\ref{fig:cond_len}. 
\begin{figure}[ht]
    \centering
    \includegraphics[width=4in]{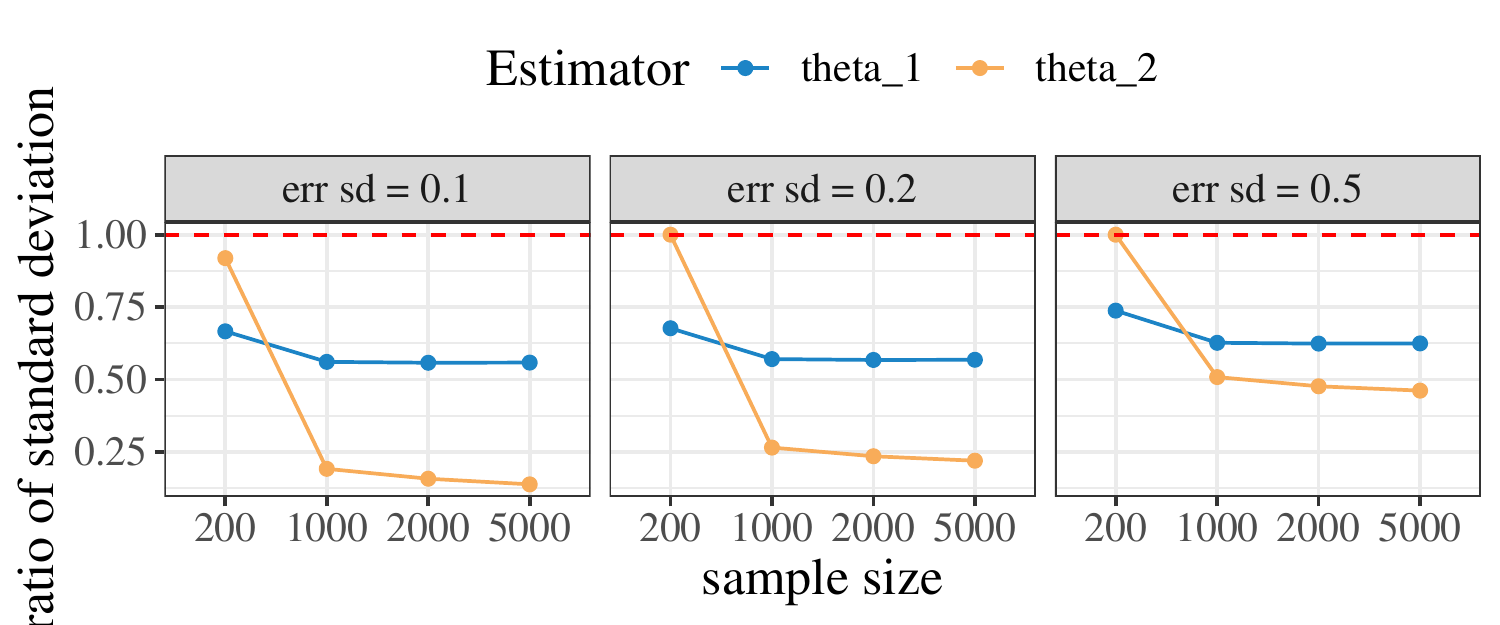}
    \caption{Ratio of estimated standard deviation
    of conditional v.s.~super-population inference. 
    Red dashed lines equal $1$.}
    \label{fig:cond_len}
\end{figure}

\section{Algorithms and convergence guarantees}

\subsection{Algorithms for inference procedures}\label{sec:algo} 
In this section, we describe concrete algorithms for 
estimating 
$\varphi(\cdot)$, $\eta(\cdot)$, $\sigma^2$ and $\sigma_{\shift}^2$. 
Corresponding theory can be found in Section~\ref{subsec:est_consist}. 
Similar to the main text, 
the estimation of variances is discussed for  
the one-dimensional parameters, 
while the arguments 
naturally carry over to the estimation of covariance matrix 
for multi-dimensional influence functions. 
Other quantities like conditional mean functions  
are discussed in the general case 
for multivariate covariates (attributes).

Note that 
the influence functions $\phi(\cdot)$, $\psi(\cdot)$ 
and their corresponding 
conditional mean functions $\varphi(\cdot)$, $\eta(\cdot)$ 
all admit the generic form
\#\label{eq:infl_generic}
f(d) = M(s,w,\theta) s(d,\theta),\quad 
g(z) = M(s,w,\theta) \EE\{s(D_i,\theta)\given Z_i=z\}
\#
for 
some weight function $w(\cdot)$, 
$\theta\in \RR^p$, 
score function $s\colon \mathbb{D}\times \Theta \to \RR^p$ and 
\#\label{eq:M_swtheta}
M(s,w,\theta) = -[\EE\{w(Z_i)\dot{s}(D_i,\theta)\}]^{-1}\in \RR^{p\times p},
\quad (D_i,Z_i)\sim \mathbb{P}.
\# 

Our general recipe is to estimate $M(s,w,\theta)$ 
and $\EE\{s(D_i,\theta)\given Z_i=z\}$ separately 
with plug-in nuisance components. 
We build our procedures 
upon the following two meta algorithms.

\begin{algorithm}
\caption{Meta Algorithm: Estimation of $\mathbb{E}[h(D)|Z=\cdot]$. }\label{alg:meta_cond_reg}
\begin{algorithmic}[1]
  \REQUIRE Function $h(\cdot)\colon \mathbb{D}\to \RR^p$, dataset $\{(D_i,Z_i)\}_{i\in \cI}$ independent of $h(\cdot)$. 
\ENSURE  function $\mathcal{G}(h,\cI)(\cdot)\colon \mathbb{Z}\to \RR^p$.
\end{algorithmic}
\end{algorithm}

\begin{algorithm}
  \caption{Meta Algorithm: Matrix Estimation.}\label{alg:meta_matrix}
\begin{algorithmic}[1]
  \REQUIRE  Score function $s\colon \mathbb{D}\times\Theta \to \RR^p$, weight function $w\colon \mathbb{Z}\to \RR$, $\theta\in \Theta$, data $\{(Z_i,D_i)\}_{i\in \cI}$. 
  \ENSURE  Matrix $\hat{M}(s,w,\theta,\cI) = -\big(\frac{1}{|\cI|}\sum_{i\in \cI} w(Z_i) \dot{s}(D_i,\theta)\big)^{-1} \in \RR^{p\times p}$.  
\end{algorithmic}
\end{algorithm}

\noindent\textit{Estimation for conditional inference
in Section~\ref{sec:cond_inf_single}.}
Recall that in Theorem~\ref{thm:cond_intv}, the only quantity  needed for constructing confidence intervals (besides $\hat \theta$) is
a consistent estimator $\hat\sigma^2$ 
for $\sigma^2=\Var((\phi(D)-\varphi(Z))$.
The estimation with data 
$\{(D_i,Z_i)\}_{i\in \cI}$
is detailed in Algorithm~\ref{alg:sigma_est}. 
Roughly speaking, we first obtain estimators for $\phi(D_i)$, $i\in\cI_2$; 
then  we estimate $\varphi(\cdot) = \mathbb{E}(\phi \given Z = \cdot)$   using only the data in one fold $\cI_1$ 
and apply to  another independent fold $\cI_2$, 
which are finally used to estimate $\sigma^2$. 
The sub-routine of 
estimating $\varphi(\cdot)$ 
is detailed in Algorithm~\ref{alg:varphi_est} 
below. 

\begin{algorithm}
  \caption{Estimate $\sigma^2$.}\label{alg:sigma_est}
\begin{algorithmic}[1]
  \REQUIRE  Dataset $\{(D_i,Z_i)\}_{i\in \cI}$, score function $s\colon \mathbb{D}\times\Theta \to \RR^p$. 
  \STATE  Split indices $\cI$ into equally-sized $\cI_1$ and $\cI_2$.   
  \STATE  Set $\hat{\theta}$ as solution to $\sum_{i\in\cI_1}s(D_i,\theta)=0$. 
  \hfill  \textcolor{gray}{\sl // Estimate $\phi(D_i)$ for $i\in\cI_2$} 
  \STATE  Obtain $\hat{M}:=\hat{M}(s,\mathbf{1},\hat\theta,\cI_2)$ using Algorithm~\ref{alg:meta_matrix}.  
  \STATE  Set $\hat\phi_i = \hat{M} s(D_i,\hat\theta)$ for all $i\in \cI_2$. 
\STATE   Obtain $\hat\varphi(\cdot) = \varphi(s,\cI_1)(\cdot)$ from Algorithm~\ref{alg:varphi_est}.
 \hfill   \textcolor{gray}{\sl // Estimate $\varphi(\cdot)$ with only $\cI_1$}  
  \STATE   Set $\hat\varphi_i = \hat\varphi (Z_i)$ for all $i\in \cI_2$. 
  \hfill \textcolor{gray}{\sl // Apply to $\cI_2$} 
\ENSURE $\hat\sigma^2 = \frac{1}{|\cI_2|}\sum_{i\in \cI_2} (\hat\phi_i - \hat\varphi_i)^2$. 
\end{algorithmic}
\end{algorithm}

\noindent\textit{Estimation for transductive inference in Sections~\ref{sec:trans_iid} and~\ref{sec:known_shift}.}
Transductive inference 
requires a consistent estimator 
for $\sigma_{\shift}^2$ defined in equation~\eqref{eq:def_shift_var}, 
and an estimator 
for $\eta(\cdot)$ only using one fold $\cI_k$. 
For preparation, we describe in Algorithm~\ref{alg:est_w} 
a generic method 
to estimate the covariate shift $w(\cdot)$ 
when it is unknown. 
It is not the only choice; 
there have been a rich literature 
on estimating density ratios, see, e.g.,~\cite{sugiyama2012density}
for a comprehensive review.

\begin{algorithm}
  \caption{Estimate $w(\cdot)$.}\label{alg:est_w}
  \begin{algorithmic}[1]
    \REQUIRE Datasets $\{ Z_i \}_{i\in \cI}$, $\{ Z_j^\new \}_{j\in\cI^\new}$. 
    \STATE 
    Pool $\cI,\cI^\new$ together, 
    and set $T_i=0$ for $i\in \cI$ and $T_j=1$ for $j\in \cI^\new$. 
    \STATE
    Estimate $\hat{e}(z) = \hat\PP(T=1 \given Z=z)$ using pooled data 
    by any regression or classification algorithm.  
  \ENSURE function $\hat{w}(\cdot) = \frac{\hat{e}(\cdot)}{1-\hat{e}(\cdot)}\frac{|\cI|}{|\cI^\new|}\colon \mathbb{Z}\to \RR$. 
  \end{algorithmic}
  \end{algorithm}

In Algorithm~\ref{alg:eta_est}, 
we describe in details the 
estimation of $\eta(\cdot)$ 
using any fold $\cI$ and $\cI^\new$. 
Note that when $w(\cdot)$ is known, 
it is directly used 
to construct $\hat\theta_{m,n}^\trans$ 
for
Theorem~\ref{thm:transfer}, 
so that  $\cI^\new$ is in fact not used. 
Otherwise, 
we set aside a part  of $\cI$ to estimate it 
and construct $\hat\theta_{m,n}^{\trans,\shift}$ 
in 
Theorem~\ref{thm:est_cov_shift}.  

\begin{algorithm}
  \caption{Estimate $\eta(\cdot)$.}\label{alg:eta_est}
  \begin{algorithmic}[1]
    \REQUIRE Datasets $\{(D_i,Z_i)\}_{i\in \cI}$, $\{(D_j^\new,Z_j^\new)\}_{j\in\cI^\new}$,  score $s\colon \mathbb{D}\times\Theta \to \RR^p$, weight $w\colon \mathbb{Z}\to \RR$. 
    \STATE  Split indices $\cI$ into equally-sized $\cI_1$, $\cI_2$ and $\cI_3$.    
    \IF{  $w$ is given}
    \STATE  Set $\hat{w}=w$;  
    \hfill \textcolor{gray}{\sl // Obtain weight function} 
    \ELSE 
    \STATE Estimate weight function $\hat{w}(\cdot)\colon \mathbb{Z} \to \RR$ with $\cI_1$ and $\cI^\new$. 
    \ENDIF
    \STATE  Set $\hat{\theta}$ as solution to $\sum_{i\in\cI_2}\hat{w}(Z_i)s(D_i,\theta)=0$.  
    \hfill \textcolor{gray}{\sl // Estimate $ \theta_0^\new $} 
    \STATE  Obtain $\hat{M} := \hat{M}(s,\hat{w},\hat\theta,\cI_3)$ using Algorithm~\ref{alg:meta_matrix}.  
    \hfill \textcolor{gray}{\sl // Estimate $M(s,w,\theta_0^\new)$}  
    \STATE 
    Set $\hat{s}(\cdot) = s(\cdot,\hat\theta)\colon \mathbb{D} \to \RR^p$.
    \hfill \textcolor{gray}{\sl // Estimate $\EE[s(D,\theta_0^\new)\given Z=\cdot]$} 
    \STATE Obtain
    $\hat{t}(\cdot) := \cG(\hat{s},\cI_3)(\cdot)\colon \mathbb{Z} \to \RR^p$ using Algorithm~\ref{alg:meta_cond_reg}.  
  \ENSURE function $\eta(s,w,\cI,\cI^\new )(\cdot)\colon \mathbb{Z} \to \RR^p$, 
  where $\hat\eta(z) = \hat{M}\hat{t}(z)$.
  \end{algorithmic}
  \end{algorithm}

The estimation of $\sigma_{\shift}^2=\Var(w(Z)(\phi(D)-\varphi(Z))$ 
is described in Algorithm~\ref{alg:sigma_shift_est}. 
After sample splitting, we first estimate $\psi(D_i)$ for $i\in \cI_3$; 
then we 
use only $\cI_1,\cI_2$ to estimate $\eta(\cdot)$
and apply to estimate $\eta(Z_i)$, $i\in \cI_3$, 
which are  used to estimate $\sigma_{\shift}^2$.

\begin{algorithm}
  \caption{Estimate $\sigma_{\shift}^2$.}\label{alg:sigma_shift_est}
\begin{algorithmic}[1]
  \REQUIRE Datasets $\{(D_i,Z_i)\}_{i\in \cI}$, $\{(D_j^\new,Z_j^\new)\}_{j\in\cI^\new}$,  score function $s\colon \cD\times\Theta \to \RR^p$, weight function $w\colon \cZ\to \RR$. 
  \STATE  Split indices $\cI$ into equally-sized $\cI_1$, $\cI_2$ and $\cI_3$.  
  \IF{  $w$ is given}
  \STATE set $\hat{w}=w$; 
  \hfill \textcolor{gray}{\sl // Obtain weight function} 
  \ELSE 
  \STATE Estimate weight function $\hat{w}(\cdot)\colon \cZ\to \RR$ with $\cI_1$ and $\cI^\new$.  
  \ENDIF
  \STATE  Set $\hat{\theta}$ as solution to $\sum_{i\in\cI_2}\hat{w}(Z_i)s(D_i,\theta)=0$.   
  \hfill \textcolor{gray}{\sl // Estimate $\psi(D_i)$ for $i\in \cI_3$} 
\STATE  Obtain $\hat{M}:=\hat{M}(s,\hat{w},\hat\theta, \cI_3)$ using Algorithm~\ref{alg:meta_matrix}. 
\STATE  Set $\hat\psi_i = \hat{M}  s(D_i,\hat\theta)$ for all $i\in  \cI_3$.  
\STATE  Obtain $\hat\eta = \eta(s,\hat{w},\cI_2,\emptyset)(\cdot)$ from Algorithm~\ref{alg:eta_est}. 
\hfill  \textcolor{gray}{\sl // Estimate $\eta(\cdot)$ using only $\hat{w}$ and $\cI_2$}  
   \STATE Set $\hat\eta_i = \hat\eta (Z_i)$ for all $i\in \cI_3$. 
  \hfill \textcolor{gray}{\sl // Apply to $\cI_3$} 
\ENSURE $\hat\sigma_{\shift}^2 = \frac{1}{| \cI_3|}\sum_{i\in  \cI_3} \hat{w}(Z_i)^2(\hat\psi_i - \hat\eta_i)^2$.
\end{algorithmic}
\end{algorithm}


We note that the estimation of $\varphi(\cdot)=- \big[\EE\{\dot{s}(D ,\theta_0)\} \big]^{-1} \EE\{s(D ,\theta_0)\given Z=\cdot\}$
with data $\{(D_i,Z_i)\}_{i\in \cI}$ 
could be viewed as a special case 
of Algorithm~\ref{alg:eta_est} by taking 
$w(z)\equiv 1$. Nevertheless, 
we include a stand-alone algorithm here 
for convenience of reference.  
For notational convenience, 
we define $\mathbf{1}(z)\equiv 1$. 

\begin{algorithm}
  \caption{Estimate $\varphi(\cdot)$.}\label{alg:varphi_est}
\begin{algorithmic}[1]
  \REQUIRE Dataset $\{(D_i,Z_i)\}_{i\in \cI}$, score function $s\colon \mathbb{D}\times\Theta \to \RR^p$. 
  \STATE  Split indices $\cI$ into equally-sized $\cI_1$ and $\cI_2$.    
  \STATE  Set $\hat{\theta}$ as solution to $\sum_{i\in\cI_1}s(D_i,\theta)=0$.   
  \STATE  Obtain $\hat{M} := \hat{M}(s,\mathbf{1},\hat\theta,\cI_2)$ using Algorithm~\ref{alg:meta_matrix}.    
  \STATE   
  Set $\hat{s}(\cdot) = s(\cdot,\hat\theta)\colon \mathbb{D}\to \RR^p$. 
  \STATE Obtain
  $\hat{t}(\cdot) := \cG(\hat{s},\cI_2)(\cdot)\colon \mathbb{Z}\to \RR^p$ using Algorithm~\ref{alg:meta_cond_reg}.  
\qquad 
\ENSURE Function $ \varphi(s,\cI)(\cdot)=\hat{M}\hat{t}(\cdot)\colon \mathbb{Z}\to \RR^p$. 
\end{algorithmic}
\end{algorithm}


\subsection{Estimation guarantees}
\label{subsec:est_consist}

In this section, 
we provide estimation guarantees 
for algorithms in Section~\ref{sec:algo} 
with explicit and detailed conditions.
We state conditions  
for parameters in $\RR^p$ for generality;  
the targets are still the variance estimation 
for one-dimensional parameters,  
which can be generalized to covariance matrix estimation 
under the same conditions we state.

We begin with generic assumptions on  
the meta 
Algorithms~\ref{alg:meta_cond_reg} and~\ref{alg:meta_matrix}. 
For any function $f(\cdot)$, we let $\cG(f)(z)=\EE[f(D)\given Z=z]$ 
be the conditional mean function, viewing $f$ as fixed; 
also, recall that $\cG(f,\cI)$ is the output 
of Algorithm~\ref{alg:meta_cond_reg} 
using data $\cI$. 
Also, recall that $\hat{M}(s,w,\theta,\cI)$ is the 
output of Algorithm~\ref{alg:meta_matrix} using data $\cI$ 
and $M(s,w,\theta)$ in~\eqref{eq:M_swtheta}  
is its estimation target.

\begin{assumption}\label{assump:meta_reg}
For any fixed input function $f$ and dataset $\cI$, 
the output of Algorithm~\ref{alg:meta_cond_reg} satisfies  that 
$\|\cG(f,\cI)(\cdot) - \cG(f)(\cdot)\|_{L_2(\mathbb{P})} =O_P\{\cR_{r}(|\cI|)\}$
for some rate function $\cR_{r}(\cdot)\colon \mathbb{N}\to \RR^+$. 
\end{assumption}

\begin{assumption}\label{assump:meta_matrix}
For any fixed input $w,\theta$ and dataset $\cI$,  
the output of Algorithm~\ref{alg:meta_matrix} satisfies  
$\|\hat{M}(s,w,\theta,\cI) - M(s,w,\theta)\|_{\infty} =O_P\{\cR_{m}(|\cI|)\}$
for some rate function $\cR_{m}(\cdot)\colon \mathbb{N}\to \RR^+$, 
where $\|\cdot\|_\infty$ is the entry-wise maximum. 
\end{assumption}


The above assumption on the convergence rate holds for a couple of nonparametric regression methods if the input $f(\cdot)$, viewed as a fixed function, is sufficiently smooth. 
For example, localized nonparametric methods like kernel regression \citep{nadaraya1964estimating,watson1964smooth}, local polynomial regression \citep{cleveland1979robust,cleveland1988locally}, smoothing spline \citep{green1993nonparametric} and modern machine learning methods including regression trees \citep{Breiman:decisionTree} and random forests \citep{ho1995random}, to name a few.

To show consistency of $\hat\sigma^2$ 
from Algorithm~\ref{alg:sigma_est}, 
we additionally assume the targets are stable. 
\begin{assumption}\label{assump:stable}
The matrix $M(s,w,\theta)$
satisfies that 
$\| M(s,\mathbf{1},\theta)-M(s,\mathbf{1},\theta')\|_{\infty} = O(\|\theta-\theta'\|_2)$
and 
$\|M(s,w,\theta)-M(s,w',\theta )\|_{\infty} = O(\|w(z)-w'(z)\|_{L_2(\mathbb{P})})$ 
for any weight functions $w,w'$ and 
any $\theta,\theta'\in \Theta$. 
Also, 
$\|s(\cdot,\theta)-s(\cdot,\theta)\|_{L_2(\mathbb{P})} = O(\|\theta-\theta'\|_2)$ 
for any $\theta,\theta'\in \Theta$.
\end{assumption}

We show that $\hat\sigma^2$, the output of Algorithm~\ref{alg:sigma_est},
is consistent  
if the two generic meta algorithms  
have diminishing estimation error 
and  the target functions are stable. 
The proof of Proposition~\ref{prop:consist_sigma} is in 
Appendix~\ref{app:est_sigma}. 

\begin{proposition}[Consistency of $\hat\sigma^2$]
\label{prop:consist_sigma}
Suppose Assumptions~\ref{assump:meta_reg},~\ref{assump:meta_matrix}
and~\ref{assump:stable} hold, and 
the regularity conditions of Proposition~\ref{prop:linear_exp} 
hold for $\hat\theta$ in Algorithm~\ref{alg:sigma_est}. 
Also assume  
$\cR_m(n)$, $\cR_r(n)\to 0$ as $n\to \infty$. 
Then the output of Algorithm~\ref{alg:sigma_est} 
satisfies $\hat\sigma^2 \to \sigma^2$ in probability as $|\cI|\to \infty$. 
\end{proposition}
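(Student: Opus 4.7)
Write $\theta_0$ for the super-population solution to~\eqref{eq:est}, set $M^\star = M(s,\mathbf{1},\theta_0)$ and $t_0(z) = \EE\{s(D,\theta_0)\given Z=z\}$, so that the targets factor as $\phi(d) = M^\star s(d,\theta_0)$ and $\varphi(z) = M^\star t_0(z)$. Let $\hat\theta$ be the preliminary estimator computed on $\cI_1$ inside Algorithm~\ref{alg:sigma_est} and let $\hat M$ be the matrix produced by Algorithm~\ref{alg:meta_matrix} on $\cI_2$. The central device is sample splitting: $\cI_1$ and $\cI_2$ are disjoint (hence independent) subsamples of the i.i.d.\ data, so whenever Assumption~\ref{assump:meta_reg} or \ref{assump:meta_matrix} is invoked (both stated for deterministic inputs), I first condition on the fold that produced the plug-in $\hat\theta$ to turn $s(\cdot,\hat\theta)$ into a fixed function, apply the assumption on the fresh fold, and then transfer the conditional-in-probability statement to an unconditional one via dominated convergence.

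\textbf{Convergence of $\hat\theta$, $\hat M$ and $\hat\varphi$.} Consistency $\hat\theta \to \theta_0$ in probability follows from standard Z-estimator arguments under the regularity conditions of Proposition~\ref{prop:linear_exp} (compactness, twice-differentiability, dominating $g$, non-singular $\EE\{\dot s(D,\theta_0)\}$). Conditional on $\hat\theta$, Assumption~\ref{assump:meta_matrix} gives $\|\hat M - M(s,\mathbf{1},\hat\theta)\|_{\infty} = O_P(\cR_m(|\cI_2|)) = o_P(1)$, which combined with the Lipschitz bound $\|M(s,\mathbf{1},\hat\theta) - M^\star\|_{\infty} = O(\|\hat\theta-\theta_0\|)$ from Assumption~\ref{assump:stable} yields $\hat M \to M^\star$ in probability. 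Applied to the internal split of Algorithm~\ref{alg:varphi_est}, the same two-step argument shows that the matrix $\hat M^{(1)}$ built inside $\hat\varphi$ satisfies $\hat M^{(1)} \to M^\star$. For the regression piece $\hat t^{(1)}$, conditioning on the internal preliminary $\hat\theta^{(1)}$ and applying Assumption~\ref{assump:meta_reg} to the now-fixed function $s(\cdot,\hat\theta^{(1)})$ yields $\|\hat t^{(1)}(\cdot) - \EE[s(D,\hat\theta^{(1)})\given Z=\cdot]\|_{L_2(\mathbb{P})} = O_P(\cR_r(|\cI_{1,b}|)) = o_P(1)$. Because conditional expectation is an $L_2$-contraction, Assumption~\ref{assump:stable} gives $\|\EE[s(D,\hat\theta^{(1)})\given Z=\cdot] - t_0(\cdot)\|_{L_2(\mathbb{P})} = O(\|\hat\theta^{(1)}-\theta_0\|)=o_P(1)$, so $\hat t^{(1)} \to t_0$ in $L_2(\mathbb{P})$. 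A triangle inequality then gives $\|\hat\varphi - \varphi\|_{L_2(\mathbb{P})} = o_P(1)$.

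\textbf{Consistency of $\hat\sigma^2$.} Write $\xi_i = \phi(D_i) - \varphi(Z_i)$, $\hat\xi_i = \hat\phi_i - \hat\varphi_i$, and $\tilde\sigma^2 = |\cI_2|^{-1}\sum_{i\in\cI_2} \xi_i^2$. The weak law of large numbers gives $\tilde\sigma^2 \to \sigma^2$ in probability (finiteness of $\EE[\xi_i^2] = \sigma^2$ is implicit). Expanding $\hat\sigma^2 - \tilde\sigma^2 = |\cI_2|^{-1}\sum(\hat\xi_i-\xi_i)^2 + 2|\cI_2|^{-1}\sum(\hat\xi_i-\xi_i)\xi_i$ and applying Cauchy--Schwarz, it suffices to show $|\cI_2|^{-1}\sum(\hat\xi_i-\xi_i)^2 = o_P(1)$. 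Decompose $\hat\phi_i - \phi(D_i) = (\hat M-M^\star)s(D_i,\hat\theta) + M^\star\{s(D_i,\hat\theta)-s(D_i,\theta_0)\}$; the empirical second moment of the first summand is bounded by $\|\hat M - M^\star\|_{\infty}^2 \cdot O_P(1)$ (the $O_P(1)$ coming from $\EE[\|s(D,\theta_0)\|^2]<\infty$ and the $L_2$-Lipschitz property of $s(\cdot,\theta)$), and that of the second is $o_P(1)$ by Markov conditional on $\hat\theta$ combined with Assumption~\ref{assump:stable}. Similarly, the $Z_i$ in $\cI_2$ are i.i.d.\ and independent of $\hat\varphi$ conditional on $\cI_1$, so $\EE[(\hat\varphi(Z_i)-\varphi(Z_i))^2\given \cI_1] = \|\hat\varphi-\varphi\|_{L_2(\mathbb{P})}^2 = o_P(1)$ from the previous paragraph, and Markov gives the empirical analogue. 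The one delicate point throughout is precisely handling the plug-in $\hat\theta$ inside a nonparametric regression whose rate is stated only for fixed inputs; this is resolved uniformly by the cross-fitting architecture built into Algorithms~\ref{alg:sigma_est} and~\ref{alg:varphi_est}.
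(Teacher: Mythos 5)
Your proof is correct and follows essentially the same route as the paper's: sample splitting plus conditioning on the fold that produced $\hat\theta$ to apply Assumptions~\ref{assump:meta_reg} and~\ref{assump:meta_matrix} to a fixed input, the stability conditions to absorb the plug-in error $\|\hat\theta-\theta_0\|$, the $L_2(\mathbb{P})$-consistency of $\hat\varphi$ (which is exactly Proposition~\ref{prop:consist_hat_varphi}), and a final Cauchy--Schwarz expansion of the empirical variance around $|\cI_2|^{-1}\sum_{i\in\cI_2}\{\phi(D_i)-\varphi(Z_i)\}^2$. The only differences are cosmetic (which factor of $\hat M s(D_i,\hat\theta)-M^\star s(D_i,\theta_0)$ carries the hat in the cross-decomposition, and that the paper packages your ``conditional Markov plus transfer'' step as Lemma~\ref{lem:cond_to_op}).
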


In Theorem~\ref{thm:transfer}, 
we only need the $L_2$-consistency for 
the estimation of $\eta(\cdot)$, 
as well as a consistent estimator 
for $\sigma_\shift^2$. 
In Theorem~\ref{thm:est_cov_shift}, 
we further need the convergence rate for 
estimating $\eta(\cdot)$. 
We analyze $\hat\eta(\cdot)$, 
the output of Algorithm~\ref{alg:eta_est}, 
under 
generic rates of the meta algorithms as follows. 
The proof of Proposition~\ref{prop:consist_eta} 
is in Appendix~\ref{app:est_eta}. 
\begin{proposition}[Convergence rate of $\hat\eta$]
\label{prop:consist_eta}
Suppose Assumptions~\ref{assump:meta_reg},~\ref{assump:meta_matrix},~\ref{assump:stable}  
and the regularity conditions in Propositions~\ref{prop:main_new_lin_exp}
and~\ref{prop:linear_hat_theta_est} hold.  
Let $\cI$, $\cI^\new$ be any inputs of Algorithm~\ref{alg:eta_est}. 
If $w(\cdot)$ is known, the output of Algorithm~\ref{alg:eta_est} satisfies 
\#\label{eq:consist_eta_w}
\big\|\hat\eta(\cdot) - \eta(\cdot) \big\|_{L_2(\mathbb{P})}\leq
p\cdot O_P\big\{\cR_m(|\cI|) + \mathcal{R}_r(|\cI|) + |\cI|^{-1/2} \big\}.
\#
If $\hat{w}(\cdot)$ is estimated, assume 
$\sup_z|\hat{w}(z)-w(z)| = o_P(1)$ and 
the regularity conditions 
in Proposition~\ref{prop:linear_hat_theta_est} 
also hold for $\hat\theta$. 
Then the output of Algorithm~\ref{alg:eta_est} satisfies 
\$
\big\|\hat\eta(\cdot) - \eta(\cdot) \big\|_{L_2(\mathbb{P})}\leq
p\cdot O_P\big\{\|\hat{w}(\cdot)-w(\cdot )\|_{L_2(\mathbb{P})} + \cR_m(|\cI|) + \mathcal{R}_r(|\cI|) + |\cI|^{-1/2} \big\}.
\$
As a direct implication, 
Assumption~\ref{assump:cov_est_rate}
holds if
\$
\|\hat{w}(\cdot)-w(\cdot )\|_{L_2(\mathbb{P})} = O_P(n^{-1/4}) \quad 
\text{and} \quad \cR_m(n) + \mathcal{R}_r(n) = O_P(n^{-1/4}).
\$
\end{proposition}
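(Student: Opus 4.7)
\textbf{Proof plan for Proposition~\ref{prop:consist_eta}.}
The plan is to decompose the error $\hat\eta(z)-\eta(z)=\hat M\hat t(z)-M t(z)$ with $M=M(s,w,\theta_0^\new)$ and $t(z)=\EE[s(D,\theta_0^\new)\mid Z=z]$ via the triangle inequality
\[
\hat\eta(z)-\eta(z)=\underbrace{(\hat M-M)\hat t(z)}_{\text{(I) matrix error}} \;+\; \underbrace{M\bigl(\hat t(z)-t(z)\bigr)}_{\text{(II) regression error}},
\]
and bound each piece in $L_2(\mathbb{P})$. The crucial structural feature I would exploit is the three-fold splitting inside Algorithm~\ref{alg:eta_est}: $\hat w$ uses $\cI_1$, $\hat\theta$ uses $\cI_2$, and both $\hat M$ and $\hat t$ are built on $\cI_3$. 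Hence, conditional on $\cI_1\cup\cI_2$, the quantities $\hat w$ and $\hat\theta$ act as fixed inputs, which is precisely what is needed to invoke the generic rates in Assumptions~\ref{assump:meta_reg} and~\ref{assump:meta_matrix}.

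For term (I), I would further split
\[
\hat M - M \;=\; \bigl[\hat M(s,\hat w,\hat\theta,\cI_3)-M(s,\hat w,\hat\theta)\bigr] + \bigl[M(s,\hat w,\hat\theta)-M(s,w,\hat\theta)\bigr] + \bigl[M(s,w,\hat\theta)-M(s,w,\theta_0^\new)\bigr].
\]
Conditioning on $\cI_1\cup\cI_2$, the first bracket is $O_P\{\cR_m(|\cI_3|)\}$ in $\|\cdot\|_\infty$ by Assumption~\ref{assump:meta_matrix}; the second is $O_P(\|\hat w-w\|_{L_2(\mathbb{P})})$ and the third is $O_P(\|\hat\theta-\theta_0^\new\|_2)=O_P(|\cI|^{-1/2})$ by the two Lipschitz statements in Assumption~\ref{assump:stable}, where the $|\cI|^{-1/2}$ rate for $\hat\theta$ comes from the asymptotic linear expansion guaranteed by Proposition~\ref{prop:linear_hat_theta_est}. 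Converting the entry-wise $\|\cdot\|_\infty$ bound to an operator-norm bound costs a factor of $p$, and $\|\hat t\|_{L_2(\mathbb{P})}$ is bounded in probability since it converges to $\|t\|_{L_2(\mathbb{P})}<\infty$ (this follows from the analysis of (II) below together with finite second moments of $s(D,\theta_0^\new)$).

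For term (II), I would write $\hat t(z)-t(z)=\bigl[\cG(\hat s,\cI_3)(z)-\cG(\hat s)(z)\bigr]+\bigl[\cG(\hat s)(z)-\cG(s(\cdot,\theta_0^\new))(z)\bigr]$, where $\hat s(\cdot)=s(\cdot,\hat\theta)$ and $\cG(f)(z)=\EE[f(D)\mid Z=z]$. Conditional on $\cI_1\cup\cI_2$, the function $\hat s$ is fixed, so Assumption~\ref{assump:meta_reg} gives the first bracket $O_P\{\cR_r(|\cI_3|)\}$ in $L_2(\mathbb{P})$. For the second bracket, Jensen's inequality (the conditional expectation is an $L_2(\mathbb{P})$-contraction) together with the Lipschitz bound $\|s(\cdot,\hat\theta)-s(\cdot,\theta_0^\new)\|_{L_2(\mathbb{P})}=O_P(\|\hat\theta-\theta_0^\new\|_2)=O_P(|\cI|^{-1/2})$ from Assumption~\ref{assump:stable} controls it at the same rate. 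Multiplying by the fixed matrix $M$ (which has bounded operator norm) and collecting vector-norm factors of $p$, term (II) is $p\cdot O_P\{\cR_r(|\cI|)+|\cI|^{-1/2}\}$.

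Summing (I) and (II) yields the announced rate, with the extra $\|\hat w-w\|_{L_2(\mathbb{P})}$ term appearing only in the estimated-weight case; the final implication for Assumption~\ref{assump:cov_est_rate} then follows from multiplying the $n^{-1/4}$ rates. The main technical obstacle is the careful conditioning argument needed to treat $\hat w$ and $\hat\theta$ as deterministic when invoking the generic meta-algorithm rates; once this conditional-on-$\cI_1\cup\cI_2$ bookkeeping is in place, everything else reduces to combining the Lipschitz stability in Assumption~\ref{assump:stable} with the $|\cI|^{-1/2}$ rate for $\hat\theta$ from Proposition~\ref{prop:linear_hat_theta_est} (and, when $w$ is estimated, the assumed sup-norm consistency of $\hat w$, which implies $L_2(\mathbb{P})$ consistency).
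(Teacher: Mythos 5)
Your proposal is correct and follows essentially the same route as the paper's proof: the same decomposition $\hat\eta-\eta=(\hat M-M)\hat t+M(\hat t-\cG(s))$, the same three-term split of the matrix error (meta-algorithm rate conditional on the earlier folds, stability in $w$, stability in $\theta$ with the $|\cI|^{-1/2}$ rate for $\hat\theta$), and the same two-term split of the regression error using Assumption~\ref{assump:meta_reg} plus the $L_2$-contraction property of conditional expectation. The conditioning-on-$\cI_1\cup\cI_2$ bookkeeping you emphasize is exactly how the paper justifies treating $\hat w$ and $\hat\theta$ as fixed inputs to the meta algorithms.
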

Note that in Algorithm~\ref{alg:eta_est}, $\cI^\new$ 
is only possibly used to estimate $w(\cdot)$. 
Consequently, 
the convergence rate of $\hat\eta$ depends on $\cI^\new$ only
through $\|\hat{w}(\cdot)-w(\cdot)\|_{L_2(\mathbb{P})}$.

The output $\hat\sigma_{\shift}^2$ 
of Algorithm~\ref{alg:sigma_shift_est}
is analyzed as follows, 
whose proof is in 
Appendix~\ref{app:est_sigma_shift}.  

\begin{proposition}[Consistency of $\hat\sigma_{\shift}^2$]\label{prop:est_sigma_shift}
Let $\cI$, $\cI^\new$ be any inputs of Algorithm~\ref{alg:sigma_shift_est}. 
Suppose Assumptions~\ref{assump:meta_reg},~\ref{assump:meta_matrix} 
and~\ref{assump:stable} hold, and 
the regularity conditions in Proposition~\ref{prop:linear_hat_theta_est}
hold for $\hat\theta$ in Algorithm~\ref{alg:sigma_shift_est}. 
Assume
$\cR_m(n)\to 0$ and $\cR_r(n)\to 0$ as $n\to \infty$. 
If $\sup_{z}|\hat{w}(z)-w(z)|=o_P(1)$, $\sup_z|w(z)|<\infty$, 
then 
the output of Algorithm~\ref{alg:sigma_shift_est} obeys 
$\hat\sigma_{\shift}^2 \to \sigma_{\shift}^2$ in probability 
as $|\cI|\to \infty$. 
\end{proposition}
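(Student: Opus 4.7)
The plan is to decompose $\hat\sigma_\shift^2-\sigma_\shift^2$ into an empirical-process fluctuation and a plug-in error, and then exploit the sample-splitting structure of Algorithm~\ref{alg:sigma_shift_est} to reduce the latter to $L_2(\mathbb{P})$ bounds on the nuisance errors. First I would observe that $\sigma_\shift^2$ equals the raw second moment $\EE[w(Z)^2(\psi(D)-\eta(Z))^2]$: indeed $\EE[w(Z)s(D,\theta_0^\new)]=\EE_\mathbb{Q}[s(D^\new,\theta_0^\new)]=0$ by definition of $\theta_0^\new$, so $\EE[w(Z)\psi(D)]=0$, and similarly $\EE[w(Z)\eta(Z)]=\EE_\mathbb{Q}[\eta(Z)]=0$. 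This lets me compare $\hat\sigma_\shift^2$, a plug-in second moment, directly with $\sigma_\shift^2$.

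Let $b_i=w(Z_i)(\psi(D_i)-\eta(Z_i))$ and $a_i=\hat w(Z_i)(\hat\psi_i-\hat\eta_i)$ with $\hat\psi_i=\hat M s(D_i,\hat\theta)$, so that $\hat\sigma_\shift^2-\sigma_\shift^2=A_n+B_n$ where $A_n=|\cI_3|^{-1}\sum_{\cI_3}(a_i^2-b_i^2)$ and $B_n=|\cI_3|^{-1}\sum_{\cI_3}b_i^2-\sigma_\shift^2$. Since $w$ is bounded and (by the regularity used in Proposition~\ref{prop:linear_hat_theta_est} and carried into the statement) $\EE[\psi(D)^4 w(Z)^4]<\infty$, the plain WLLN gives $B_n=o_P(1)$. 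For $A_n$, I would use $a^2-b^2=(a-b)(a+b)$ and Cauchy--Schwarz to get $|A_n|\le\bigl(|\cI_3|^{-1}\sum(a_i+b_i)^2\bigr)^{1/2}\bigl(|\cI_3|^{-1}\sum(a_i-b_i)^2\bigr)^{1/2}$, so everything reduces to bounding the two empirical second moments.

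The key step is the second factor, where I decompose
$$a_i-b_i=\bigl(\hat w(Z_i)-w(Z_i)\bigr)(\hat\psi_i-\hat\eta_i)+w(Z_i)\bigl(\hat\psi_i-\psi(D_i)\bigr)-w(Z_i)\bigl(\hat\eta_i-\eta(Z_i)\bigr).$$
Because $\hat w,\hat\theta,\hat M,\hat\eta$ are measurable w.r.t.\ $\cI_1\cup\cI_2$ (and $\cI^\new$), while $\{(D_i,Z_i)\}_{i\in\cI_3}$ is i.i.d.\ from $\mathbb{P}$ and independent of them, I can compute the conditional expectation of each squared piece: the first conditional second moment is $\le\sup_z(\hat w-w)^2\cdot\EE[(\hat\psi-\hat\eta)^2\,|\,\text{nuisances}]$, which is $o_P(1)\cdot O_P(1)$ by the sup-norm consistency of $\hat w$. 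For the second piece I split $\hat\psi-\psi=(\hat M-M)s(D,\hat\theta)+M\bigl(s(D,\hat\theta)-s(D,\theta_0^\new)\bigr)$, and use Proposition~\ref{prop:linear_hat_theta_est} to get $\hat\theta\to\theta_0^\new$, Assumption~\ref{assump:meta_matrix} for $\hat M(s,\hat w,\hat\theta,\cI_3)\to M(s,\hat w,\hat\theta)$, and Assumption~\ref{assump:stable} for $M(s,\hat w,\hat\theta)\to M(s,w,\theta_0^\new)$ plus $\|s(\cdot,\hat\theta)-s(\cdot,\theta_0^\new)\|_{L_2(\mathbb{P})}=o_P(1)$; bounded $w$ then yields $o_P(1)$ conditional second moment. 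For the third piece I invoke Proposition~\ref{prop:consist_eta} to get $\|\hat\eta-\eta\|_{L_2(\mathbb{P})}=o_P(1)$. Markov's inequality conditional on the nuisances plus the tower property lifts each term to the unconditional empirical average, giving $|\cI_3|^{-1}\sum(a_i-b_i)^2=o_P(1)$. An analogous, simpler computation shows $|\cI_3|^{-1}\sum(a_i+b_i)^2=O_P(1)$, which closes the argument.

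The main obstacle is establishing the uniform bound $\EE[(\hat\psi-\hat\eta)^2\mid\text{nuisances}]=O_P(1)$, since $\hat\psi=\hat M s(D,\hat\theta)$ involves both an estimated matrix $\hat M$ and a plugged-in parameter $\hat\theta$, either of which could a priori inflate the second moment. Resolving this requires combining the stability of $s(\cdot,\theta)$ in $\theta$ from Assumption~\ref{assump:stable} with a finite second moment of $s(D,\theta_0^\new)$ (inherited from the regularity conditions) and the probabilistic convergence $\hat M\to M(s,w,\theta_0^\new)$ and $\hat\theta\to\theta_0^\new$, so that $\hat M s(D,\hat\theta)$ lives in a tight neighborhood of $M s(D,\theta_0^\new)=-\psi(D)$ in $L_2(\mathbb{P})$ with probability tending to one; the analogous control for $\hat\eta$ is immediate from its $L_2$ consistency. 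Once this tightness is in hand, the sample-splitting trick makes the remaining work routine.
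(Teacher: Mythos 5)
Your proposal is correct and follows essentially the same route as the paper's proof: reduce $\sigma_\shift^2$ to the raw second moment $\EE[w(Z)^2\{\psi(D)-\eta(Z)\}^2]$, split off the LLN fluctuation, and control the plug-in error by conditioning on the nuisance fits, using sup-norm consistency of $\hat w$, the decomposition $\hat\psi-\psi=(\hat M-M)s(D,\hat\theta)+M\{s(D,\hat\theta)-s(D,\theta_0^\new)\}$ with Assumptions~\ref{assump:meta_matrix} and~\ref{assump:stable}, the $L_2$-consistency of $\hat\eta$ from Proposition~\ref{prop:consist_eta}, and a Markov/tower lifting (Lemmas~\ref{lem:op_to_op}--\ref{lem:cond_to_op}). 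One small inaccuracy: in Algorithm~\ref{alg:sigma_shift_est} the matrix $\hat M=\hat M(s,\hat w,\hat\theta,\cI_3)$ is computed \emph{on} $\cI_3$, so it is not $\cI_1\cup\cI_2$-measurable as you assert; the argument still goes through because $\hat M-M=o_P(1)$ is a fixed-dimension factor that can be pulled out of the empirical sum multiplying $|\cI_3|^{-1}\sum s(D_i,\theta_0^\new)^2=O_P(1)$, which is exactly how the paper handles it.
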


Finally, we note that $\varphi(\cdot)$ 
can be estimated with Algorithm~\ref{alg:eta_est} 
by setting $w(z)\equiv 1$. 
For completeness, we include 
the following consistency result for $\hat\varphi(\cdot)$, 
whose proof is in Appendix~\ref{app:est_varphi}. 

\begin{proposition}[Consistency of $\hat\varphi$]
\label{prop:consist_hat_varphi}
Suppose Assumptions~\ref{assump:meta_reg} and~\ref{assump:meta_matrix} hold, 
and the regularity conditions in Proposition~\ref{prop:linear_exp} 
hold for $\hat\theta$. 
Assume $M(s,\mathbf{1},\theta)-M(s,\mathbf{1},\theta')\|_{\infty} = O(\|\theta-\theta'\|_2)$ 
and $\|s(\cdot,\theta)-s(\cdot,\theta')\|_{L_2(\mathbb{P})} = O(\|\theta-\theta'\|_2)$ 
for any $\theta,\theta'\in \Theta$. 
Then the output 
of Algorithm~\ref{alg:eta_est}
with $w(z)\equiv 1$, denoted as 
$\hat\varphi(\cdot)$, 
satisfies 
\$
\big\| \hat\varphi(\cdot) - \varphi(\cdot) \big\|_{L_2(\mathbb{P})}\leq
p\cdot O_P\big\{\cR_m(|\cI|) + \mathcal{R}_r(|\cI|) + |\cI|^{-1/2} \big\}.
\$
\end{proposition}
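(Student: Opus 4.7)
The plan is to decompose the output of Algorithm~\ref{alg:eta_est} (with $w \equiv 1$) in the form $\hat\varphi(z) = \hat{M}\hat{t}(z)$ and compare it to $\varphi(z) = M\,t(z)$, where $M := M(s,\mathbf{1},\theta_0)$ and $t(z) := \mathbb{E}\{s(D,\theta_0)\mid Z=z\}$. First I would use the triangle inequality to split
\[
\hat\varphi(z) - \varphi(z) = (\hat{M} - M)\hat{t}(z) + M\bigl(\hat{t}(z) - t(z)\bigr),
\]
so that $\|\hat\varphi-\varphi\|_{L_2(\mathbb{P})} \leq p\,\|\hat M - M\|_\infty\,\|\hat t\|_{L_2(\mathbb{P})} + p\,\|M\|_\infty\,\|\hat t - t\|_{L_2(\mathbb{P})}$ by a standard matrix-vector bound in which the factor $p$ accounts for the dimension. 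Since the regularity conditions of Proposition~\ref{prop:linear_exp} hold, $M$ has bounded entries and $\|t\|_{L_2(\mathbb{P})} < \infty$; I would show $\|\hat t\|_{L_2(\mathbb{P})}$ is $O_P(1)$ as a byproduct of bounding $\|\hat t - t\|_{L_2(\mathbb{P})}$ below.

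Next I would control $\|\hat M - M\|_\infty$. Proposition~\ref{prop:linear_exp} yields $\|\hat\theta - \theta_0\|_2 = O_P(|\cI_1|^{-1/2})$. By the matrix stability hypothesis $\|M(s,\mathbf{1},\hat\theta) - M(s,\mathbf{1},\theta_0)\|_\infty = O_P(|\cI_1|^{-1/2})$, and by Assumption~\ref{assump:meta_matrix} applied conditionally on $\cI_1$ (so that $\hat\theta$ is fixed), $\|\hat{M}(s,\mathbf{1},\hat\theta,\cI_2) - M(s,\mathbf{1},\hat\theta)\|_\infty = O_P(\mathcal{R}_m(|\cI_2|))$. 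Combining gives $\|\hat M - M\|_\infty = O_P(\mathcal{R}_m(|\cI|) + |\cI|^{-1/2})$, using $|\cI_1|\asymp|\cI_2|\asymp|\cI|$.

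For the second term I would decompose
\[
\hat{t}(\cdot) - t(\cdot) = \bigl\{\mathcal{G}(\hat s,\cI_2)(\cdot) - \mathcal{G}(\hat s)(\cdot)\bigr\} + \bigl\{\mathcal{G}(\hat s)(\cdot) - \mathcal{G}(s(\cdot,\theta_0))(\cdot)\bigr\},
\]
where $\hat s(\cdot) = s(\cdot,\hat\theta)$. Conditioning on $\cI_1$ makes $\hat s$ a fixed function, so Assumption~\ref{assump:meta_reg} applied on $\cI_2$ yields the first piece of size $O_P(\mathcal{R}_r(|\cI_2|))$. For the second, the tower property and Jensen's inequality give $\|\mathcal{G}(f)-\mathcal{G}(g)\|_{L_2(\mathbb{P})} \leq \|f-g\|_{L_2(\mathbb{P})}$, and together with the score stability $\|s(\cdot,\hat\theta)-s(\cdot,\theta_0)\|_{L_2(\mathbb{P})} = O_P(\|\hat\theta-\theta_0\|_2) = O_P(|\cI_1|^{-1/2})$, this contributes $O_P(|\cI|^{-1/2})$. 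Thus $\|\hat t - t\|_{L_2(\mathbb{P})} = O_P(\mathcal{R}_r(|\cI|) + |\cI|^{-1/2})$, which also implies $\|\hat t\|_{L_2(\mathbb{P})} = O_P(1)$ once the rates are $o_P(1)$.

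Putting the three bounds together delivers $\|\hat\varphi - \varphi\|_{L_2(\mathbb{P})} = p\cdot O_P(\mathcal{R}_m(|\cI|) + \mathcal{R}_r(|\cI|) + |\cI|^{-1/2})$. The main technical obstacle is the two-fold dependence structure: the function $\hat s$ fed to the meta regression depends on $\cI_1$, yet Assumption~\ref{assump:meta_reg} is stated for deterministic inputs. The sample splitting in Algorithm~\ref{alg:eta_est} resolves this cleanly by a conditioning argument on $\cI_1$, and an analogous conditioning is what justifies invoking Assumption~\ref{assump:meta_matrix} at the random argument $\hat\theta$. Beyond this, the proof is a routine triangle-inequality assembly that mirrors (and is slightly simpler than) the argument for Proposition~\ref{prop:consist_eta}.
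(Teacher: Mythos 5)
Your proposal is correct and follows essentially the same route as the paper's proof: the same decomposition $\hat\varphi-\varphi=(\hat M-M)\hat t+M(\hat t-t)$, the same two-stage bounds on $\|\hat M-M\|_\infty$ and $\|\hat t-t\|_{L_2(\mathbb{P})}$ via the stability hypotheses and the meta-algorithm rates, and the same conditioning-on-$\cI_1$ argument to legitimize applying Assumptions~\ref{assump:meta_reg} and~\ref{assump:meta_matrix} at the data-dependent $\hat\theta$ and $\hat s$.
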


\section{Extensions}

\subsection{Fixed-attributes results}
\label{app:fix}

In this section, we provide a set of 
results for fixed attributes, i.e., 
the attributes $\{z_i\}_{i=1}^n$ are fixed 
a priori and not drawn i.i.d.~from a super-population. 
These results generalize the counterparts in 
Section~\ref{sec:cond_inf_single}. 

We start by describing the setup. 
Let $\{z_i\}_{i=1}^n$ be the fixed attributes of $n$ 
units. We assume the observed dataset $\{D_i\}_{i=1}^n$ 
are mutually independent with 
$D_i\sim \mathbb{P}_{D\given Z=z_i}$. 
Our target is still the parameter 
that characterizes 
the (conditional) distribution of these fixed units. 
For ease of illustration, we 
use the language of M-estimators. 
Let $\ell\colon \mathbb{D}\times \Theta\to \RR$ be a loss function, 
and 
define the conditional parameter as 
\$
 \theta_n^\cond = \argmin_{\theta\in \Theta\subset \RR^{p}}\, L_n(\theta), \quad  {L}_n(\theta) := \frac{1}{n}\sum_{i=1}^n \EE\big\{\ell(D_i,\theta)\biggiven z_i\big\}.
\$
Then $\theta_n^\cond$ only depends on $\{z_i\}_{i=1}^n$ 
and is fixed. 
Parallel to Section~\ref{sec:cond_inf_single}, 
we assume access to an estimator 
\$
\hat\theta_n =  \argmin_{\theta\in \Theta\subset \RR^{p}}\, \hat{L}_n(\theta), \quad \hat{L}_n(\theta) := \frac{1}{n} \sum_{i=1}^n \ell(D_i,\theta).
\$
%
We first establish  the asymptotic linearity 
for the deviation of $\hat\theta_n$ from $\theta_n^\cond$, 
and a few conditions for the loss function 
and observations are needed. 
For simplicity, since $\{z_i\}_{i=1}^n$ 
are fixed, all probabilities and expectations 
are then implicitly conditional on $\{z_i\}_{i=1}^n$. 
The proof of Proposition~\ref{prop:linear_fixed}
is in Appendix~\ref{app:subsec_fix}. 
\begin{assumption}\label{assump:fixed}
(i) $\hat\theta_n$ and $\theta_n^\cond$ 
are both unique minimizers for their targets; 
(ii) $\ell(d,\cdot)$ is convex and twice 
continuously differentiable for every $d$; 
(iii) $\nabla \ell(d,\cdot)$ and 
$\nabla^2 \ell(d,\cdot)$ are $m_n(d)$-Lipschitz 
on $\Theta$ and 
$\frac{1}{n}\sum_{i=1}^n \EE[m_n(D_i)^2 ] \leq M$ for constant $M<\infty$; 
(iv) $\Var\{ \nabla \hat{L}_n(\theta_n^\cond) \} \succeq c_2 \mathbf{I}_{p\times p}$ 
and $\nabla^2   L_n(\theta_n^\cond) \succeq c_2 \mathbf{I}_{p\times p}$ for constant $c_2>0$; 
(v) $\frac{1}{n}\sum_{i=1}^n \EE\{\|\nabla \ell(D_i,\theta_n^\cond)\|^q \} <\infty$ for $q>2$. 
\end{assumption}

\begin{proposition}\label{prop:linear_fixed}
Suppose Assumption~\ref{assump:fixed} holds. 
Then $\hat\theta_n - \theta_n^\cond = o_P(1)$ 
and it holds that 
\$
\sqrt{n}(\hat\theta_n - \theta_n^\cond) 
= \frac{1}{\sqrt{n}}\sum_{i=1}^n - \big\{ \nabla^2 L_n(\theta_n^\cond)  \big\}^{-1} \nabla \ell(D_i,\theta_n^\cond) + o_P(1).
\$
\end{proposition}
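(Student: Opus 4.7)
The proof will follow the classical M-estimator playbook: establish consistency of $\hat\theta_n$ for the (moving) target $\theta_n^{\cond}$, then Taylor-expand the score equation and invert the Hessian. The novel wrinkles relative to the textbook i.i.d.\ case are that the target $\theta_n^{\cond}$ itself depends on $n$ and that the observations $D_i$ are independent but not identically distributed, so uniform laws of large numbers must be replaced by direct second-moment control.

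The starting observation is that the first-order condition for $\theta_n^{\cond}$ to minimize $L_n$, under (ii), yields
\[
\nabla L_n(\theta_n^{\cond}) = \frac{1}{n}\sum_{i=1}^n \mathbb{E}\big\{\nabla\ell(D_i,\theta_n^{\cond})\big|z_i\big\} = 0.
\]
Hence $\nabla \hat L_n(\theta_n^{\cond}) = \frac{1}{n}\sum_{i=1}^n \nabla\ell(D_i,\theta_n^{\cond})$ has zero mean, and by independence of the $D_i$ its variance equals $n^{-2}\sum_i \mathrm{Var}(\nabla\ell(D_i,\theta_n^{\cond})\mid z_i)$, which is $O(n^{-1})$ by the moment bound (v) for $q>2$. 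Chebyshev therefore gives $\nabla \hat L_n(\theta_n^{\cond})=O_P(n^{-1/2})$. For consistency, I would combine the convexity of $\hat L_n$ inherited from (ii) with the local strong convexity of $L_n$ at $\theta_n^{\cond}$ delivered by (iv)--(iii): a mean-value identity $\nabla L_n(\hat\theta_n)=\nabla^2 L_n(\tilde\theta)(\hat\theta_n-\theta_n^{\cond})$ together with $\nabla \hat L_n(\hat\theta_n)=0$ yields
\[
\big\|\nabla^2 L_n(\tilde\theta)(\hat\theta_n-\theta_n^{\cond})\big\| = \big\|\nabla L_n(\hat\theta_n) - \nabla \hat L_n(\hat\theta_n)\big\|,
\]
and the right side is $o_P(1)$ by a localization/peeling argument using the Lipschitz envelope $m_n(\cdot)$ from (iii). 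Invoking the $c_2$-lower bound on $\nabla^2 L_n$ in a neighborhood then gives $\hat\theta_n-\theta_n^{\cond}=o_P(1)$.

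For the linear expansion I would use the integral form of Taylor's theorem on $\nabla\hat L_n$ at $\hat\theta_n$:
\[
0 = \nabla \hat L_n(\hat\theta_n) = \nabla \hat L_n(\theta_n^{\cond}) + H_n\,(\hat\theta_n-\theta_n^{\cond}), \qquad H_n := \int_0^1 \nabla^2 \hat L_n\big(\theta_n^{\cond}+t(\hat\theta_n-\theta_n^{\cond})\big)\,dt.
\]
The claim then reduces to $H_n = \nabla^2 L_n(\theta_n^{\cond}) + o_P(1)$, which I would split as (a) $\nabla^2\hat L_n(\theta_n^{\cond}) - \nabla^2 L_n(\theta_n^{\cond}) = o_P(1)$ via second-moment control of the independent, mean-zero sum (the needed uniform $L^2$ bound on $\nabla^2\ell(D_i,\theta_n^{\cond})$ follows from the Lipschitz envelope in (iii)), and (b) $H_n - \nabla^2\hat L_n(\theta_n^{\cond}) = O_P(\|\hat\theta_n-\theta_n^{\cond}\|)\cdot \tfrac{1}{n}\sum_i m_n(D_i) = o_P(1)$ by (iii) and consistency, noting that $\tfrac{1}{n}\sum_i m_n(D_i) = O_P(1)$ by Markov. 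Multiplying the Taylor identity by $-H_n^{-1}\sqrt{n}$ and using the $c_2$-lower bound on $\nabla^2 L_n(\theta_n^{\cond})$ to justify the inverse-matrix Slutsky step yields the stated expansion, since $\sqrt{n}\,\nabla \hat L_n(\theta_n^{\cond})$ is $O_P(1)$ from the variance calculation above.

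The main obstacle is the consistency step: because $\theta_n^{\cond}$ drifts with $n$, I cannot invoke a classical argmin-continuous-mapping argument based on uniform convergence of $\hat L_n$ to a fixed population criterion. The fix is to work locally, combining convexity of $\hat L_n$ with the curvature bound (iv) at $\theta_n^{\cond}$ and using the Lipschitz envelope (iii) to extend the lower curvature bound to a $\theta_n^{\cond}$-centered ball whose radius does not shrink. A secondary technical point is verifying $\nabla^2\hat L_n(\theta_n^{\cond})-\nabla^2 L_n(\theta_n^{\cond})=o_P(1)$ without i.i.d.\ structure; here I would deduce a uniform second-moment bound on $\nabla^2\ell(D_i,\theta_n^{\cond})$ by anchoring at any fixed interior point of $\Theta$ and using (iii) with compactness of $\Theta$ from (ii).
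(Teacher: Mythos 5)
Your proposal is correct and follows essentially the same two-stage structure as the paper: a consistency argument exploiting convexity and the curvature bound at $\theta_n^\cond$, followed by a Taylor expansion of $\nabla\hat L_n$ at $\theta_n^\cond$, convergence of the (integrated) empirical Hessian to $\nabla^2 L_n(\theta_n^\cond)$ via the Lipschitz envelope $m_n$ and a law of large numbers for independent non-identically-distributed summands, and a matrix-inverse Slutsky step. The one place where your primary sketch would not go through as written is the consistency step via the mean-value identity $\nabla L_n(\hat\theta_n)=\nabla^2 L_n(\tilde\theta)(\hat\theta_n-\theta_n^\cond)$: to invoke the $c_2$-curvature bound you need the segment from $\theta_n^\cond$ to $\hat\theta_n$ to lie in the neighborhood where that bound holds, which presupposes the very closeness you are trying to establish, and the right-hand side $\|\nabla L_n(\hat\theta_n)-\nabla\hat L_n(\hat\theta_n)\|$ additionally requires a uniform LLN over a neighborhood that the paper never needs. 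The ``fix'' you describe in your final paragraph --- lower-bounding $\nabla^2\hat L_n$ near $\theta_n^\cond$ with high probability, combining convexity of $\hat L_n$ with $\hat L_n(\hat\theta_n)\le\hat L_n(\theta_n^\cond)$ and $\|\nabla\hat L_n(\theta_n^\cond)\|=o_P(1)$ --- is exactly the paper's argument, packaged there as Lemma~\ref{lem:convex}, which yields $\min\{\|\hat\theta_n-\theta_n^\cond\|^2,c_1\|\hat\theta_n-\theta_n^\cond\|\}\lesssim\|\nabla\hat L_n(\theta_n^\cond)\|\cdot\|\hat\theta_n-\theta_n^\cond\|$ and hence consistency with no circularity and only pointwise (at $\theta_n^\cond$) stochastic control. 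So the route you should commit to is your fallback, not your primary sketch; the remainder of your expansion argument matches the paper's.
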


The following theorem then establishes the 
asymptotics for $\theta_n^\cond$, 
whose proof is also in Appendix~\ref{app:subsec_fix}.  

\begin{theorem}\label{thm:fix}
Define $\Sigma_n^{1/2} = \{\nabla^2 L_n(\theta_n^\cond)\}^{-1}   \Var\{\sqrt{n} \nabla \hat{L}_n(\theta_n^\cond) \}^{1/2}$ and suppose Assumption~\ref{assump:fixed} holds. Then 
$
\Sigma_n^{-1/2} \sqrt{n}(\hat\theta_n - \theta_n^\cond) \stackrel{d}{\to } N(0, \mathbf{I}_{p\times p}) 
$
as $n\to \infty$. 
\end{theorem}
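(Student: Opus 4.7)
The plan is to combine the asymptotic linearization from Proposition~\ref{prop:linear_fixed} with a Lyapunov central limit theorem for triangular arrays of independent but not identically distributed random vectors. Introduce the shorthand $A_n := \{\nabla^2 L_n(\theta_n^\cond)\}^{-1}$, $S_n := \sqrt{n}\,\nabla\hat L_n(\theta_n^\cond) = n^{-1/2}\sum_{i=1}^n \nabla\ell(D_i,\theta_n^\cond)$, and $V_n := \Var(S_n) = n^{-1}\sum_{i=1}^n \Var\{\nabla\ell(D_i,\theta_n^\cond)\}$. Proposition~\ref{prop:linear_fixed} yields
\[
\sqrt{n}(\hat\theta_n - \theta_n^\cond) = -A_n S_n + o_P(1),
\]
with $\EE[S_n] = 0$ by the first-order optimality condition $\nabla L_n(\theta_n^\cond) = 0$. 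Let $V_n^{1/2}$ denote the symmetric positive-definite square root of $V_n$; by Assumption~\ref{assump:fixed}(iv), $V_n \succeq c_2 \mathbf{I}_{p\times p}$, so $V_n^{-1/2}$ is well-defined with operator norm at most $c_2^{-1/2}$.

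Next, I apply the Lyapunov CLT to the standardized sum $V_n^{-1/2} S_n$. The Lyapunov ratio obeys
\[
\sum_{i=1}^n \EE\bigl\|V_n^{-1/2}\nabla\ell(D_i,\theta_n^\cond)/\sqrt{n}\bigr\|^{q} \leq c_2^{-q/2}\,n^{-q/2}\sum_{i=1}^n \EE\|\nabla\ell(D_i,\theta_n^\cond)\|^q \leq C\,n^{1-q/2},
\]
where the last inequality uses Assumption~\ref{assump:fixed}(v). Since $q>2$, this tends to zero, so the Lindeberg--Feller/Lyapunov CLT for triangular arrays of independent vectors gives $V_n^{-1/2} S_n \stackrel{d}{\to} N(0,\mathbf{I}_{p\times p})$.

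Finally, I translate this back into the desired normalization. From the definition $\Sigma_n^{1/2} = A_n V_n^{1/2}$ and symmetry of $A_n$, the matrix $\Sigma_n^{1/2}$ is invertible with $\Sigma_n^{-1/2} = V_n^{-1/2} A_n^{-1} = V_n^{-1/2}\nabla^2 L_n(\theta_n^\cond)$, so $\Sigma_n^{-1/2} A_n = V_n^{-1/2}$. Multiplying the linearization by $\Sigma_n^{-1/2}$ gives
\[
\Sigma_n^{-1/2}\sqrt{n}(\hat\theta_n - \theta_n^\cond) = -V_n^{-1/2} S_n + \Sigma_n^{-1/2}\cdot o_P(1).
\]
Provided $\|\Sigma_n^{-1/2}\|$ is bounded uniformly in $n$---which follows under the stated assumptions since $V_n^{-1/2}$ is bounded by (iv) and $\nabla^2 L_n(\theta_n^\cond)$ is bounded via the Lipschitz condition in (iii) together with the integrability supplied by (v)---the remainder is $o_P(1)$. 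Slutsky's theorem combined with the symmetry of the normal distribution then yields the claimed convergence to $N(0,\mathbf{I}_{p\times p})$.

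The main obstacle is verifying the Lyapunov condition in a form that is uniform in $n$; this is precisely what forces Assumption~\ref{assump:fixed} to bundle the lower-variance bound (iv) with the higher-than-second moment bound (v), so that the ratio $C n^{1-q/2}$ really tends to zero and no hidden dependence on $n$ creeps into the standardization. A secondary subtlety is that $\Sigma_n^{1/2} = A_n V_n^{1/2}$ is \emph{not} the symmetric square root of $\Sigma_n = A_n V_n A_n$; one must interpret $\Sigma_n^{-1/2}$ as the matrix inverse of $\Sigma_n^{1/2}$ (rather than as the symmetric square root of $\Sigma_n^{-1}$), and check that this normalization indeed produces asymptotic covariance $\mathbf{I}_{p\times p}$, which the identity $\Sigma_n^{-1/2} A_n = V_n^{-1/2}$ above confirms.
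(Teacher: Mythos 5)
Your proposal is correct and follows essentially the same route as the paper: linearize via Proposition~\ref{prop:linear_fixed}, multiply by $\Sigma_n^{-1/2}$, and invoke the Lyapunov CLT for the independent (non-identically distributed) summands $\nabla\ell(D_i,\theta_n^\cond)$ using conditions (iv) and (v) of Assumption~\ref{assump:fixed}. You simply spell out the Lyapunov ratio bound, the identity $\Sigma_n^{-1/2}A_n = V_n^{-1/2}$, and the boundedness of $\Sigma_n^{-1/2}$ needed to absorb the $o_P(1)$ remainder, all of which the paper leaves implicit.
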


To form (conditional) confidence intervals, 
it remains to construct a consistent 
estimator for $\Sigma_n^{1/2}$. 
For simplicity, we show a concrete approach 
for $p=1$, while the multi-dimensional case follows 
similar ideas. 
Denote $\dot{\ell}(d,\theta) = \nabla_\theta \ell(d,\theta)$ 
and $\ddot{\ell}(d,\theta) = \nabla_{\theta}^2 \ell(d,\theta)$.
The asymptotics in Theorem~\ref{thm:fix} reduce to
\#\label{eq:fix_sigma}
\sigma_n := \Sigma_{n}^{1/2} = \bigg[  \frac{1}{n}\sum_{i=1}^n \EE\big\{\ddot{\ell}(D_i,\theta_n^\cond)\big\} \bigg]^{-1}
\bigg[\frac{1}{n}\sum_{i=1}^n \Var\big\{\dot{\ell}(D_i,\theta_n^\cond) \given z_i\big\} \bigg]^{1/2}.
\#
The following algorithm 
returns an estimator for $\sigma_n$ 
by running a nonparametric regression on $\{z_i\}_{i=1}^n$. 

\begin{algorithm}
  \caption{Estimate $\sigma_n$.}\label{alg:sigma_fix}
  \begin{algorithmic}[1]
    \REQUIRE Dataset $\{(D_i,z_i)\}_{i=1}^n$, loss function $\ell \colon \mathbb{D}\times\Theta \to \RR $. 
    \STATE Set $\hat{\theta}_n$ as solution to $\sum_{i=1}^n\dot\ell(D_i,\theta)=0$.    
  \STATE Compute $\hat M = \frac{1}{n}\sum_{i=1}^n \ddot{\ell}(D_i,\hat\theta_n)$. 
  \STATE Obtain
    $\hat{t}(\cdot) := \cG(\hat{s},\cI )(\cdot)\colon \mathbb{Z}\to \RR^p$ using Algorithm~\ref{alg:meta_cond_reg} for $\hat{s}(\cdot) = \dot{\ell}(\cdot,\hat\theta_n)\colon \mathbb{D}\to \RR $.  
  \ENSURE estimator $\hat\sigma_n = \hat{M}^{-1} 
  \frac{1}{n}\sum_{i=1}^n [\hat{s}(D_i)  - \hat{t}(z_i) ]^2 $. 
  \end{algorithmic}
  \end{algorithm} 

We establish
the consistency of Algorithm~\ref{alg:sigma_fix} 
under mild conditions, and its robustness 
if such conditions fail. 
The proof of Proposition~\ref{prop:sigma_consist} 
is in Appendix~\ref{app:subsec_fix}. 

\begin{proposition}\label{prop:sigma_consist}
Suppose Assumption~\ref{assump:fixed} holds, 
and $\frac{1}{n}\sum_{i=1}^n \{\hat{t}(z_i)-\mu(z_i)\}^2 = o_P(1)$ for some fixed function $\mu\colon \mathbb{Z}\to \RR$. 
Also, suppose $\EE\{\dot{\ell}(D_i,\theta_n^\cond)\given z_i\}=\mu^*(z_i)$ for some function $\mu^*\colon \mathbb{Z}\to \RR$. Then the output of Algorithm~\ref{alg:sigma_fix} 
satisfies that (i) $\hat\sigma_n - \sigma_n = o_P(1)$ 
when $\mu^* = \mu$, and (ii) otherwise, 
$\hat\sigma_n - \tilde\sigma_n = o_P(1)$
for some $\tilde\sigma_n \geq \sigma_n$ 
which also only depends on $\{z_i\}_{i=1}^n$. 
\end{proposition}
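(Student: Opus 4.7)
The plan is to analyze the two components of $\hat\sigma_n$ output by Algorithm~\ref{alg:sigma_fix} — the matrix $\hat M$ and the sum $\hat S = \frac{1}{n}\sum_{i=1}^n [\hat s(D_i) - \hat t(z_i)]^2$ — separately, show each converges in probability to a deterministic (conditioning-only) limit, and combine by the continuous mapping theorem. Both $\sigma_n$ and $\tilde\sigma_n$ will be built from these two pieces, with the second piece carrying the bias contribution when $\mu\neq \mu^*$.

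For $\hat M$, I first invoke Proposition~\ref{prop:linear_fixed} to get $\hat\theta_n - \theta_n^\cond = o_P(1)$. The Lipschitz property of $\nabla^2\ell$ from Assumption~\ref{assump:fixed}(iii) with the bound $\frac{1}{n}\sum_i\EE[m_n(D_i)^2]\le M$ gives, via Cauchy--Schwarz, that $\frac{1}{n}\sum_i\bigl|\ddot\ell(D_i,\hat\theta_n)-\ddot\ell(D_i,\theta_n^\cond)\bigr| = o_P(1)$. Then a weak law for independent (non-identically distributed) arrays yields $\frac{1}{n}\sum_i[\ddot\ell(D_i,\theta_n^\cond)-\EE\ddot\ell(D_i,\theta_n^\cond)] = o_P(1)$, so $\hat M \to M_n := \frac{1}{n}\sum_i\EE\{\ddot\ell(D_i,\theta_n^\cond)\}$ in probability. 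The lower bound on $\nabla^2 L_n$ in Assumption~\ref{assump:fixed}(iv) ensures $M_n$ is invertible uniformly in $n$, so $\hat M^{-1}$ is well-behaved.

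For $\hat S$, I write $\hat s(D_i) - \hat t(z_i) = A_i + B_i + C_i$ where $A_i = \dot\ell(D_i,\theta_n^\cond) - \mu(z_i)$, $B_i = \dot\ell(D_i,\hat\theta_n) - \dot\ell(D_i,\theta_n^\cond)$, and $C_i = \mu(z_i) - \hat t(z_i)$. Then $\frac{1}{n}\sum_i B_i^2 \le \|\hat\theta_n-\theta_n^\cond\|^2\cdot \frac{1}{n}\sum_i m_n(D_i)^2 = o_P(1)$ using Assumption~\ref{assump:fixed}(iii) and Proposition~\ref{prop:linear_fixed}; $\frac{1}{n}\sum_i C_i^2 = o_P(1)$ is the given assumption on $\hat t$; and all cross terms are $o_P(1)$ by Cauchy--Schwarz once I show $\frac{1}{n}\sum_i A_i^2 = O_P(1)$, which follows from Assumption~\ref{assump:fixed}(v). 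Thus $\hat S = \frac{1}{n}\sum_i A_i^2 + o_P(1)$. A weak law for independent arrays then gives $\frac{1}{n}\sum_i A_i^2 - \frac{1}{n}\sum_i \EE[A_i^2 \given z_i] = o_P(1)$, and the bias--variance identity $\EE[A_i^2\given z_i] = \Var\{\dot\ell(D_i,\theta_n^\cond)\given z_i\} + (\mu^*(z_i)-\mu(z_i))^2$ yields the limit $V_n + \Delta_n$ where $V_n := \frac{1}{n}\sum_i \Var\{\dot\ell(D_i,\theta_n^\cond)\given z_i\}$ and $\Delta_n := \frac{1}{n}\sum_i (\mu^*(z_i)-\mu(z_i))^2 \ge 0$. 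Setting $\tilde\sigma_n = M_n^{-1}(V_n+\Delta_n)^{1/2}$, continuous mapping gives $\hat\sigma_n - \tilde\sigma_n = o_P(1)$; when $\mu = \mu^*$, $\Delta_n = 0$ and $\tilde\sigma_n = \sigma_n$, otherwise $\tilde\sigma_n \ge \sigma_n$ since $V_n+\Delta_n \ge V_n$ (and $M_n^{-1} > 0$).

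The main obstacle is making the weak laws rigorous for independent non-i.i.d.~arrays. Assumption~\ref{assump:fixed}(v) supplies $\frac{1}{n}\sum_i \EE\|\nabla\ell(D_i,\theta_n^\cond)\|^q < \infty$ for some $q>2$, which provides uniform integrability of $\{A_i^2\}$ (as $q/2>1$) after absorbing $\mu(z_i) = \mu^*(z_i) + (\mu - \mu^*)(z_i)$ — here one needs $\frac{1}{n}\sum_i \mu(z_i)^{q}$ to be controlled, which holds because $\mu^*$ inherits $q$-moment bounds from (v) by Jensen and the deviation $\mu - \mu^*$ is the fixed function being estimated (so it suffices that $\hat t$ converges to something with bounded averaged $q$-th moment on $\{z_i\}$). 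Analogous moment control is needed for $\ddot\ell(D_i,\theta_n^\cond)$ when proving $\hat M\to M_n$; this can be bootstrapped from the Lipschitz condition together with a fixed reference bound at $\theta_n^\cond$. These uniform-integrability/truncation arguments, together with the $\hat\theta_n$-perturbation control, constitute the technical core of the proof.
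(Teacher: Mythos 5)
Your proposal is correct and follows essentially the same route as the paper's proof: consistency of $\hat M$ via the Lipschitz condition on $\nabla^2\ell$ plus a law of large numbers, reduction of the residual sum of squares to $\frac{1}{n}\sum_i\{\dot\ell(D_i,\theta_n^\cond)-\mu(z_i)\}^2$ using Proposition~\ref{prop:linear_fixed} and the assumed consistency of $\hat t$, and then the bias--variance decomposition around $\mu^*$ to identify the limit $V_n+\Delta_n$ with $\Delta_n\ge 0$. Your explicit $A_i+B_i+C_i$ bookkeeping and your direct weak law on $A_i^2$ (versus the paper's separate treatment of the conditionally mean-zero cross term) are only cosmetic differences, and the uniform-integrability caveats you flag are present, if unstated, in the paper's argument as well.
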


In words, the above proposition shows the 
consistency of $\hat\sigma_n$ if 
the nonparametric regression of $\dot{\ell}(D_i,\hat\theta_n)$ 
on $z_i$ is consistent for the truth 
$\mu^*(z_i)=\EE\{\dot{\ell}(D_i,\theta_n^\cond)\given z_i\}$. 
If the regression is instead run with 
$\dot{\ell}(D_i, \theta_n^\cond)$, 
the well-established theory of nonparametric regression 
such as kernel regression or smoothing splines 
guarantees the diminishing $L_2$ error 
if the underlying function $\mu^*$ is well-behaved. 
Considering the order $O(n^{-1/2})$ deviation 
of $\dot{\ell}(D_i,\hat\theta_n)$ 
from $\dot{\ell}(D_i, \theta_n^\cond)$, 
the consistency requirement of Proposition~\ref{prop:consist_sigma} is mild. 
In addition, even though the regression is not consistent 
but converges to some deterministic function (an even 
more mild condition), 
Algorithm~\ref{alg:sigma_fix} returns an upper bound 
for $\sigma_n$, which would lead to a conservative 
yet valid confidence interval. 

Finally, as we mentioned in Remark~\ref{rmk:fix}, 
one could also use $\sigma_n^2$ derived here 
instead of $\sigma^2$ for the i.i.d.~setting~\eqref{eq:def_asymp_var}, 
as the regularity conditions above hold with high probability 
for i.i.d.~drawn attributes. 
Compared to $\sigma^2$, 
$\sigma_n^2$ might provide attribute-dependent 
characterization for the statistical uncertainty. 

However, we do note 
the similarity between $\sigma^2$ 
and $\sigma_n^2$ for i.i.d.~attributes. 
In this case, we have 
$\sigma_n^2=\sigma^2 + O_P(1/\sqrt{n})$; 
their contributions 
to constructing the confidence interval differ 
by a magnitude  
that is of the same order as, hence indistinguishable 
from, the error in
the asymptotic linearity (Assumption~\ref{assump:linear_main}) 
we rely on. 
Furthermore, in practice, the estimation of 
$\sigma_n^2$ in Algorithm~\ref{alg:sigma_fix} 
and of $\sigma^2$ in Algorithm~\ref{alg:sigma_est} 
does not make much difference.  
The main distinction is that 
the i.i.d.~assumption allows for sample splitting 
in the estimation of $\sigma^2$, 
which simplifies theoretical analysis and  
our theoretical guarantee only relies on generic 
consistency conditions of nonparameteric regression. 
Instead, Algorithm~\ref{alg:sigma_fix} applies 
all the data in 
every step; hence, it needs slightly stronger 
and less generic conditions on the regression outputs. 
The practical choice 
between $\sigma_n^2$ or $\sigma^2$ 
could also be viewed as a tradeoff 
between confidence in the i.i.d.\ assumption
and confidence in regression accuracy. 

\subsection{Conditioning on unobserved variables}
\label{app:cond_unobs}

In this part, we generalize the conditional inference 
framework to situations where some unobserved variables 
are fixed. 
Again, we assume $\{(D_i,Z_i)\}_{i=1}^n$ are i.i.d.~from 
a super-population $\mathbb{P}$. 
Suppose a data scientist would like to view some 
unobserved variable $X_{1:n} = \{X_i\}_{i=1}^n$ as fixed, 
which are also i.i.d.~from the super-population 
and then conditioned on. 
While we could also relax the i.i.d.~assumption 
to fixed-attributes settings, such extension 
follows similar ideas as Appendix~\ref{app:fix} hence 
we omit here for brevity. 

Following 
Section~\ref{subsec:cond_para}, the 
conditional parameter $\theta_n^\cond = \theta_n^\cond(X_{1:n})$ 
is the unique solution to 
\$
\sum_{i=1}^n \EE\big\{ s(D_i,\theta)\biggiven X_i   \big\} = 0.
\$
As we only observe attributes $\{Z_i\}_{i=1}^n$, 
we could use the procedures proposed in Sections~\ref{sec:cond_inf_single} and~\ref{sec:algo} 
to obtain an estimator $\hat\sigma_Z^2$ 
for the (observed) asymptotic variance 
$\sigma_Z^2 = \EE\{(\phi(D)-\varphi(Z))^2\}$. 
The following theorem states the 
conditional validity of inference based on $\hat\sigma_Z^2$, 
whose proof is in Appendix~\ref{app:subsec_cond_unobs}. 

\begin{theorem}\label{thm:cond_unobs}
Suppose $\hat\sigma_Z^2\stackrel{P}{\to}\sigma_Z^2$, 
and Assumptions~\ref{assump:linear_main} 
and ~\ref{assump:moment_main} 
hold with $Z$ replaced by $X$. 
If $\EE([\EE\{\phi(D)\given X\}]^2)\geq \EE([\EE\{\phi(D)\given Z\}]^2)$, 
then for any $\alpha\in(0,1)$, it holds that 
\$ 
    \PP \Big( \theta_n^{\cond}(X_{1:n}) \in \big[\,\hat \theta_n -  z_{1-\alpha/2}  \hat \sigma_Z/\sqrt{n},~ \hat \theta_n + z_{1-\alpha/2}  \hat \sigma_Z/\sqrt{n}\,\big] \Biggiven X_{1:n}\Big)
\$
converges in probability to $1-\beta$ for some fixed $\beta\leq \alpha$ as $n\to \infty$. 
\end{theorem}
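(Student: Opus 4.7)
The plan is to closely mirror the proof of Theorem~\ref{thm:cond_intv}, but to condition on the unobserved $X_{1:n}$ instead of $Z_{1:n}$, and carefully track the mismatch between the true conditional asymptotic variance (driven by $X$) and the estimated variance $\hat\sigma_Z^2$ (which targets a variance driven by $Z$). The conservativeness in the statement will come directly from the stated hypothesis relating the two second moments.

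First I would invoke Assumption~\ref{assump:linear_main} with $Z$ replaced by $X$ to write
$$\sqrt{n}\bigl(\hat\theta_n - \theta_n^\cond(X_{1:n})\bigr) = \frac{1}{\sqrt{n}}\sum_{i=1}^n \bigl[\phi(D_i) - \EE\{\phi(D_i)\given X_i\}\bigr] + o_P(1).$$
Conditional on $X_{1:n}$, the summands are independent, mean zero, with pooled (conditional) Lindeberg variance converging in probability to $\sigma_X^2 := \EE([\phi(D)-\EE\{\phi(D)\given X\}]^2)$ by the law of large numbers; the Lindeberg condition follows from Assumption~\ref{assump:moment_main}. Applying the conditional CLT (Lemma~\ref{lem:cond_clt}), $\sqrt{n}(\hat\theta_n - \theta_n^\cond)$ converges, in conditional distribution given $X_{1:n}$, to $N(0,\sigma_X^2)$, with the convergence holding in probability over $X_{1:n}$.

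Next I would show $\hat\sigma_Z^2 \to \sigma_Z^2$ in conditional probability given $X_{1:n}$. Since $\hat\sigma_Z$ depends on $(D_{1:n}, Z_{1:n})$ but not directly on $X_{1:n}$, and the marginal convergence gives $\EE[\PP(|\hat\sigma_Z^2 - \sigma_Z^2|>\epsilon \given X_{1:n})] \to 0$, the conditional probability tends to $0$ in $L^1$ and therefore in probability. Combining this with the conditional CLT via a conditional Slutsky argument (taking subsequences along which both statements hold a.s.\ and invoking the continuity of the Gaussian CDF), the conditional coverage satisfies
$$\PP\bigl(|\sqrt{n}(\hat\theta_n - \theta_n^\cond)| \leq z_{1-\alpha/2}\hat\sigma_Z \biggiven X_{1:n}\bigr) \xrightarrow{\,P\,} 2\Phi\!\left(z_{1-\alpha/2}\,\sigma_Z/\sigma_X\right) - 1.$$
The hypothesis $\EE([\EE\{\phi(D)\given X\}]^2) \geq \EE([\EE\{\phi(D)\given Z\}]^2)$ together with $\EE\{\phi(D)^2\}$ being the common marginal second moment gives $\sigma_X^2 \leq \sigma_Z^2$. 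Hence the limit $1-\beta := 2\Phi(z_{1-\alpha/2}\sigma_Z/\sigma_X) - 1$ satisfies $\beta\leq \alpha$, which is the claimed conservative coverage.

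The main obstacle I expect is formalizing the conditional Slutsky step: $\hat\sigma_Z$ is a function of the observed data and is correlated with $X_{1:n}$, so one cannot treat it as deterministic given $X_{1:n}$. However, because the limit $\sigma_Z$ is a non-random constant, the conditional convergence in probability suffices to absorb $\hat\sigma_Z$ into the quantile, and a standard subsequence extraction (picking a subsequence along which the conditional distribution converges weakly a.s.\ and $\hat\sigma_Z \to \sigma_Z$ a.s., then applying Polya's theorem on uniform convergence of continuous CDFs) closes the argument without additional assumptions.
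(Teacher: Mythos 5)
Your proposal is correct and follows essentially the same route as the paper's proof: transfer the linear expansion and the conditional CLT argument of Proposition~\ref{prop:cond_distr} from $Z$ to $X$, absorb $\hat\sigma_Z$ via its marginal consistency and Lemma~\ref{lem:op_to_op}, and identify $1-\beta = 2\Phi(z_{1-\alpha/2}\sigma_Z/\sigma_X)-1$ with $\sigma_X\le\sigma_Z$ following from the decomposition $\sigma_X^2 = \EE\{\phi(D)^2\}-\EE([\EE\{\phi(D)\given X\}]^2)$. The only cosmetic difference is that you close the Slutsky step by subsequence extraction whereas the paper uses the $\epsilon$-bracketing $z^{\pm}(\epsilon)=z_{1-\alpha/2}(\sigma_Z\pm\epsilon)$; both are valid and equivalent in substance.
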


The only additional requirement for conditionally valid 
(and perhaps conservative) inference given unobserved attributes 
is that $\EE([\EE\{\phi(D)\given X\}]^2)\geq \EE([\EE\{\phi(D)\given Z\}]^2)$. Roughly speaking, it requires 
the covariates $Z$ to explain away less variation in $\phi(D)$ 
than the unobserved variables $X$. 
For instance, 
one might want to condition on more information 
than the observed, and view the observed attributes $Z$
as partially defining the fixed population. 
In this situation (which might be the only case 
where one would like to condition on unobserved attributes), 
this condition is naturally satisfied. 
In fact, this situation is related  
Example~\ref{ex:causal} when we condition on 
the partially observed $\{ Y_i(0),Y_i(1) \}_{i=1}^n$ 
for finite-population treatment effect 
$\frac{1}{n}\sum_{i=1}^n \{Y_i(1)-Y_i(0)\}$; 
methods in the literature often proceed with 
conservative estimators for the asymptotic variance, 
which is similar to our setting here.

\subsection{Transferring a subset of observed attributes}
\label{app:subsec_trans_subset}
 
 The proposed transductive inference procedures 
generalize  to settings where 
the conditioning set is smaller than 
that for the 
covariate shift. 
That is, the covariate shift holds 
for the 
whole set $Z$ of observed attributes,
while one might think a subset $X$ should 
be viewed as fixed to characterize the new population. 
We discuss the extension of our framework to 
this setting in this part. 

Formally, we assume $X\subset Z$ where 
$Z$ is observable, 
and there is a (possibly unknown) 
covariate shift 
$d \mathbb{Q} / d \mathbb{P}(d,z) = w(z)$. 
As usual, we denote $X_{1:m}^\new=\{X_j^\new\}_{j=1}^m$ 
as the new conditioning attributes. 
The target is the new conditional parameter 
$\theta_m^\cond(X_{1:m}^\new)$ with respect 
to a subset of observed attributes. 
This setting is challenging because 
the invariance of conditional distribution 
does not necessarily hold for $X$. 
However, as we will see, 
the proposed estimator 
still allows for $X_{1:m}^\new$-conditionally 
valid transductive inference. 

Let $\hat\theta_{m,n}^\trans$ be 
the estimator defined in~\eqref{eq:trans_new_simple} 
of the main text, which is obtained from 
$Z_{1:n}\cup Z_{1:m}^\new$. 
The following theorem shows that 
slightly modifying the asymptotic variance 
leads to valid inference; 
for completeness, we discuss both the simplified 
exposition in the main text and 
the cross-fitted procedures in Appendix~\ref{app:subsec_cross_fitting_known}  and~\ref{app:subsec_cross_fitting_est}. 
The proof is in Appendix~\ref{app:proof_trans_subset}. 

\begin{theorem}
\label{thm:trans_subset}
Under the setup of Section~\ref{sec:known_shift}, 
suppose all conditions in Theorem~\ref{thm:est_cov_shift_simple} hold; 
under the setup of Appendix~\ref{app:subsec_cross_fitting_known}, 
suppose the conditions in 
Theorem~\ref{thm:transfer} hold; 
under the setup of 
Appendix~\ref{app:subsec_cross_fitting_known}, 
suppose the conditions in 
Theorem~\ref{thm:est_cov_shift} hold. 
Let $\hat\theta_{m,n}^{\trans}$ 
be the estimator built with 
$Z_{1:n}$ and $Z_{1:m}^\new$ in any of these cases. 
Suppose $(\hat\sigma_{\shift}')^2$ 
converges in probability to 
\$
(\sigma_{\shift}')^2 = \Var\big[ w(Z_i) \big\{\psi(D_i) - \eta(Z_i)\big\}   \big] 
+ n/m \cdot \Var\big[  \eta(Z_j^\new) - \EE\{\eta(Z_j^\new)\given X_j^\new\}   \big].
\$
Then for any fixed $\alpha\in(0,1)$, it holds that 
\$
\PP\Big\{ \theta_m^{\cond}(X_{1:m}^\new) \in \big[\,\hat\theta_{m,n}^{\trans} -  z_{1-\alpha/2} \hat\sigma_{\shift}'/\sqrt{n},~ \hat\theta_{m,n}^{\trans} + z_{1-\alpha/2}  \hat\sigma_{\shift}'/\sqrt{n}\,\big] \Biggiven X_{1:m}^\new \Big\}
\$
converges in probability to $1-\alpha$ 
as $m,n\to \infty$. 
\end{theorem}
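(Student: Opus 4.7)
\textbf{Proof plan for Theorem~\ref{thm:trans_subset}.}

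The plan is to decompose the target error into two pieces that are asymptotically independent given $X_{1:m}^\new$, namely
\[
\sqrt{n}\bigl(\hat\theta_{m,n}^{\trans} - \theta_m^{\cond}(X_{1:m}^\new)\bigr) = \underbrace{\sqrt{n}\bigl(\hat\theta_{m,n}^{\trans} - \theta_m^{\cond,\new}\bigr)}_{=: A_n} + \underbrace{\sqrt{n}\bigl(\theta_m^{\cond,\new} - \theta_m^{\cond}(X_{1:m}^\new)\bigr)}_{=: B_{m,n}},
\]
where $\theta_m^{\cond,\new} = \theta_m^{\cond}(Z_{1:m}^\new)$ is the $Z$-conditional parameter used throughout Section~\ref{sec:known_shift}. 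The term $A_n$ is exactly the object already analyzed in Theorem~\ref{thm:est_cov_shift_simple} (and the cross-fitted analogues in Appendices~\ref{app:subsec_cross_fitting_known} and~\ref{app:subsec_cross_fitting_est}), so in particular its leading asymptotic linearization is $\frac{1}{\sqrt{n}}\sum_i w(Z_i)\{\psi(D_i)-\eta(Z_i)\}$, a function of $(D_{1:n},Z_{1:n})$ alone, converging conditionally on $(Z_{1:n},Z_{1:m}^\new)$ (hence on $X_{1:m}^\new$) to $N(0,\sigma_{\shift}^2)$.

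For $B_{m,n}$, I first apply the asymptotic linear expansion~\eqref{eq:new_cond_lin_exp} to write $\sqrt{m}(\theta_m^{\cond,\new}-\theta_0^\new) = m^{-1/2}\sum_j \eta(Z_j^\new)+o_P(1)$. Then I establish a parallel expansion for $\theta_m^{\cond}(X_{1:m}^\new)$: applying Proposition~\ref{prop:linear_exp} to the conditional-on-$X$ estimating equation under $\mathbb Q$ gives
\[
\sqrt{m}\bigl(\theta_m^{\cond}(X_{1:m}^\new)-\theta_0^\new\bigr) = \frac{1}{\sqrt{m}}\sum_{j=1}^m \EE\{\psi(D_j^\new)\mid X_j^\new\} + o_P(1),
\]
and since $X \subset Z$, the tower property yields $\EE\{\psi(D_j^\new)\mid X_j^\new\}=\EE\{\eta(Z_j^\new)\mid X_j^\new\}$. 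Subtracting,
\[
B_{m,n} = \sqrt{n/m}\cdot\frac{1}{\sqrt{m}}\sum_{j=1}^m \bigl[\eta(Z_j^\new)-\EE\{\eta(Z_j^\new)\mid X_j^\new\}\bigr] + o_P(1).
\]
Conditional on $X_{1:m}^\new$, each summand has mean zero and the $Z_j^\new$'s are conditionally i.i.d., so the conditional CLT (Lemma~\ref{lem:cond_clt} in the supplement) gives $B_{m,n}\mid X_{1:m}^\new \rightsquigarrow N\bigl(0,\; (n/m)\cdot \Var[\eta(Z_j^\new)-\EE\{\eta(Z_j^\new)\mid X_j^\new\}]\bigr)$.

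The crucial independence step is to observe that, given $X_{1:m}^\new$, the leading term of $A_n$ depends only on $(D_{1:n},Z_{1:n})$ (which is independent of $(Z_{1:m}^\new,X_{1:m}^\new)$ by the sampling model), while the leading term of $B_{m,n}$ depends only on $Z_{1:m}^\new$. Combined with the conditional CLTs above, this yields joint asymptotic normality of $(A_n,B_{m,n})$ given $X_{1:m}^\new$ with diagonal covariance, so $A_n+B_{m,n}\mid X_{1:m}^\new \rightsquigarrow N(0,(\sigma_{\shift}')^2)$. Slutsky with $\hat\sigma_{\shift}'\to_P \sigma_{\shift}'$ then converts this into the claimed conditional-coverage statement exactly as in the proofs of Theorems~\ref{thm:cond_intv} and~\ref{thm:est_cov_shift_simple}.

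The main obstacle I anticipate is the bookkeeping around the bias-correction term: $\hat\theta_{m,n}^\trans$ explicitly involves the $Z_j^\new$'s through $\frac{1}{m}\sum_j\hat\eta(Z_j^\new)$, so $A_n$ is not literally a function of $(D_{1:n},Z_{1:n})$ alone until one uses~\eqref{eq:new_cond_lin_exp} to cancel $\frac{1}{m}\sum_j\eta(Z_j^\new)$ from the expansion. Making this cancellation rigorous requires controlling the residual $\sqrt{n/m}\cdot m^{-1/2}\sum_j\{\hat\eta(Z_j^\new)-\eta(Z_j^\new)\}$ as $o_P(1)$ under the rate conditions already assumed in Theorem~\ref{thm:est_cov_shift_simple} (and its cross-fitted versions), together with the assumption $m\ge \epsilon n$. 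Once this residual is absorbed, the conditional independence argument and the two conditional CLTs slot together to give the result.
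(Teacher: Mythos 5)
Your proposal is correct and follows essentially the same route as the paper: both reduce to the linearization $\hat\theta_{m,n}^\trans - \theta_m^\cond(X_{1:m}^\new) = \frac{1}{n}\sum_i w(Z_i)\{\psi(D_i)-\eta(Z_i)\} + \frac{1}{m}\sum_j[\eta(Z_j^\new)-\EE\{\eta(Z_j^\new)\mid X_j^\new\}] + o_P(1/\sqrt{n})$ via the tower property for $X\subset Z$, then invoke the conditional CLT given $X_{1:m}^\new$. The only cosmetic difference is that the paper pivots both expansions around $\theta_0^\new$ and applies a single conditional CLT to the combined sum, whereas you pivot around $\theta_m^{\cond,\new}$ and argue joint asymptotic normality of the two blocks with diagonal covariance; these are equivalent.
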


\section{Details of cross-fitting for 
transductive inference}

\subsection{Details of cross-fitting for 
transductive inference with known covariate shift} 
\label{app:subsec_cross_fitting_known} 

This section contains details 
of cross-fitting for transductive inference 
under \emph{known} covariate shift $w(\cdot)$ 
that we omit for clarity in Section~\ref{sec:known_shift}. 
Instead of referring to external datasets, 
we split the data to decouple 
the estimation of nuisance components, 
and reuse the folds to achieve the same 
statistical efficiency.

We first split 
the index set $\cI=\{1,\dots,n\}$ of 
$\{(D_i,Z_i)\}_{i=1}^n$ into equally-sized halves $\cI_1$ and $\cI_2$. 
Then for $k=1,2$, 
we obtain estimator $\hat\eta^{\cI_{k}}(\cdot)$ for $\eta (\cdot)= \EE\{\psi(D_j^\new)\given Z_j^\new = \cdot\}$, 
using only the data in $\cI_k$.
\footnote{One can set $\hat\eta^{\cI_k}$ as the output of Algorithm~\ref{alg:eta_est} 
(c.f.~Section~\ref{sec:algo})
with inputs $w$ and $\cI_k$. 
Since in Algorithm~\ref{alg:eta_est}, 
the new attributes are only 
used to estimate the weight function (if it is unknown), 
here we
do not need them as input 
for estimating $\eta(\cdot)$.}  
We then define the estimator 
\#\label{eq:trans_new}
\hat\theta_{m,n}^{\trans } = \hat\theta_n^\trans - \hat{c}^\trans , 
\#
where $\hat\theta_n^\trans$ 
is the unique solution to 
\$
\sum_{i=1}^n w(Z_i) s(D_i,\theta)=0,
\$
i.e., setting $\hat{w}(\cdot)=w(\cdot)$ in~\eqref{eq:hat_theta_dag}. 
The correction term is defined as 
\#\label{eq:cor_new}
  \hat c^\trans  :=  \frac{1}{2|\cI_1|} \sum_{i \in \cI_1} \hat \eta^{\cI_2}(Z_i) w(Z_i)  + \frac{1}{2|\cI_2|} \sum_{i \in \cI_2}  \hat \eta^{\cI_1}(Z_i) w(Z_i)  - \frac{1}{2m} \sum_{k=1}^2\sum_{j=1}^m   \hat\eta^{\cI_k}(Z_j^\new) . 
\# 
We construct a confidence interval centered around $\hat\theta_{m,n}^\trans$ 
in Theorem~\ref{thm:transfer}. 
We assume
the $L_2(\mathbb{Q})$ consistency of $\hat\eta^{\cI_k}$;  
note that 
similar to the i.i.d.~setting in Section~\ref{sec:trans_iid}, 
we do not require any convergence rates of $\hat\eta^{\cI_k}$. 

\begin{assumption}\label{assump:cov_shift}
$\|[\hat\eta^{\cI_k}(\cdot) - \eta (\cdot)]w(\cdot)\|_{L_2(\mathbb{P})}$ 
and $\|\hat\eta^{\cI_k}(\cdot) - \eta (\cdot) \|_{L_2(\mathbb{Q})}$ converges 
in probability to zero for $k=1,2$. 
\end{assumption}
 
The following theorem states the asymptotic 
conditional validity of our cross-fitting procedure, 
whose  
proof is deferred to Section~\ref{app:thm_transfer} 
in this supplementary material.

\begin{theorem}
\label{thm:transfer}
Suppose 
Assumption~\ref{assump:linear_expansion_known_shift} 
in the main text holds for $\hat{w}=w$,  
Assumption~\ref{assump:cov_shift} holds, and $m\geq \epsilon n$ for some 
constant $\epsilon >0$. If an estimator $\hat\sigma_{\shift}$ converges in probability to $\sigma_{\shift}>0$ for 
$\sigma^2$ defined in~\eqref{eq:def_shift_var}. 
Then the random variable 
\$
\PP\Big( \theta_m^{\cond,\new} \in \big[ \hat\theta_{m,n}^{\trans} - \hat\sigma_{\shift} \cdot  z_{1-\alpha/2}/\sqrt{n},    \hat\theta_{m,n}^{\trans} + \hat\sigma_{\shift} \cdot z_{1-\alpha/2}/\sqrt{n}    \big] \Biggiven Z_{1:m}^\new, Z_{1:n} \Big) 
\$ 
converges in probability to $1-\alpha$ as $n\to \infty$, 
where $\hat\theta_{m,n}^{\trans}$ is defined in equation~\eqref{eq:trans_new}. 
\end{theorem}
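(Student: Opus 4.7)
}

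The plan is to establish a conditional (given $Z_{1:n}$ and $Z_{1:m}^\new$) asymptotic linear expansion for $\sqrt{n}(\hat\theta_{m,n}^{\trans}-\theta_m^{\cond,\new})$ whose leading term is a sum of independent, mean-zero variables, then invoke the conditional CLT (Lemma~\ref{lem:cond_clt}) and Slutsky. First, I would combine Assumption~\ref{assump:linear_expansion_known_shift} (with $\hat w=w$) for $\hat\theta_n^\trans$ and the new conditional parameter $\theta_m^{\cond,\new}$ to write
\[
\sqrt{n}(\hat\theta_{m,n}^{\trans}-\theta_m^{\cond,\new}) = \frac{1}{\sqrt{n}}\sum_{i=1}^n \psi(D_i)w(Z_i) \;-\; \sqrt{n}\,\hat c^{\trans} \;-\; \frac{\sqrt{n}}{m}\sum_{j=1}^m \eta(Z_j^\new) \;+\; o_P(1),
\]
where the $o_P(1)$ uses that $m\ge \epsilon n$ makes $\sqrt{n/m}$ bounded.

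Next, I would show $\sqrt{n}\bigl(\hat c^{\trans}-c^{\trans}\bigr)=o_P(1)$ for the oracle correction $c^{\trans}=\tfrac{1}{n}\sum_i \eta(Z_i)w(Z_i)-\tfrac{1}{m}\sum_j \eta(Z_j^\new)$. This is the main technical step and the place where cross-fitting is essential. For each $k$, condition on $\hat\eta^{\cI_k}$ and the opposite fold of source data; then the sums in the definitions~\eqref{eq:cor_new} are conditionally i.i.d.~averages. The \emph{conditional mean} of $\tfrac{1}{|\cI_1|}\sum_{i\in\cI_1}(\hat\eta^{\cI_2}-\eta)(Z_i)w(Z_i)$ equals $\EE_{\mathbb{P}}\{(\hat\eta^{\cI_2}-\eta)(Z)w(Z)\mid \hat\eta^{\cI_2}\}$, which by the covariate-shift identity $d\mathbb{Q}/d\mathbb{P}=w$ equals $\EE_{\mathbb{Q}}\{(\hat\eta^{\cI_2}-\eta)(Z^\new)\mid \hat\eta^{\cI_2}\}$---exactly the conditional mean of the corresponding term from $B$. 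Hence the biases cancel across $A$ and $B$. The remaining \emph{conditional variance} is bounded by $\tfrac{n}{|\cI_1|}\|(\hat\eta^{\cI_2}-\eta)w\|_{L_2(\mathbb{P})}^2$ and $\tfrac{n}{m}\|\hat\eta^{\cI_k}-\eta\|_{L_2(\mathbb{Q})}^2$, both $o_P(1)$ by Assumption~\ref{assump:cov_shift} and $m\ge \epsilon n$. A conditional Chebyshev argument then yields the $o_P(1)$ claim; note that only $L_2$-consistency (no rate) is needed, precisely because the ``first-order'' bias vanishes due to the covariate-shift identity.

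Substituting $\hat c^\trans = c^\trans + o_P(1/\sqrt{n})$ back into the decomposition, the two $\tfrac{\sqrt{n}}{m}\sum_j \eta(Z_j^\new)$ terms cancel and
\[
\sqrt{n}(\hat\theta_{m,n}^{\trans}-\theta_m^{\cond,\new}) \;=\; \frac{1}{\sqrt{n}}\sum_{i=1}^n w(Z_i)\bigl\{\psi(D_i)-\eta(Z_i)\bigr\} \;+\; o_P(1).
\]
Conditional on $Z_{1:n}$ (and independently of $Z_{1:m}^\new$), the summands are independent with mean zero, since the covariate-shift invariance of $\mathbb{P}_{D\mid Z}=\mathbb{Q}_{D\mid Z}$ gives $\EE\{\psi(D_i)\mid Z_i\}=\eta(Z_i)$. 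The conditional variance averages to $\sigma_{\shift}^2$ by the SLLN (under $\EE[w(Z)^4\psi(D)^4]<\infty$ embedded in the moment hypotheses), and a Lindeberg condition follows from $\EE[w(Z)^2\psi(D)^2]<\infty$.

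I would then apply the conditional CLT (Lemma~\ref{lem:cond_clt}) to conclude that $\sqrt{n}(\hat\theta_{m,n}^{\trans}-\theta_m^{\cond,\new})/\sigma_{\shift}$ converges to $N(0,1)$ conditionally in probability given $Z_{1:n},Z_{1:m}^\new$; combining with $\hat\sigma_{\shift}\stackrel{P}{\to}\sigma_{\shift}>0$ via Slutsky in the conditional sense yields the stated coverage. The main obstacles I anticipate are (i)~verifying rigorously that the conditional-bias cancellation in the cross-fitted correction goes through despite the double conditioning on both source and target attributes, and (ii)~stitching the conditional CLT with the $o_P(1)$ remainder term so that the nominal coverage is attained \emph{in probability over} $Z_{1:n},Z_{1:m}^\new$, not just marginally.
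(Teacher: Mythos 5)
Your proposal is correct and follows essentially the same route as the paper: combine the linear expansions, show the cross-fitted correction equals its oracle version up to $o_P(1/\sqrt{n})$ by exploiting that the conditional biases of the source and target pieces cancel via the covariate-shift identity $\EE_{\mathbb{P}}\{w(Z)\hat\eta^{\cI_k}(Z)\mid\cI_k\}=\EE_{\mathbb{Q}}\{\hat\eta^{\cI_k}(Z^\new)\mid\cI_k\}$, bound the remaining fluctuations by conditional Chebyshev using only $L_2$-consistency, and finish with the conditional CLT (Lemma~\ref{lem:cond_clt}) plus Slutsky. The paper organizes the remainder into two centered terms (i) and (ii) rather than comparing to an oracle correction, but the decomposition, the key cancellation, and the variance bounds are identical.
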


In Theorem~\ref{thm:transfer}, 
the asymptotic linearity with $\hat{w}=w$ 
in the main text has been justified 
in Proposition~\ref{prop:main_new_lin_exp}. 
Similar to Theorem~\ref{thm:iid_simple}, 
the asymptotic variance does not depend on $m$, 
and the result only depends on the $L_2$-consistency of 
$\hat\eta^{\cI_k}$. 
Finally, one could estimate $\hat\sigma_{\shift}^2$ 
using Algorithm~\ref{alg:sigma_shift_est} without 
referring to external datasets. 


\subsection{Details of cross-fitting for 
transductive inference with estimated covariate shift}
\label{app:subsec_cross_fitting_est} 

In this section, we provide 
details for cross-fitting 
in transductive inference when the covariate shift 
$w(\cdot)$ in Section~\ref{sec:known_shift} is unknown. 
Procedures here do not refer to any external datasets.

We employ cross-fitting~\citep{chernozhukvo2018debiased} 
to decouple the estimation of $w(\cdot)$ and other quantities. 
The index set $\cI=\{1,\dots,n\}$ 
of the original dataset $\{(D_i,Z_i)\}_{i=1}^n$ 
is
randomly split into 
three equally-sized folds, denoted as 
$\cI_1$, $\cI_2$ and $\cI_3$. 
The index set 
$\cI^\new=\{1,\dots,m\}$ 
of the new dataset $\cZ_m^\new = \{Z_j^\new\}_{j=1}^m$ is 
randomly split into three equally-sized folds 
$\cI_1^\new$, $\cI_2^\new$ and $\cI_3^\new$. 
We then carry out a three-fold estimation: 
for each $\ell=1,2,3$, 
we first use $\cI_\ell$ and $\cI_\ell^\new$ 
to obtain 
an estimator $\hat{w}_\ell(\cdot)$ of the covariate shift. 
(We give an example in~Algorithm~\ref{alg:est_w} for 
estimating $w(\cdot)$ using any fold $\cI$
of original data and any fold $\cI^\new$ of new covariates.)
Then we use all remaining data $\cI \backslash \cI_\ell$
to obtain $\hat\theta_n^{\new,(\ell)}$, which is a unique solution to 
\#\label{eq:def_new_hat_theta_1}
\sum_{i\notin \cI_\ell} \hat{w}_\ell(Z_i) s(D_i, \theta) =0 .
\#
Next, 
for each $k\neq \ell$, 
we obtain an estimator $\hat\eta^{\cI_k}(\cdot)$ for 
$\eta(\cdot)$ using only $\cI_k$ and $\cI_k^\new$.
(To be specific, $\hat\eta^{\cI_k}(\cdot)$ 
is the output $\eta(s,\emptyset,\cI_k,\cI_k^\new )(\cdot)$ from Algorithm~\ref{alg:eta_est} that only depends on $\cI_k$ and $\cI_k^\new$.)
We define the $\ell$-th correction term as 
\$
\hat{c}^{(\ell)} &= \sum_{k\neq \ell}\frac{3}{2n} \sum_{i\notin \cI_\ell\cup \cI_{k}} \hat{w}_\ell(Z_i) \hat\eta^{\cI_k}(Z_i) 
- \sum_{k\neq \ell}  \frac{3}{2m}\sum_{j\notin \cI_\ell^\new\cup \cI_k^\new}  \hat\eta^{\cI_k}(Z_j^\new).
\$ 
We note that the high-level idea of 
the above correction term is similar to 
our simplified expression in Section~\ref{sec:known_shift}; 
the only difference is that 
we carefully split and reuse the data 
to achieve good statistical property. 
Finally, we define the transductive estimator as 
\#\label{eq:new_trans_est}
\hat\theta_{m,n}^{\trans,\shift} = \frac{1}{3}\sum_{\ell=1}^3 \big( \hat\theta_n^{\new,(\ell)} - \hat{c}^{(\ell)}\big).
\#

Without loss of generality, we assume $n_0= n/3$, $m_0=m/3$ 
are integers, so that 
the split folds are of exactly the same size; 
otherwise the induced bias is of a negligible order $O(1/m+1/n)$. 
Similar to Assumption~\ref{assump:linear_expansion_known_shift} in the main text,
we assume 
consistency of $\hat{w}_\ell$ as follows. 

\begin{assumption}\label{assump:consist_w}
For $\ell=1,2,3$, 
$\sup_{z}|\hat{w}_\ell(z)-w(z)|\to 0$ in probability as $n\to \infty$. 
\end{assumption}

For ease of exposition, 
we impose the linear expansion of $\hat\theta_n^{\new,(\ell)}$ 
as follows. 
In Proposition~\ref{prop:linear_hat_theta_est} in Section~\ref{app:subsec_linear_est} of this
supplementary material, we show that 
Assumption~\ref{assump:linear_est_shift} holds under 
Assumption~\ref{assump:consist_w} 
and regularity conditions similar to previous cases. 

\begin{assumption}\label{assump:linear_est_shift}
For $\ell=1,2,3$, 
$\hat\theta_n^{\new,(\ell)}$ is 
the unique solution to~\eqref{eq:def_new_hat_theta_1}. 
Also, letting 
$\theta_0^\new$ be the unique solution to~\eqref{eq:theta_dag}, 
assume the following asymptotic linearity holds:  
\#\label{eq:linear_hat_theta_est}
\sqrt{2n/3}(\hat\theta_n^{\new,(\ell)} - \theta_0^\new)
&= \frac{1}{\sqrt{2n/3}}\sum_{i \notin \cI_\ell}  \hat{w}_{\ell}(Z_i) \psi(D_i) + o_P(1), \\ 
\sqrt{m}(\theta_m^{\cond,\new} - \theta_0^\new )&= \frac{1}{\sqrt{m}}\sum_{j=1}^m \eta(Z_j^\new)
 + o_P(1), \notag 
\#
where 
$
\psi(d) = - \big[\EE_Q\{\dot{s}(D^\new, \theta_0^\new)\}\big]^{-1} s(d, \theta_0^\new)
$, 
and $\eta(z) =\EE\{\psi(D_j^\new)\given Z_j^\new=z\}$. 
\end{assumption}

In the linear expansion~\eqref{eq:linear_hat_theta_est}, $\sqrt{2n/3}$ is due to sample splitting 
where $\hat\theta_n^{\new,(\ell)}$ only 
uses a fold of cardinality $2n/3$; 
we still obtain $\sqrt{n}$ order 
for inference 
by reusing all folds.

We also assume (slow) 
convergence rates of the estimated covariate shift 
and influence functions. 
Detailed conditions for it to hold can be found in 
the analysis of our estimation procedures, 
see Proposition~\ref{prop:consist_eta} 
of Section~\ref{subsec:est_consist}.

\begin{assumption}\label{assump:cov_est_rate} 
$\|w(\cdot)\{\hat\eta^{\cI_k}(\cdot) - \eta(\cdot)\}\|_{L_2(\mathbb{P})}\to 0$  
in probability, 
$\EE_{\mathbb{P}}\{w(Z_i)^4\psi(D_i)^4\}<\infty$ 
and 
$
\big\|\hat{w}_\ell(\cdot) - w(\cdot) \big\|_{L_2(\mathbb{P})} \cdot \big\|  \hat\eta^{\cI_k}(\cdot) - \eta(\cdot)\big\|_{L_2(\mathbb{P})} = o_P(1/\sqrt{n})
$
for $k=1,2$ and $\ell=1,2,3$. 
\end{assumption}

The following theorem 
proved in Appendix~\ref{app:subsec_proof_est_cov_shift} 
provides inference that is robust 
to estimation error---we obtain $n^{-1/2}$-rate inference 
with the same asymptotic variance 
as the case of known covariate shift, 
as long as the product of the errors is no greater than $O(n^{-1/2})$.  

\begin{theorem}\label{thm:est_cov_shift}
Suppose Assumptions~\ref{assump:consist_w},~\ref{assump:linear_est_shift} and~\ref{assump:cov_est_rate} 
hold, and $m\geq \epsilon n$ for some fixed $\epsilon>0$. 
If an estimator $\hat\sigma_{\shift}\to\sigma_{\shift}$ in probability
for the variance $\sigma_{\shift}^2$ defined in~\eqref{eq:def_shift_var}, then  
\$
\PP\Big( \theta_m^{\cond,\new} \in \big[ \hat\theta_{m,n}^{\trans,\shift} - \hat\sigma_{\shift} \cdot  z_{1-\alpha/2}/\sqrt{n},    \hat\theta_{m,n}^{\trans,\shift} + \hat\sigma_{\shift} \cdot z_{1-\alpha/2}/\sqrt{n}    \big] \Biggiven Z_{1:m}^\new, Z_{1:n}  \Big),
\$
as a random variable measurable with respect to $Z_{1:m}^\new $, 
converges in probability to $1-\alpha$ as $n\to \infty$, 
where $\hat\theta_{m,n}^{\trans,\shift}$ is defined 
in equation~\eqref{eq:new_trans_est}. 
\end{theorem}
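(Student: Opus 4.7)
The plan is to establish the asymptotic linear expansion
\[\sqrt{n}\bigl(\hat\theta_{m,n}^{\trans,\shift} - \theta_m^{\cond,\new}\bigr) = \frac{1}{\sqrt{n}}\sum_{i=1}^n w(Z_i)\bigl\{\psi(D_i) - \eta(Z_i)\bigr\} + o_P(1),\]
and then apply a conditional CLT to the leading term; combined with $\hat\sigma_\shift \stackrel{P}{\to} \sigma_\shift$ and Slutsky's lemma, this yields the stated conditional coverage $1-\alpha$.

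To derive the expansion I would substitute the two linear expansions in Assumption~\ref{assump:linear_est_shift} into $\hat\theta_n^{\new,(\ell)} - \hat c^{(\ell)} - \theta_m^{\cond,\new}$ and average over $\ell=1,2,3$. The key bookkeeping fact is that each original-data index $i$ lies in exactly two of the three sets $\cI \setminus \cI_\ell$ (and analogously for each new-data index $j$), so after rearranging the triple sums the averaged correction term can be written with weights $\bar\eta_i(\cdot)=\frac{1}{2}\sum_{k\neq k'(i)}\hat\eta^{\cI_k}(\cdot)$, namely the average of the two nuisance estimates not trained on the fold containing $i$ (and similarly $\bar\eta_j$ for the new data).

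The residual splits into three pieces. The mixed-error piece $n^{-1}\sum_i\{\hat w_\ell - w\}(Z_i)\{\eta - \bar\eta_i\}(Z_i)$ is $o_P(n^{-1/2})$ by Cauchy--Schwarz and the product rate in Assumption~\ref{assump:cov_est_rate}. The piece $n^{-1}\sum_i\{\hat w_\ell - w\}(Z_i)\{\psi(D_i) - \eta(Z_i)\}$ has conditional mean zero given $(\hat w_\ell, Z_{1:n})$ (since $\mathbb{E}_\mathbb{P}[\psi(D)\mid Z]=\eta(Z)$ by the covariate-shift invariance) and conditional variance of order $n^{-1}\sup_z|\hat w_\ell(z) - w(z)|^2 \cdot \mathbb{E}[(\psi-\eta)^2] = o_P(n^{-1})$ by Assumption~\ref{assump:consist_w}, hence is $o_P(n^{-1/2})$. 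The most delicate piece is
\[R_n = \frac{1}{n}\sum_{i=1}^n w(Z_i)\{\eta(Z_i) - \bar\eta_i(Z_i)\} \;-\; \frac{1}{m}\sum_{j=1}^m\{\eta(Z_j^\new) - \bar\eta_j(Z_j^\new)\}.\]
Cross-fitting ensures that each $\hat\eta^{\cI_k}$ is independent of the points outside $\cI_k$ and $\cI_k^\new$ on which it is evaluated, so conditional on $\{\hat\eta^{\cI_k}\}_{k=1,2,3}$ both sums have the identical mean $\mathbb{E}_\mathbb{Q}[\eta - \tfrac{1}{3}\sum_k \hat\eta^{\cI_k}]$ (using $w = d\mathbb{Q}/d\mathbb{P}$) and cancel exactly, while the conditional variance is bounded by $O\bigl(n^{-1}\max_k\|w(\hat\eta^{\cI_k}-\eta)\|_{L_2(\mathbb{P})}^2 + m^{-1}\max_k\|\hat\eta^{\cI_k}-\eta\|_{L_2(\mathbb{Q})}^2\bigr) = o_P(n^{-1})$ by Assumption~\ref{assump:cov_est_rate} and $m\geq\epsilon n$, so $R_n = o_P(n^{-1/2})$ by Chebyshev.

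Finally, the leading term is a sum of independent mean-zero variables conditional on $Z_{1:n}$ with conditional variance $n^{-1}\sum_i w(Z_i)^2 \mathrm{Var}\{\psi(D_i)\mid Z_i\}$ converging in probability to $\sigma_\shift^2$ by the law of large numbers; the fourth-moment bound $\mathbb{E}_\mathbb{P}[w^4\psi^4]<\infty$ in Assumption~\ref{assump:cov_est_rate} verifies Lindeberg's condition, so a conditional CLT yields conditional convergence to $N(0,\sigma_\shift^2)$ in probability, and Slutsky delivers the result. The main obstacle is the bookkeeping in the second paragraph: systematically reorganizing the triple sum over $\ell$ and $k\neq \ell$ so that every original and new data point is weighted precisely as required for the conditional-bias cancellation in $R_n$ to be exact, which is what permits the doubly-robust $o_P(n^{-1/2})$ rate under only slow $L_2$ convergence of the nuisances.
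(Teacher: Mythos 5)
Your proposal is correct and follows essentially the same route as the paper's proof: the same decomposition of the residual into a doubly-robust product term (killed by Cauchy--Schwarz and the product rate), a weight-error term against $\psi-\eta$ (conditional mean zero, sup-norm consistency), and a nuisance-error term whose conditional means over $\mathbb{P}$-weighted-by-$w$ and $\mathbb{Q}$ cancel exactly by cross-fitting and covariate shift, followed by the conditional CLT and a Slutsky argument. The only cosmetic difference is that the paper packages the conditional CLT as a standalone characteristic-function lemma rather than invoking Lindeberg directly.
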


As before, a noteworthy feature of this result is that 
asymptotically the variance does not depend on $m$. 
The 
inference procedure 
in Theorem~\ref{thm:est_cov_shift} 
relies on 
the construction of $\hat\sigma_{\shift}$, 
$\hat{w}_\ell(\cdot)$  
and $\hat\eta^{\cI_k}(\cdot)$. 
In Section~\ref{sec:algo}, 
we provide 
a stand-alone procedure to 
obtain $\hat\eta^{\cI_k}(\cdot)$ 
with a single fold  
${\cI_k}$  (c.f.~Algorithm~\ref{alg:eta_est}) 
and a detailed procedure  
to estimate $\sigma_{\shift}^2$ (c.f.~Algorithm~\ref{alg:sigma_shift_est}).

\section{Details for asymptotic linearity} 


\subsection{Asymptotic linearity 
for conditional inference}\label{app:linear_z}
\begin{proof}[Proof of Proposition~\ref{prop:linear_exp}]
We first show the consistency of $\hat\theta_n \stackrel{P}{\to}\theta^0$ 
and $\theta_n^\cond\stackrel{P}{\to} \theta^0$, 
where the convergence in probability is in each entry. 
The consistency of $\hat\theta_n$ follows directly from the classical results
~\citep[Theorem 5.9]{Vaart1998}. Similarly, we note that $\theta_n^\cond$ 
is the unique solution to~\eqref{eq:est} with the score function replaced by $t(Z_i,\theta)$.
Thus, under the given conditions we have the consistency of $\theta_n^\cond$  
following~\cite[Theorem 5.9]{Vaart1998}. 

We now employ the Taylor expansion argument to obtain 
the asymptotic linearity~\eqref{eq:linear_hat_n} and~\eqref{eq:exp}. 
Recall that $s(D,\theta)\colon \cD\times \Omega \to \RR^p$. 
Expanding $\sum_{i=1}^n s(D_i, \hat\theta_n)$ at $\theta^0$ yields 
\$
0 = \frac{1}{\sqrt{n}} \sum_{i=1}^n s(D_i,\theta^0) +   \frac{1}{\sqrt{n}} \sum_{i=1}^n \dot{s}(D_i,\theta^0) (\hat\theta_n - \theta^0) + \frac{1}{2\sqrt{n}}\sum_{i=1}^n (\hat\theta_n - \theta^0)^\top \ddot{s}(D_i,\tilde\theta_n) (\hat\theta_n - \theta^0),
\$
where the random vector $\tilde\theta_n$ 
lies within the segment between $\theta^0$ and $\hat\theta_n$. 
Rearranging the terms, we have 
\$
- \frac{1}{\sqrt{n}} \sum_{i=1}^n s(D_i,\theta^0)
=  \bigg\{ \frac{1}{n}\sum_{i=1}^n s(D_i,\theta^0) + \frac{1}{2 n}\sum_{i=1}^n (\hat\theta_n - \theta^0)^\top \ddot{s}(D_i,\tilde\theta_n) \bigg\} \cdot \sqrt{n}(\hat\theta_n - \theta^0).
\$
The law of large numbers implies 
$\frac{1}{n} \sum_{i=1}^n \dot{s}(D_i,\theta^0) = \EE\{\dot{s}(D ,\theta^0)\} + o_P(1)$, 
where $E\{\dot{s}(D,\theta^0)\}$ is non-singular according to (iv). 
Meanwhile, 
Condition (iv) implies $\|\frac{1}{2n}\sum_{i=1}^n (\hat\theta_n - \theta^0)^\top \ddot{s}(D_i,\tilde\theta_n)  \|_1 \leq \|\hat\theta_n-\theta^0\|_1  \cdot \frac{1}{2n}\sum_{i=1}^n g(D_i) = o_P(1 )$ since $ \hat\theta_n  $ converges in probability to $\theta^0$. Hence 
\$
\big[ \EE\{\dot{s}(D ,\theta^0)\} + o_P(1) \big] \cdot \sqrt{n}(\hat\theta_n - \theta^0) = - \frac{1}{\sqrt{n}} \sum_{i=1}^n s(D_i,\theta^0) . 
\$
On the left-handed side,  $o_P(1)$ 
means a random matrix where each entry converges in probability to zero. 
Thus we have 
\$
\sqrt{n}(\hat\theta_n - \theta^0) = - \frac{1}{\sqrt{n}} \sum_{i=1}^n \big[\EE\{\dot{s}(D ,\theta^0)\} \big]^{-1} s(D_i,\theta^0) + o_P(1). 
\$
That is, the asymptotic linearity~\eqref{eq:linear_hat_n} holds with 
\$
\phi(D) =  - \big[\EE\{\dot{s}(D ,\theta^0)\} \big]^{-1} s(D ,\theta^0).
\$
On the other hand, recall the observation that $\theta_n^\cond$ 
is the unique solution to~\eqref{eq:est} with the score function replaced by $t(Z_i,\theta)$.
Meanwhile, condition (iv) implies also $\|\ddot{s}_{jk}(Z,\theta)\|\leq \EE\{g(D)\given Z\}$ 
due to Jensen's inequality; and 
$E\{\dot{t}(Z,\theta^0)\} = E\{\dot{s}(D,\theta^0)\}$ 
due to the tower property 
of conditional expectations and 
the exchangeability of expectation and derivative in (iv). 
Following exactly the same arguments, we have 
\$
\sqrt{n}( \theta_n^\cond - \theta^0) &= - \frac{1}{\sqrt{n}} \sum_{i=1}^n \big[\EE\{\dot{t}(Z ,\theta^0)\} \big]^{-1} t(Z_i,\theta^0) + o_P(1) \\
&= - \frac{1}{\sqrt{n}} \sum_{i=1}^n \big[\EE\{\dot{s}(D ,\theta^0)\} \big]^{-1} E\big\{s(D_i,\theta^0)\biggiven Z_i\big\} + o_P(1).
\$
That is, the asymptotic linearity~\eqref{eq:exp} holds with the same $\phi(D)$. 
Therefore, we complete the proof of Proposition~\ref{prop:linear_exp}.
\end{proof}

\subsection{Justification for asymptotic linearity in main text}

\label{app:subsec_main_linear_trans}

In this part, we 
show that the asymptotic linearity 
in Assumption~\ref{assump:linear_expansion_known_shift} 
of the main text holds 
under the consistency of $\hat{w}$ and 
regularity conditions that are similar to 
Proposition~\ref{prop:linear_exp} in the main text. 

\begin{proposition}\label{prop:main_new_lin_exp}
Suppose conditions (ii), (iii) in Proposition~\ref{prop:linear_exp} 
hold also at $\theta=\theta_0^\new$
and the following two conditions hold: 
(i')  
$\theta_0^\new$ is the unique solution to~\eqref{eq:theta_dag}, 
$\theta_m^{\cond,\new}$ is the unique solution to~\eqref{eq:eq_new_cond}
and $\hat\theta_n^\trans$ is the unique solution to~\eqref{eq:hat_theta_dag}. 
(iv') For each $j,k$, $\|\ddot{s}_{jk}(D,\theta)\|= \|\partial s(D,\theta)/\partial\theta_j \partial \theta_k\| \leq g(D)$, 
where $g(D)$ and 
$g(D)w(Z)$ are both integrable. 
Also, both $\EE\{\dot{s}(D,\theta_0^\new)\}$ and $\EE\{w(Z)\dot{s}(D,\theta_0^\new)\}$ are non-singular matrices. 
Then Assumption~\ref{assump:linear_expansion_known_shift} holds if $\sup_z|\hat{w}(z)-w(z)|$ converges to zero in probability.
\end{proposition}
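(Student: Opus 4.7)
The plan is to mirror the proof of Proposition~\ref{prop:linear_exp}, with two modifications: for $\hat\theta_n^\trans$ the $\hat w$-weighting must be carried through the Taylor expansion, and for $\theta_m^{\cond,\new}$ the conditional expectations are taken under $\mathbb{Q}$ rather than $\mathbb{P}$. Throughout I will treat $\hat w$ as independent of $\{(D_i,Z_i)\}_{i=1}^n$ (as stipulated in the main text), so that uniform convergence $\sup_z |\hat w(z)-w(z)|=o_P(1)$ effectively lets me replace $\hat w$ by $w$ in limits, while integrability of $g(D)$ and $g(D)w(Z)$ in (iv') controls the remainder terms.

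The first step is to establish consistency $\hat\theta_n^\trans \stackrel{P}{\to}\theta_0^\new$ and $\theta_m^{\cond,\new}\stackrel{P}{\to}\theta_0^\new$. For $\hat\theta_n^\trans$, I would show $\sup_{\theta\in\Theta}\big\|\tfrac{1}{n}\sum_i \hat w(Z_i)s(D_i,\theta)-\EE_Q\{s(D^\new,\theta)\}\big\|\to 0$ in probability, by splitting the left side as $\tfrac{1}{n}\sum_i w(Z_i)s(D_i,\theta)$ (a uniform LLN under compactness of $\Theta$, twice differentiability and the envelope $g$) plus a remainder bounded by $\sup_z|\hat w(z)-w(z)|\cdot\tfrac{1}{n}\sum_i \|s(D_i,\theta)\|$, which vanishes by assumption. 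Uniqueness of $\theta_0^\new$ in~\eqref{eq:theta_dag} and the argmax/argzero theorem~\citep[Theorem 5.9]{Vaart1998} give consistency. For $\theta_m^{\cond,\new}$, the same result applied to $t^\new(z,\theta):=\EE_Q\{s(D^\new,\theta)\given Z^\new=z\}$ (whose integrability and smoothness follow from (iii), (iv') via Jensen and dominated convergence) yields the consistency analogously to the $\theta_n^\cond$ argument in Proposition~\ref{prop:linear_exp}.

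The second step is the Taylor expansion. For $\hat\theta_n^\trans$, expanding $\sum_i \hat w(Z_i) s(D_i,\hat\theta_n^\trans)=0$ to second order around $\theta_0^\new$ gives
\[
0=\tfrac{1}{\sqrt n}\sum_i \hat w(Z_i)s(D_i,\theta_0^\new)+\Big[\tfrac{1}{n}\sum_i \hat w(Z_i)\dot s(D_i,\tilde\theta_n)\Big]\sqrt{n}(\hat\theta_n^\trans-\theta_0^\new)+R_n,
\]
where $\tilde\theta_n$ lies on the segment between $\theta_0^\new$ and $\hat\theta_n^\trans$, and $\|R_n\|\leq \tfrac{1}{2}\|\hat\theta_n^\trans-\theta_0^\new\|^2\cdot\tfrac{1}{\sqrt n}\sum_i \hat w(Z_i)g(D_i)$ by (iv'). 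The Hessian-like bracket converges in probability to $\EE\{w(Z)\dot s(D,\theta_0^\new)\}=\EE_Q\{\dot s(D^\new,\theta_0^\new)\}$ by a uniform LLN analogous to Step~1, using consistency of $\hat\theta_n^\trans$ and twice differentiability of $s$; this limit is nonsingular by (iv'). Since the score sum is $O_P(1)$ (mean zero in the limit under covariate shift, finite variance by (iv')), $\sqrt n(\hat\theta_n^\trans-\theta_0^\new)=O_P(1)$, so $R_n=o_P(1)$. Inverting the Hessian gives~\eqref{eq:new_hat_lin_exp} with $\psi$ as stated. For $\theta_m^{\cond,\new}$, the same Taylor-expansion argument applied to $\sum_j t^\new(Z_j^\new,\theta)=0$ delivers~\eqref{eq:new_cond_lin_exp} directly, with the identity $\dot t^\new(z,\theta_0^\new)=\EE_Q\{\dot s(D^\new,\theta_0^\new)\given Z^\new=z\}$ following from exchanging derivative and expectation via (iv').

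The main obstacle will be the uniform-in-$\theta$ control of $\tfrac{1}{n}\sum_i \hat w(Z_i)\dot s(D_i,\theta)$ when $\hat w$ is random: the cleanest route is to condition on the auxiliary sample (so $\hat w$ is deterministic), apply a standard uniform LLN under the envelope $g(D)$ from (iv'), and then use $\sup_z|\hat w(z)|\leq \sup_z w(z)+o_P(1)$ together with $\EE\{g(D)w(Z)\}<\infty$ to bound the difference from the $w$-weighted version. The other subtlety is verifying $\tfrac{1}{\sqrt n}\sum_i \hat w(Z_i)s(D_i,\theta_0^\new)=O_P(1)$; conditionally on $\hat w$, its conditional mean is $\sqrt n\,\EE\{[\hat w(Z)-w(Z)]s(D,\theta_0^\new)\}$, which is not automatically $o_P(1)$ but is $O_P(1)$ once we note it appears in the final expression with $\hat w$ rather than $w$, so the expansion is stated correctly without needing to replace $\hat w$ by $w$ in the leading score term.
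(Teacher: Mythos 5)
Your proposal is correct and follows essentially the same route as the paper's proof: condition on the auxiliary data so $\hat w$ is fixed, establish consistency of $\hat\theta_n^\trans$ via a uniform law of large numbers and Theorem 5.9 of van der Vaart (splitting off the $\sup_z|\hat w(z)-w(z)|$ remainder), Taylor-expand the weighted estimating equation with the envelope $g$ controlling the second-order term, and reduce the expansion of $\theta_m^{\cond,\new}$ to the Proposition~\ref{prop:linear_exp} argument applied under $\mathbb{Q}$. Your closing observation that the leading score term must be kept with $\hat w$ rather than $w$ is exactly how the paper states~\eqref{eq:new_hat_lin_exp}, so no gap arises there.
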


\begin{proof}[Proof of Proposition~\ref{prop:main_new_lin_exp}]
Since $\hat{w}$ is obtained from an external dataset, 
we condition on the training process of $\hat{w}$; 
in this way, we view $\hat{w}(\cdot)$ 
as fixed. 
Also, without loss of generality we suppose 
$\sup_z|\hat{w}(z)-w(z)|$ converges to zero 
as $n\to \infty$. 

We first show  $\hat\theta_n^{\trans}\stackrel{P}{\to} \theta_0^\new$.  
To this end, we utilize Theorem 5.9 of~\cite{Vaart1998} and define  
\$
\hat{S}(\theta) = \frac{1}{n}\sum_{i=1}^n \hat{w} (Z_i) s(D_i,\theta),\quad S(\theta)=\EE\big\{w(Z)s(D,\theta)\big\}, 
\$ 
so that it suffices to show (a) $\sup_{\theta\in \Theta}|\hat{S}(\theta) - S(\theta)| \to 0$ in probability, and 
(b) for any $\epsilon>0$, there exists some $\delta>0$ 
such that $\inf_{\|\theta-\theta_0^\new\|_2>\delta} |S(\theta)-S(\theta_0^\new)|>\epsilon$. 
Firstly, for any fixed $\theta\in \Theta$, we have 
\$
\big|\hat{S}(\theta) - S(\theta)\big| \leq \frac{1}{ n}\sum_{i =1}^n \big|\hat{w} (Z_i) - w(Z_i)\big| \cdot\big| s(D_i,\theta)\big|
+ \bigg| \frac{1}{ n}\sum_{i =1}^n w(Z_i)s(D_i,\theta) - S(\theta)\bigg|,
\$
where 
\$
\frac{1}{ n}\sum_{i =1}^n \big|\hat{w} (Z_i) - w(Z_i)\big| \cdot\big| s(D_i,\theta)\big|
\leq \sup_{z}\big|\hat{w} (z)-w(z)| \cdot \frac{1}{ n}\sum_{i =1}^n \big| s(D_i,\theta)\big| = o_P(1)
\$
by the consistency assumption 
and the integrability of $s(D,\theta)$. 
The second term also converges to zero by the law of large numbers. Hence $|\hat{S}(\theta)-S(\theta)|= o_P(1)$ 
for any fixed $\theta\in \Theta$. 
By compactness of $\Theta$ in condition (ii) of Proposition~\ref{prop:linear_exp} as well as the continuity 
of $\hat{S}(\theta)$ and $S(\theta)$, we know that 
the uniform convergence in (a) holds. 
The compactness of $\Theta$ and the uniqueness of solution $\theta_0^\new$ implies the well-separatedness condition (b)
(c.f.~Theorem 5.9 of~\cite{Vaart1998}). 
Thus we have $\hat\theta_n^{\trans}\to \theta_0^\new$ 
in probability as $n\to \infty$. 

We now employ a Taylor expansion argument to show the 
asymptotic linearity. Expanding $\hat{S}(\hat\theta_n^{\trans})$ 
at $\theta_0^\new$ yields
\$
0 &= \sum_{i =1}^n \hat{w} (Z_i) s(D_i,\theta_0^\new) 
+ \sum_{i =1}^n \hat{w} (Z_i) \dot{s}(D_i,\theta_0^\new) (\hat\theta_n^{\trans} - \theta_0^\new) \\
&\qquad + 
\frac{1}{2}\sum_{i =1}^n \hat{w} (Z_i) (\hat\theta_n^{\trans} - \theta_0^\new)^\top  \ddot{s}(D_i,\theta_0^\new) (\hat\theta_n^{\trans} - \theta_0^\new).
\$
Utilizing the fact that each entry of 
$\ddot{s}(D_i,\theta_0^\new)$ is 
controlled by an integrable $g(D_i)$, the random variable
\$
\frac{1}{ n}\sum_{i=1}^n \hat{w} (Z_i) \ddot{s}(D_i,\theta_0^\new) = \frac{1}{ n}\sum_{i=1}^n \big\{\hat{w} (Z_i)-w(Z_i) \big\} \ddot{s}(D_i,\theta_0^\new)
+ \frac{1}{ n}\sum_{i=1}^n {w}(Z_i) \ddot{s}(D_i,\theta_0^\new) 
\$
is of order $O_P(1)$, hence 
\$
\frac{1}{2n}\sum_{i=1}^n \hat{w} (Z_i) (\hat\theta_n^{\trans} - \theta_0^\new)^\top  \ddot{s}(D_i,\theta_0^\new) = o_P(1).
\$
The above $O_P(1)$ and $o_P(1)$ are both in the entry-wise sense. 
Reorganizing the Taylor expansion,  
\$
- \frac{1}{ \sqrt{n}}\sum_{i=1}^n \hat{w} (Z_i) s(D_i,\theta_0^\new) 
= \bigg\{ o_P(1) +\frac{1}{ n}\sum_{i=1}^n \hat{w} (Z_i) \dot{s}(D_i,\theta_0^\new)\bigg\}\cdot \sqrt{n} (\hat\theta_n^{\trans} - \theta_0^\new) .
\$
Following similar arguments as before, we also have 
\$
\frac{1}{ n}\sum_{i =1}^n \hat{w} (Z_i) \dot{s}(D_i,\theta_0^\new) =  \EE\big\{ w(Z_i) \dot{s}(D_i,\theta_0^\new)  \big\}  + o_P(1).
\$
Since the expected matrix is invertible by condition (iv') in Proposition~\ref{prop:main_new_lin_exp}, we know 
\$
\sqrt{n}(\hat\theta_n^{\trans} - \theta_0^\new)
=- \frac{1}{\sqrt{n}}\sum_{i=1}^n \Big[\EE\big\{ w(Z_i) \dot{s}(D_i,\theta_0^\new)  \big\}\Big]^{-1} \hat{w} (Z_i) s(D_i,\theta_0^\new) + o_P(1),
\$
which is equivalent to the linear expansion 
for $\hat\theta_n^\trans$ in 
Assumption~\ref{assump:linear_expansion_known_shift} 
by the definition of $\psi(\cdot)$. 

We now show the linear expansion of $\theta_m^{\cond,\new}$. 
Note that the conditions in Proposition~\ref{prop:main_new_lin_exp}
imply the same conditions as Proposition~\ref{prop:linear_exp} 
when we substitute $(D_i,Z_i)\sim \mathbb{P}$ 
with $(D_j^\new,Z_j^\new) \sim \mathbb{Q}$. 
Thus, applying the same arguments as 
those in the proof of Proposition~\ref{prop:linear_exp} 
leads to the linear expansion of $\theta_m^{\cond,\new}$ 
in Assumption~\ref{assump:linear_expansion_known_shift} 
of the main text. 
Therefore, we complete the proof of Proposition~\ref{prop:main_new_lin_exp}. 
\end{proof}

\subsection{Justification for asymptotic linearity with cross-fitting in Section~\ref{app:subsec_cross_fitting_est} }
\label{app:subsec_linear_est}

In this part, we justify the asymptotic linearity 
for cross-fitted estimators in 
Assumption~\ref{assump:linear_est_shift} of
Section~\ref{app:subsec_cross_fitting_est}; 
we show that it holds 
under consistency of $\hat{w}_\ell$ (Assumption~\ref{assump:consist_w}) 
and mild regularity conditions. 

\begin{proposition}\label{prop:linear_hat_theta_est}
Suppose Assumption~\ref{assump:consist_w}, 
conditions (ii), (iii) in Proposition~\ref{prop:linear_exp} 
and conditions  (iv') in Proposition~\ref{prop:main_new_lin_exp} hold. Also, suppose 
the following  condition  hold: 
(i'')  
$\theta_0^\new$ is the unique solution to~\eqref{eq:theta_dag}, 
$\theta_m^{\cond,\new}$ is the unique solution to~\eqref{eq:eq_new_cond}
and $\hat\theta_n^{\new,(\ell)}$ is the unique solution to~\eqref{eq:def_new_hat_theta_1}. 
Then Assumption~\ref{assump:linear_est_shift} holds.  
\end{proposition}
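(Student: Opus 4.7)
The plan is to reduce the statement to Proposition~\ref{prop:main_new_lin_exp} by exploiting the independence provided by cross-fitting, and then patch together the three folds. Write $\cJ_\ell = \cI \setminus \cI_\ell$, so that $|\cJ_\ell|=2n/3$ and $\hat\theta_n^{\new,(\ell)}$ solves $\sum_{i\in \cJ_\ell}\hat{w}_\ell(Z_i)s(D_i,\theta)=0$ by condition (i''). Since the sample is i.i.d.~and $\hat{w}_\ell$ depends only on $(\cI_\ell,\cI_\ell^\new)$, the estimator $\hat{w}_\ell$ is independent of $\{(D_i,Z_i):i\in\cJ_\ell\}$. Conditionally on the training fold, we may therefore treat $\hat{w}_\ell$ as a deterministic function satisfying $\sup_z|\hat{w}_\ell(z)-w(z)|\to 0$ in probability (by Assumption~\ref{assump:consist_w}).

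The first step is to verify that, under this conditioning, we are exactly in the setup of Proposition~\ref{prop:main_new_lin_exp} applied to the sub-sample $\cJ_\ell$: conditions (ii), (iii), (iv') and (i'') (with $\hat\theta_n^\trans$ replaced by $\hat\theta_n^{\new,(\ell)}$) all transfer verbatim because they do not involve $\hat{w}$. Hence the consistency argument via Theorem~5.9 in~\cite{Vaart1998} and the Taylor expansion argument both apply with $n$ replaced by $2n/3$, yielding
\$
\sqrt{2n/3}(\hat\theta_n^{\new,(\ell)}-\theta_0^\new) = \frac{1}{\sqrt{2n/3}}\sum_{i\in \cJ_\ell}\hat{w}_\ell(Z_i)\psi(D_i) + o_P(1),
\$
which is exactly the first display in \eqref{eq:linear_hat_theta_est}. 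To promote the conditional $o_P(1)$ back to an unconditional one, I would invoke the standard tower argument: if $\PP(|R_n|>\epsilon \mid \cI_\ell,\cI_\ell^\new)\to 0$ in probability for every $\epsilon>0$, then by dominated convergence the unconditional probability vanishes as well.

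For the second display in \eqref{eq:linear_hat_theta_est}, the expansion of $\theta_m^{\cond,\new}$ around $\theta_0^\new$ does not involve any estimated weight function at all; under the conditions (i''), (ii), (iii) and (iv') this is exactly the expansion for conditional parameters that Proposition~\ref{prop:main_new_lin_exp} already establishes (the second half of its proof), applied to the new super-population $\mathbb{Q}$ with score $s(\cdot,\theta)$ and conditioning variables $Z_j^\new$. Nothing additional needs to be proved here.

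The main potential obstacle is the interplay between the randomness of $\hat{w}_\ell$ (which is only $o_P(1)$-close to $w$ in sup-norm, not almost surely) and the law-of-large-numbers and Taylor-remainder bounds inside the proof of Proposition~\ref{prop:main_new_lin_exp}. The cleanest way around this is to argue along subsequences: along any subsequence, extract a further subsequence on which $\sup_z|\hat{w}_\ell(z)-w(z)|\to 0$ almost surely and on which the independence of $\hat{w}_\ell$ from $\cJ_\ell$ makes the deterministic-$\hat{w}$ argument of Proposition~\ref{prop:main_new_lin_exp} go through pathwise; the Urysohn-type subsequence principle for convergence in probability then delivers the required $o_P(1)$ error terms. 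Once this technicality is handled, all remaining calculations—uniform control of $\frac{1}{|\cJ_\ell|}\sum_{i\in\cJ_\ell}\{\hat{w}_\ell(Z_i)-w(Z_i)\}s(D_i,\theta)$, boundedness of the Hessian via the envelope $g$ from (iv'), and invertibility of $\EE\{w(Z)\dot s(D,\theta_0^\new)\}$—are mild repetitions of the computations already carried out for Proposition~\ref{prop:main_new_lin_exp}.
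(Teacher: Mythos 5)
Your proposal is correct and its skeleton coincides with the paper's: consistency of $\hat\theta_n^{\new,(\ell)}$ via the well-separatedness criterion of Theorem~5.9 in van der Vaart, a Taylor expansion of the weighted score at $\theta_0^\new$ with the second-order term controlled by the envelope $g$ from condition (iv'), inversion of $\EE\{w(Z)\dot s(D,\theta_0^\new)\}$, and the observation that the expansion of $\theta_m^{\cond,\new}$ was already established in Proposition~\ref{prop:main_new_lin_exp}. The one place where you genuinely diverge is the treatment of the estimated weight: you condition on the training fold $(\cI_\ell,\cI_\ell^\new)$, invoke the cross-fitting independence to keep the distribution of $\{(D_i,Z_i)\}_{i\in\cJ_\ell}$ unchanged, and then repair the fact that $\sup_z|\hat w_\ell(z)-w(z)|$ is only $o_P(1)$ (not a.s.\ small) by extracting almost-surely convergent subsequences and integrating back via dominated convergence. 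This is valid, but the paper shows it is unnecessary machinery here: because the consistency assumption is \emph{uniform} in $z$, every term involving $\hat w_\ell-w$ is bounded unconditionally by $\sup_z|\hat w_\ell(z)-w(z)|$ times an empirical average of $|s(D_i,\theta)|$, $|\dot s(D_i,\theta_0^\new)|$ or $g(D_i)$, i.e.\ by $o_P(1)\cdot O_P(1)=o_P(1)$, with no appeal to independence, conditioning, or subsequences. Your route buys nothing extra in this setting (it would matter if the weight were only $L_2$-consistent, where the cross-fitting independence becomes essential), while the direct bound is shorter and avoids the delicate step of promoting conditional $o_P(1)$ statements back to unconditional ones. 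Both arguments are sound.
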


\begin{proof}[Proof of Proposition~\ref{prop:linear_hat_theta_est}]
The proof is quite similar to that of 
Proposition~\ref{prop:main_new_lin_exp}, 
except for the notational complexity 
and the change in sample size due to 
cross-fitting. 
The asymptotic linearity of $\theta_m^{\cond,\new}$ 
has been proved in Proposition~\ref{prop:main_new_lin_exp}. 
We thus only need to prove~\eqref{eq:linear_hat_theta_est} 
for $\hat\theta_n^{\new,(\ell)}$. 

We first show the consistency of  $\hat\theta_n^{\new,(\ell)}\stackrel{P}{\to} \theta_0^\new$. 
Without loss of generality,  
the sample size is $ |\cI\backslash \cI_\ell| = 2n/3$.
To this end, we utilize Theorem 5.9 of~\cite{Vaart1998}, 
so that it suffices to show (a) $\sup_{\theta\in \Theta}|\hat{S}(\theta) - S(\theta)| \to 0$ in probability, and 
(b) for any $\epsilon>0$, there exists some $\delta>0$ 
such that $\inf_{\|\theta-\theta_0^\new\|_2>\delta} |S(\theta)-S(\theta_0^\new)|>\epsilon$, where we define 
\$
\hat{S}(\theta) = \frac{3}{2n}\sum_{i \notin \cI_\ell} \hat{w}_\ell(Z_i) s(D_i,\theta),\quad S(\theta)=\EE\big\{ w(Z)s(D,\theta)\big\}.
\$
Firstly, for any fixed $\theta\in \Theta$, we have 
\$
\big|\hat{S}(\theta) - S(\theta)\big| \leq \frac{3}{2n}\sum_{i \notin \cI_\ell} \big|\hat{w}_\ell(Z_i) - w(Z_i)\big| \cdot\big| s(D_i,\theta)\big|
+ \bigg| \frac{3}{2n}\sum_{i \notin \cI_\ell} w(Z_i)s(D_i,\theta) - S(\theta)\bigg|,
\$
where 
\$
\frac{3}{2n}\sum_{i \notin \cI_\ell} \big|\hat{w}_\ell(Z_i) - w(Z_i)\big| \cdot\big| s(D_i,\theta)\big|
\leq \sup_{z}\big|\hat{w}_\ell(z)-w(z)| \cdot \frac{3}{2n}\sum_{i \notin \cI_\ell} \big| s(D_i,\theta)\big| = o_P(1)
\$
by Assumption~\ref{assump:consist_w} and the integrability of $s(D,\theta)$. The second term also converges to zero by the law of large numbers. Hence $|\hat{S}(\theta)-S(\theta)|= o_P(1)$ 
for any fixed $\theta\in \Theta$. 
By compactness of $\Theta$ in condition (ii) of Proposition~\ref{prop:linear_exp} as well as the continuity 
of $\hat{S}(\theta)$ and $S(\theta)$, we know that 
the uniform convergence in (a) holds. 
The compactness of $\Theta$ and the uniqueness of solution $\theta_0^\new$ implies the well-separatedness condition (b)
(c.f.~Theorem 5.9 of~\cite{Vaart1998}). 
Thus we have $\hat\theta_n^{\new,(\ell)}\to \theta_0^\new$ 
in probability as $n\to \infty$. 

We now employ a Taylor expansion argument to show the 
asymptotic linearity. Expanding $\hat{S}(\hat\theta_n^{\new,(\ell)})$ 
at $\theta_0^\new$ yields
\$
0 &= \sum_{i \notin \cI_\ell} \hat{w}_{\ell}(Z_i) s(D_i,\theta_0^\new) 
+ \sum_{i \notin \cI_\ell} \hat{w}_{\ell}(Z_i) \dot{s}(D_i,\theta_0^\new) (\hat\theta_n^{\new,(\ell)} - \theta_0^\new) \\
&\qquad + 
\frac{1}{2}\sum_{i \notin \cI_\ell} \hat{w}_{\ell}(Z_i) (\hat\theta_n^{\new,(\ell)} - \theta_0^\new)^\top  \ddot{s}(D_i,\theta_0^\new) (\hat\theta_n^{\new,(\ell)} - \theta_0^\new).
\$
Here utilizing the fact that each entry of 
$\ddot{s}(D_i,\theta_0^\new)$ is 
controlled by an integrable $g(D_i)$, the random variable
\$
\frac{3}{2n}\sum_{i \notin \cI_\ell} \hat{w}_\ell (Z_i) \ddot{s}(D_i,\theta_0^\new) = \frac{3}{2n}\sum_{i \notin \cI_\ell} \big\{ \hat{w}_{\ell}(Z_i)-w(Z_i) \big\}\ddot{s}(D_i,\theta_0^\new)
+ \frac{3}{2n}\sum_{i \notin \cI_\ell} {w}(Z_i) \ddot{s}(D_i,\theta_0^\new) 
\$
is of order $O_P(1)$, hence 
\$
\frac{3}{4n}\sum_{i \notin \cI_\ell} \hat{w}_{\ell}(Z_i) (\hat\theta_n^{\new,(\ell)} - \theta_0^\new)^\top  \ddot{s}(D_i,\theta_0^\new) = o_P(1).
\$
The above $O_P(1)$ and $o_P(1)$ are both in the entry-wise sense. 
Reorganizing the Taylor expansion,  
\$
- \frac{1}{ \sqrt{2n/3}}\sum_{i \notin \cI_\ell} \hat{w}_{\ell}(Z_i) s(D_i,\theta_0^\new) 
= \bigg\{ o_P(1) +\frac{3}{2n}\sum_{i \notin \cI_\ell} \hat{w}_{\ell}(Z_i) \dot{s}(D_i,\theta_0^\new)\bigg\}\cdot \sqrt{2n/3} (\hat\theta_n^{\new,(\ell)} - \theta_0^\new) .
\$
Following similar arguments as before, we also have 
\$
\frac{3}{2n}\sum_{i \notin \cI_\ell} \hat{w}_{\ell}(Z_i) \dot{s}(D_i,\theta_0^\new) =  \EE\big\{ w(Z_i) \dot{s}(D_i,\theta_0^\new)  \big\}  + o_P(1).
\$
Since the expected matrix is invertible by condition (iv') in Proposition~\ref{prop:main_new_lin_exp}, we know 
\$
\sqrt{2n/3}(\hat\theta_n^{\new,(\ell)} - \theta_0^\new)
=- \frac{1}{\sqrt{2n/3}}\sum_{i \notin \cI_\ell} \Big[ \EE\big\{ w(Z_i) \dot{s}(D_i,\theta_0^\new)  \big\}\Big]^{-1} \hat{w}_{\ell}(Z_i) s(D_i,\theta_0^\new) + o_P(1),
\$
which is equivalent to~\eqref{eq:linear_hat_theta_est} 
by the definition of $\psi(\cdot)$. 
Therefore, we complete the proof of Proposition~\ref{prop:linear_hat_theta_est}.
\end{proof}

\section{Proofs for main results}

\subsection{Proofs of validity of conditional inference}\label{app:cond_inf_single}

This section contains the proof of Theorem~\ref{thm:cond_intv}.
Before proving Theorem~\ref{thm:cond_intv}, 
we first state and prove an intermediate result 
on the asymptotic distribution of $\hat\theta_n-\theta_n^\cond$.

\begin{proposition}
\label{prop:cond_distr}
Suppose Assumptions \ref{assump:linear_main} and \ref{assump:moment_main}
hold. For any fixed $x\in \RR$, the random variable 
    $ 
    \PP \{ \sqrt{n}(\hat \theta_n - \theta_n^{\cond}) \le x \given \cZ_n \} 
    $ 
converges in probability to  $\Phi(x/\sigma)$, 
where  $\Phi$ is the cumulative distribution function (c.d.f.) of standard Gaussian distribution, and $\sigma^2$ is defined in equation~\eqref{eq:def_asymp_var}.
\end{proposition}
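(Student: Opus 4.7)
My plan is to combine the two asymptotic linear expansions into a single empirical average of conditionally mean-zero summands, apply a conditional central limit theorem, and then absorb the $o_P(1)$ remainders using a standard argument that passes between marginal and conditional convergence in probability.

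First, subtracting equation~\eqref{eq:exp} from equation~\eqref{eq:linear_hat_n} gives
\#\label{eq:sketch_diff}
\sqrt{n}(\hat\theta_n - \theta_n^\cond) = \frac{1}{\sqrt{n}}\sum_{i=1}^n W_i + R_n, \quad W_i := \phi(D_i) - \varphi(Z_i),
\#
where $\varphi(z) = \EE\{\phi(D)\given Z = z\}$ and $R_n = o_P(1)$ under Assumption~\ref{assump:linear_main}. The summands $W_i$ are i.i.d., satisfy $\EE(W_i\given Z_i) = 0$ by construction, and obey $\EE(W_i^2) = \sigma^2$ by the tower property together with the definition~\eqref{eq:def_asymp_var}. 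Conditionally on $Z_{1:n}$, the $W_i$ remain mutually independent with zero conditional mean and conditional variances $v_i = \Var\{\phi(D_i)\given Z_i\}$.

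Second, I would invoke the conditional CLT (Lemma~\ref{lem:cond_clt}) on the partial sums $S_n = \frac{1}{\sqrt{n}}\sum_{i=1}^n W_i$. Two ingredients need to be checked. (a) The conditional variance $\frac{1}{n}\sum_{i=1}^n v_i$ converges to $\sigma^2$ in probability: this is a direct application of the law of large numbers, since Assumption~\ref{assump:moment_main} ($\EE\{\phi(D)^4\} < \infty$) ensures $\EE(v_i)$ is finite and equals $\sigma^2$. (b) A conditional Lindeberg-type condition: for any $\varepsilon > 0$,
\$
\frac{1}{n}\sum_{i=1}^n \EE\big\{W_i^2 \mathbf{1}(|W_i| > \varepsilon \sqrt{n})\biggiven Z_i\big\} \to 0
\$
in probability. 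The fourth-moment bound on $\phi(D)$ (and hence on $W_i$, since $\EE\{\varphi(Z)^4\} \le \EE\{\phi(D)^4\}$ by Jensen) lets me control this term by $\varepsilon^{-2} n^{-1} \cdot \frac{1}{n}\sum_{i=1}^n \EE(W_i^4\given Z_i)$, which tends to zero in probability by LLN applied to $\EE(W_i^4\given Z_i)$. Lemma~\ref{lem:cond_clt} then yields $\PP(S_n \le y \given Z_{1:n}) \to \Phi(y/\sigma)$ in probability for each continuity point $y$ of the limit, in particular for every $y \in \RR$.

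Third, I need to transfer this result from $S_n$ to $\sqrt{n}(\hat\theta_n - \theta_n^\cond) = S_n + R_n$. Here the subtle point is that $R_n = o_P(1)$ is stated marginally. However, for any $\delta > 0$, $\PP(|R_n| > \delta) \to 0$ implies $\EE\{\PP(|R_n| > \delta \given Z_{1:n})\} \to 0$, so $\PP(|R_n| > \delta \given Z_{1:n}) \to 0$ in probability. Combining this with the sandwich inequality
\$
\PP(S_n \le x - \delta \given Z_{1:n}) - \PP(|R_n| > \delta \given Z_{1:n}) \le \PP(S_n + R_n \le x \given Z_{1:n}) \le \PP(S_n \le x + \delta \given Z_{1:n}) + \PP(|R_n| > \delta \given Z_{1:n}),
\$
taking $n \to \infty$ and then $\delta \to 0$ and using continuity of $\Phi$ at $x/\sigma$ gives convergence to $\Phi(x/\sigma)$ in probability.

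The main obstacle I anticipate is the verification of the conditional Lindeberg condition; the other steps are essentially LLN plus the sandwich argument. If Lemma~\ref{lem:cond_clt} is stated with a Lyapunov-type $(2+\gamma)$-moment hypothesis rather than Lindeberg, the $\EE\{\phi(D)^4\}<\infty$ assumption directly supplies $\gamma = 2$ and the verification simplifies further.
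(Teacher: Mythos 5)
Your proposal is correct and follows essentially the same route as the paper: the same decomposition into the conditionally centered summands $\phi(D_i)-\EE\{\phi(D_i)\given Z_i\}$ plus an $o_P(1)$ remainder, the same appeal to the conditional CLT of Lemma~\ref{lem:cond_clt}, the same Markov-inequality step to convert marginal $o_P(1)$ of the remainder into conditional $o_P(1)$ (the paper's Lemma~\ref{lem:op_to_op}), and the same sandwich argument. Your explicit verification of a conditional Lindeberg condition is unnecessary here, since the paper's Lemma~\ref{lem:cond_clt} is already stated under the fourth-moment hypothesis supplied directly by Assumption~\ref{assump:moment_main} and in fact delivers almost sure (not merely in-probability) convergence of the conditional law.
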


\begin{proof}[Proof of Proposition \ref{prop:cond_distr}]
By Assumptions \ref{assump:linear_main} and \ref{assump:moment_main}, we have 
\$
\sqrt{n}\big( \hat\theta_n - \theta_n^\cond \big) = \frac{1}{\sqrt{n}}\sum_{i=1}^n \Big[  \phi(D_i) - \EE\big\{\phi(D_i)\given Z_i \big\} \Big] + o_{P}(1).
\$
For notational simplicity, we write
\$
d_n = \sqrt{n}\big(\hat\theta_n - \theta_n^\cond\big) - \frac{1}{\sqrt{n}}\sum_{i=1}^n \zeta_i,\quad \text{where }~ \zeta_i= \phi(D_i)  -\EE\{\phi(D_i)\given Z_i \},~i=1,\dots,n,
\$
where $d_n=o_P(1)$ follows from the given conditions. 
Hence Lemma \ref{lem:op_to_op} implies that for any fixed $\epsilon>0$, 
\#\label{eq:d_n_o1}
\PP\big\{ |d_n|>\epsilon \biggiven \cZ_n\big\} =o_P(1).
\#
On the other hand, we denote the conditional law of 
the essential term as 
$$
\mathcal{L}_n = \mathcal{L}\Big(\frac{1}{\sqrt{n}}\sum_{i=1}^n \big(\phi(D_i)-\EE[\phi(D_i)\given Z_i ]\big) \Biggiven Z_{1:n}\Big).
$$
By the conditional CLT in Lemma \ref{lem:cond_clt}, taking $g(X_i) = \phi(D_i)$ and the filtration $\cF_n =\sigma(Z_{1:n})= \sigma(\{Z_i \}_{i=1}^n)$, we know that the conditional law $\cL_n$ converges almost surely to $N\big(0,\sigma^2\big)$
with $\sigma^2$ defined in equation~\eqref{eq:def_asymp_var}. 
That is, for any $x\in \RR$, we have
$$
\PP\big\{\sqrt{n}(\hat\theta_n -\theta_n^\cond )+d_n \leq x\given Z_{1:n}\big\}~\asto~ \Phi\Big(\frac{x}{\sigma }\Big),
$$
where $\Phi(\cdot)$ is the cumulative distribution function of standard normal distribution. 
By equation \eqref{eq:d_n_o1}, for any constant $\epsilon>0$, it holds that 
\#\label{eq:p+}
&\PP\big\{\sqrt{n}(\hat\theta_n -\theta_n^\cond) \leq x\biggiven Z_{1:n} \big\}\notag \\
&=\PP\big\{\sqrt{n}(\hat\theta_n -\theta_n^\cond) \leq x, |d_n|\leq \epsilon \biggiven Z_{1:n} \big\} + \PP\big\{\sqrt{n}(\hat\theta_n -\theta_n^\cond) \leq x, |d_n|> \epsilon \biggiven Z_{1:n} \big\}\notag \\
&\leq \PP\big\{\sqrt{n}(\hat\theta_n -\theta_n^\cond)+d_n \leq x+\epsilon \biggiven Z_{1:n} \big\} + \PP\big(|d_n|> \epsilon \biggiven Z_{1:n} \big)
= \Phi\Big(\frac{x+\epsilon }{\sigma }\Big) + o_P(1).
\#
On the other hand, we have
\#\label{eq:p-}
&\PP\big\{\sqrt{n}(\hat\theta_n -\theta_n^\cond) \leq x\biggiven Z_{1:n} \big\}\notag \\
&\geq \PP\big\{\sqrt{n}(\hat\theta_n -\theta_n^\cond)+d_n \leq x-\epsilon,|d_n|\leq \epsilon \biggiven Z_{1:n} \big\}\notag \\
&\geq  \PP\big\{\sqrt{n}(\hat\theta_n -\theta_n^\cond)+d_n \leq x-\epsilon \biggiven Z_{1:n}\big\} -\PP\big(|d_n|> \epsilon \biggiven Z_{1:n} \big)    
= \Phi\Big(\frac{x-\epsilon }{\sigma }\Big) + o_P(1).
\#
By the arbitrariness of $\epsilon>0$ in equations \eqref{eq:p+} and \eqref{eq:p-},  for any fixed $x\in \RR$, it holds that
\$
\PP\big\{\sqrt{n}(\hat\theta_n -\theta_n^\cond) \leq x\biggiven Z_{1:n} \big\} = \Phi\Big(\frac{x}{\sigma }\Big) + o_P(1).
\$
Therefore, we conclude the proof of Proposition~\ref{prop:cond_distr}. 
\end{proof}

The proof of Theorem~\ref{thm:cond_intv} is as follows.

\begin{proof}[Proof of Theorem \ref{thm:cond_intv}]
By Proposition~\ref{prop:cond_distr}, 
for any fixed $x\in \RR$, 
\$
\PP\big\{ \sqrt{n}( \hat\theta_n  - \theta_n^{\cond} )\leq x \biggiven Z_{1:n}\big\} = \Phi(x/\sigma) + o_{P}(1).
\$
For any fixed constant $\epsilon>0$, we write $z^-(\epsilon) = z_{1-\alpha/2}(\sigma-\epsilon)$
and $z^+(\epsilon) = z_{1-\alpha/2}(\sigma+\epsilon)$. 
Denoting 
\$
\Delta^\pm(\epsilon) = \PP\Big\{ \sqrt{n}\big|\hat\theta_n  - \theta_n^{\cond} \big|\leq z^\pm(\epsilon) \Biggiven Z_{1:n}\Big\} - \Big\{2\Phi\big(z^\pm(\epsilon)/\sigma\big)-1\Big\},
\$
we have $\Delta^+(\epsilon), \Delta^-(\epsilon)=o_P(1)$ by 
Proposition~\ref{prop:cond_distr}. 
Since the estimator $\hat\sigma\stackrel{P}{\to}\sigma$,  we have
\#\label{eq:pval1}
&\PP\Big( \sqrt{n}\big|\hat\theta_n  - \theta_n^{\cond} \big|\leq z_{1-\alpha/2}\cdot \hat\sigma \Biggiven Z_{1:n}\Big) - (1-\alpha) \notag \\
&\geq \PP\Big(\{\sqrt{n}\big|\hat\theta_n  - \theta_n^{\cond} \big|\leq z_{1-\alpha/2}\cdot ( \sigma-\epsilon) \Biggiven Z_{1:n}\Big\} - (1-\alpha)  + \PP ( \hat\sigma < \sigma-\epsilon \given Z_{1:n} )\notag  \\
&= 2\Phi\big\{z^-(\epsilon)/\sigma\big\}-1 - (1-\alpha)  + \PP ( \hat\sigma < \sigma-\epsilon \given Z_{1:n} ) +   \Delta^-(\epsilon),
\#
where the conditional probability 
$\PP ( \hat\sigma < \sigma-\epsilon \given \cZ_n )=o_{P}(1)$ by Lemma \ref{lem:op_to_op}. 
On the other hand, for any fixed constant $\epsilon>0$, we have 
\#\label{eq:pval2}
&\PP\Big( \sqrt{n}\big|\hat\theta_n  - \theta_n^{\cond} \big|\leq z_{1-\alpha/2}\cdot \hat\sigma \Biggiven Z_{1:n}\Big)\notag \\
&\leq \PP\Big\{ \sqrt{n}\big|\hat\theta_n  - \theta_n^{\cond} \big|\leq z_{1-\alpha/2}\cdot ( \sigma+\epsilon) \Biggiven Z_{1:n}\Big\}  + \PP ( \hat\sigma > \sigma+\epsilon \given Z_{1:n} )\notag  \\
&= 2\Phi\big(z^+(\epsilon)/\sigma\big) -1 - (1-\alpha)  + \PP ( \hat\sigma < \sigma+\epsilon \given Z_{1:n} ) +   \Delta^+(\epsilon),
\#
where $\PP ( \hat\sigma > \sigma+\epsilon \given Z_{1:n} )=o_{P}(1)$ by Lemma \ref{lem:op_to_op}. 
Thus for any fixed constant $\delta>0$, we can choose some fixed $\epsilon>0$ such that 
$2\Phi(z^-(\epsilon)/\sigma)- 1 -(1-\alpha) >-\delta/2$ and $2\Phi\{z^+(\epsilon)/\sigma\}-1 -(1-\alpha)<\delta/2$. 
Combining equations \eqref{eq:pval1} and \eqref{eq:pval2}, 
we have 
\$
&\PP\bigg\{\Big| \PP\big( \sqrt{n}\big|\hat\theta_n  - \theta_n^{\cond} \big|\leq z_{1-\alpha/2}\cdot \hat\sigma \biggiven Z_{1:n}\big) - (1-\alpha) \Big| >\delta\bigg\} \\
&\leq \PP\big\{\PP ( \hat\sigma < \sigma-\epsilon \given Z_{1:n} ) +   \Delta^-(\epsilon) < -\delta/2\big\} + \PP\big\{\PP ( \hat\sigma < \sigma+\epsilon \given Z_{1:n} ) +   \Delta^+(\epsilon) >\delta/2\big\} \to 0.
\$
By the arbitrariness of $\delta>0$, we complete  the proof of Theorem \ref{thm:cond_intv}.
\end{proof}

\subsection{Proof of Theorem~\ref{thm:iid_simple}}
\label{app:thm_iid_simple}

\begin{proof}[Proof of Theorem~\ref{thm:iid_simple}]
We note that 
all the methods and conditions 
in Theorem~\ref{thm:iid_simple}  
can be viewed as a special case 
for those in Section~\ref{app:subsec_cross_fitting_known} 
with $w(z)\equiv 1$. 
Thus, Theorem~\ref{thm:iid_simple} could be 
viewed as a corollary for Theorem~\ref{thm:transfer}, 
whose proof is in Section~\ref{app:thm_transfer} 
in this supplementary material. 
\end{proof}

\subsection{Proof of Theorem~\ref{thm:est_cov_shift_simple}}
\label{app:thm_cov_shift_simple}

\begin{proof}[Proof of Theorem~\ref{thm:est_cov_shift_simple}]
Throughout this proof, we condition on the 
training process of $\hat{w}$ and $\hat{\eta}$, 
so that they are deterministic functions. 
All probabilities and expectations 
are with respect to the i.i.d.~samples in the two datasets. 
By Assumption~\ref{assump:linear_expansion_known_shift} 
and the fact that $m\geq \epsilon n$, 
we have  
\$
&\hat\theta_{m,n}^{\trans} - \theta_m^{\cond,\new} \\
&= \frac{1}{n}\sum_{i=1}^n \psi(D_i) \hat{w}(Z_i) 
- \frac{1}{n} \sum_{i =1}^n  \hat \eta (Z_i) \hat w(Z_i)
+ \frac{1}{m} \sum_{j=1}^m \big\{  \hat\eta(Z_j^\new) - \eta(Z_j^\new) \big\} + o_P(1/\sqrt{n}) \\ 
&= \underbrace{\frac{1}{n}\sum_{i=1}^n \big\{\psi(D_i) - \eta(Z_i)\big\} \hat{w}(Z_i) }_{\textrm{(i)}}
- \underbrace{\frac{1}{n}\sum_{i=1}^n \big\{\hat\eta(Z_i) - \eta(Z_i)\big\} \big\{\hat{w}(Z_i)- w(Z_i)\big\}}_{\textrm{(ii)}} \\
&\qquad + \underbrace{\frac{1}{m} \sum_{j=1}^m \big\{ \hat\eta(Z_j^\new) - \eta(Z_j^\new) \big\} - \frac{1}{n}\sum_{i=1}^n \big\{\hat\eta(Z_i) - \eta(Z_i)\big\} w(Z_i) }_{\textrm{(iii)}} + o_P(1/\sqrt{n}) .
\$
We now treat these three terms separately. 
Firstly, term (i) can be decomposed as 
\$
\textrm{(i)} = \frac{1}{n}\sum_{i=1}^n \big\{\psi(D_i) - \eta(Z_i)\big\} \big\{\hat{w}(Z_i)- w(Z_i)\big\}
+  \frac{1}{n}\sum_{i=1}^n \big\{\psi(D_i) - \eta(Z_i)\big\}  {w}(Z_i),
\$
where each i.i.d.~copy in the first summation obeys
\$
\EE\big[ \big\{\psi(D_i) - \eta(Z_i) \big\} \big\}\hat{w}(Z_i)- w(Z_i) \big\}\big] = 
\EE\Big[ \EE\big\{ \psi(D_i) - \eta(Z_i) \given Z_i   \big\}  
\cdot \big\{\hat{w}(Z_i)- w(Z_i) \big\}\Big] = 0.
\$
By Markov's inequality, we have 
\$
 &\frac{1}{n}\sum_{i=1}^n \big\{\psi(D_i) - \eta(Z_i)\big\} \big\{\hat{w}(Z_i)- w(Z_i)\big\} \\
 &= O_P\big\{ \|\psi(D)-\eta(Z)\|_{L_2(\mathbb{P})} 
 \cdot \|\hat{w}(Z)-w(Z)\|_{L_2(\mathbb{P})}/\sqrt{n} \big\}
 = o_P(1/\sqrt{n})
\$
due to the consistency condition on $\hat{w}$. 
Secondly, 
by the product rate of $\hat\eta$ and $\hat{w}$, 
term (ii) can be bounded by Cauchy-Schwarz inequality as 
\$
\big|\textrm{(ii)}\big|\leq O_P\Big\{ \big\|\hat{w} (\cdot) - w(\cdot) \big\|_{L_2(\mathbb{P})} \cdot \big\|  \hat\eta (\cdot) - \eta(\cdot)\big\|_{L_2(\mathbb{P})}  \Big\}
= o_P(1/\sqrt{n}).
\$
Noting the covariate shift between $Z_i$ and $Z_j^\new$, 
we know that  
\$
\EE\big\{\hat\eta(Z_j^\new)-\eta(Z_j^\new)\big\} 
= \EE\big\{ (\hat\eta(Z_i)-\eta(Z_i))w(Z_i)   \big\}.
\$
Subtracting both sides from term (iii), we obtain 
\$
\textrm{(iii)}& = \frac{1}{m} \sum_{j=1}^m \Big[   \hat\eta(Z_j^\new) - \eta(Z_j^\new) - \EE\big\{\hat\eta(Z_j^\new)-\eta(Z_j^\new)\big\}  \Big] \\ 
&\qquad - \frac{1}{n}\sum_{i=1}^n \Big(\big\{\hat\eta(Z_i) - \eta(Z_i)\big\} w(Z_i) - \EE\big[\{\hat\eta(Z_i)-\eta(Z_i)\} w(Z_i)   \big] \Big).
\$
In the first summation, 
the i.i.d.~copies are mean zero with variance 
bounded by $\|\hat\eta(Z_j^\new)-\eta(Z_j^\new)\|_{L_2(\mathbb{Q})} = o_P(1)$. 
By Markov's inequality, the first summation is  
bounded by $o_P(1/\sqrt{n})$. 
Similarly, the second summation is also bounded by 
$o_P(1/\sqrt{n})$. 
Thus, we have $\textrm{(iii)}=o_P(1/\sqrt{n})$. 
Combining the three terms together, we obtain 
\$
\sqrt{n}\big(\hat\theta_{m,n}^{\trans} - \theta_m^{\cond,\new} \big)
= \frac{1}{\sqrt{n}}\sum_{i=1}^n \big\{\psi(D_i) - \eta(Z_i)\big\}  {w}(Z_i) +o_P(1). 
\$
Applying the conditional CLT 
in Lemma~\ref{lem:cond_clt} to 
$g(X_i)=\psi(D_i)w(Z_i)$ and the filtration 
\$
\cF_n = \sigma\big(\{Z_i\}_{i=1}^n \cup \{Z_j^\new\}_{j=1}^m\big),
\$
we know that conditional on (almost all) $\cZ_m^\new\cup\cZ_n$, 
$\sqrt{n}\big(\hat\theta_{m,n}^{\trans} - \theta_m^{\cond,\new} \big)$ converges 
in distribution to $N(0,\sigma_{\shift}^2)$.
Finally, by the consisntency of $\hat\sigma_{\shift}^2$ 
to $\sigma_{\shift}^2$ and the Slutsky's theorem, 
we obtain the conditional validity of the 
confidence intervals. We thus complete the 
proof of Theorem~\ref{thm:transfer}. 
\end{proof}

\subsection{Proof of Theorem \ref{thm:transfer}} \label{app:thm_transfer}

\begin{proof}[Proof of Theorem \ref{thm:transfer}]
Recall that $\eta(z)=\EE\{\psi(D)\given Z=z\}$;
by the invariance of conditional distribution of $D$ given $Z$, 
we have 
$\EE\{\psi(D_i)\given Z_i\} = \eta(Z_i)$ and 
$\EE\{\psi(D_j^\new)\given Z_j^\new\} = \eta(Z_j^\new)$ 
for all $i\in[n]$ and all $j\in [m]$. 
In the following, we are to show that
\begin{equation}\label{eq:trans-expansion}
        \hat\theta_{m,n}^{\trans} - \theta_m^{\cond,\new} =  \frac{1}{n} \sum_{i=1}^n w(Z_i)\big\{\psi(D_i) - \eta(Z_i) \big\} + o_P\big\{1/\sqrt{\min(n,m)}\big\}. 
\end{equation}
By the asymptotic linearity in Assumption~\ref{assump:linear_expansion_known_shift} 
(with $\hat{w}:=w$), we have
\$
    \hat \theta_n^\trans - \theta_m^{\cond,\new} &=  \frac{1}{n} \sum_{i=1}^n  w(Z_i)\psi(D_i) - \frac{1}{m} \sum_{j=1}^m \eta(Z_j^\new) + o_P\big(1/\sqrt{n}+1/\sqrt{m}\big).
\$
By the definition of $\hat\theta_{m,n}^\trans$ in equation \eqref{eq:trans_new}, we have the decomposition
\#\label{eq:trans_decomp}
   & \hat\theta_{m,n}^\trans - \theta_m^{\cond,\new}  =  \hat\theta_n^\trans - \hat c^\trans - \theta_m^{\cond,\new} \notag \\
&=  \frac{1}{n} \sum_{i=1}^n w(Z_i)\psi(D_i) - \hat c^\trans - \frac{1}{m} \sum_{j=1}^m \eta(Z_j^\new) + o_P\big(1/\sqrt{n}+1/\sqrt{m}\big).  \notag \\
&=  \frac{1}{n}\sum_{i=1}^n w(Z_i)\big\{\psi(D_i) - \eta(Z_i)\big\} + o_P\big(1/\sqrt{n}+1/\sqrt{m}\big) +\textrm{(i)} +\textrm{(ii)},
\#
where 
\$
\textrm{(i)} &=  \frac{1}{n}\sum_{i=1}^n w(Z_i)\eta(Z_i) - \frac{1}{2|\cI_1|}\sum_{i\in \cI_1}  w(Z_i)\hat\eta^{\cI_2}(Z_i) - \frac{1}{2|\cI_2|}\sum_{i\in \cI_2}  w(Z_i)\hat\eta^{\cI_1}(Z_i) \\
&\quad \quad + \frac{1}{2}\EE_{\mathbb{P}}\big\{w(Z)\hat\eta^{\cI_1}(Z)\biggiven \cI_1\big\} + \frac{1}{2}\EE_{\mathbb{P}}\big\{w(Z)\hat\eta^{\cI_2}(Z)\biggiven \cI_2\big\}, \notag \\
\textrm{(ii)} &=  \frac{1}{2m}\sum_{j=1}^m \big\{ \hat\eta^{\cI_1}(Z_j^\new) + \hat\eta^{\cI_2}(Z_j^\new)  \big\}- \frac{1}{m} \sum_{j=1}^m \eta(Z_j^\new) - \frac{1}{2}\EE_{\mathbb{Q}}\big\{\hat\eta^{\cI_1}(Z)\biggiven \cI_1\big\} - \frac{1}{2}\EE_{\mathbb{Q}}\big\{\hat\eta^{\cI_2}(Z)\biggiven \cI_2\big\} .\notag
\$
Here the decomposition uitilizes the fact that 
\$
\EE_{\mathbb{P}}\big\{w(Z_i)\hat\eta^{\cI_k}(Z_i)\biggiven \cI_k\big\}
= \EE_{\mathbb{Q}}\big\{\hat\eta^{\cI_k}(Z_j^\new)\biggiven \cI_k\big\}
\$
for $i\notin \cI_k$ and $j\in \cI^\new$, $k=1,2$, 
which follows from the fact that $\mathbb{P}$ and $\mathbb{Q}$ 
are related with a covariate shift $w(Z)$, 
and the estimation of $\eta^{\cI_k}$ is 
independent of $\cI^\new$ when $w(\cdot)$ is known. 

In the sequel, we bound the terms (i) and (ii) separately. 
Since $\cI_1$ and $\cI_2$ are (approximately) equal-sized with $|\cI_1|+|\cI_2|=n$, we have 
\$
\text{(i)} &= \underbrace{\frac{1}{2|\cI_1|}\sum_{i\in \cI_1}   w(Z_i)\eta (Z_i) - \frac{1}{2|\cI_1|}\sum_{i\in \cI_1}  w(Z_i)\hat\eta^{\cI_2}(Z_i)+ \frac{1}{2}\EE\big\{w(Z)\hat\eta^{\cI_2}(Z)\biggiven \cI_2\big\} } _{\text{(i,a)}} \\
&\qquad +  \underbrace{\frac{1}{2|\cI_2|}\sum_{i\in \cI_2}  w(Z_i) \eta (Z_i) - \frac{1}{2|\cI_2|}\sum_{i\in \cI_2}w(Z_i)\hat\eta^{\cI_1}(Z_i)+ \frac{1}{2}\EE\big\{w(Z) \hat\eta^{\cI_1}(Z)\biggiven \cI_1\big\}}_{\text{(i,b)}} + O_P(1/n).
\$
For the term (i,a), we note that for $i\in \cI_1$ where $(D_i,Z_i)\sim \mathbb{P}$, 
\$
\EE\big\{w(Z_i)\eta(Z_i)\biggiven \cI_2\big\} = \EE_{\mathbb{Q}} \big\{\eta(Z)\big\}=0.
\$ 
Hence we can write $\text{(i,a)} = \frac{1}{2|\cI_1|}\sum_{i\in \cI_1} \xi_i$, 
where 
\$
\xi_i = w(Z_i)\big\{\eta (Z_i) - \hat\eta^{\cI_2}(Z_i)\big\} -\EE\big\{w(Z_i)\eta (Z_i) -w(Z_i) \hat\eta^{\cI_2}(Z_i)\biggiven \cI_2\big\}.
\$
Conditional on $\cI_2$, $\{\xi_i\}_{i\in \cI_1}$ are $\iidtext$ with mean zero, 
since the estimation of $\hat\eta^{\cI_2}$ 
does not use data in $\cI_1$. Therefore, we have 
\$
\EE\big\{ \text{(i,a)} ^2\biggiven \cI_2\big\} = \frac{1}{4|\cI_1|} \EE(\xi_i^2\given \cI_2)
\leq \frac{1}{4|\cI_1|} \big\|w(\cdot)\{\eta(\cdot) - \hat\eta^{\cI_2}(\cdot)\}\big\|_{L_2(\mathbb{P})}, 
\$
where $\big\|w(\cdot)(\eta(\cdot) - \hat\eta^{\cI_2}(\cdot))\big\|_{L_2(\mathbb{P})}^2 = \EE[\{w(Z)(\eta(Z)-\hat\eta^{\cI_2}(Z))\}^2]$ 
for an independent copy $Z\sim \mathbb{P}$. 
Thus, by Assumption~\ref{assump:cov_shift}, we have 
$
n\cdot \EE\big\{ \text{(i,a)} ^2\biggiven \cI_2\big\} = o_P(1).
$
Referring to Lemma~\ref{lem:op_to_op} for non-negative random variables 
$n\cdot \text{(i,a)}^2$ and the filtration composed of $\cI_2$, 
we have $\text{(i,a)} = o_P(1/\sqrt{n})$. 
The same arguments also apply to the term (i,b), which lead to 
\$
\big|\text{(i)}\big| = o_P(1/\sqrt{n}).
\$
Furthermore, the arguments apply similarly to the term (ii) with sample size $m$, hence 
\$
|\text{(ii)}| = o_P(1/\sqrt{m}).
\$
Putting them together, we have 
\$
\hat\theta_{m,n}^\trans - \theta_m^{\cond,\new}  = \frac{1}{n}\sum_{i=1}^n w(Z_i)\big\{\psi(D_i) - \eta(Z_i)\big)\} + o_P\big(1/\sqrt{n}+1/\sqrt{m}\big).
\$
Applying the conditional CLT result in Lemma \ref{lem:cond_clt} to $g(X_i)=w(Z_i)\psi(D_i)$ and filtrations 
\$
\cF_n = \sigma\big( \{Z_i\}_{i=1}^n, \{Z_j^\new\}_{j=1}^m \big),
\$ 
we know that 
conditional on $\cZ_m^\new \cup \cZ_n$, 
$\frac{1}{\sqrt{n}}\sum_{i=1}^n w(Z_i)\big\{\psi(D_i) - \eta(Z_i)\big\}$ converges in distribution 
to $N(0,\sigma_{\shift}^2)$ almost surely. 
Thus, with similar arguments as in the proof of Theorem~\ref{thm:cond_intv}
for a consistent estimator $\hat\sigma_{\shift}^2$,  
we obtain the desired results in Theorem \ref{thm:transfer}.
\end{proof}

\subsection{Proof of Theorem~\ref{thm:est_cov_shift}}
\label{app:subsec_proof_est_cov_shift}

\begin{proof}[Proof of Theorem~\ref{thm:est_cov_shift}]
For notational simplicity, 
for $\ell=1,2,3$, 
we denote $\cI_{\ell,1}$ and $\cI_{\ell,2}$ as 
the two remaining folds other than $\cI_\ell$, 
and similarly for $\cI_{\ell,1}^\new$ and $\cI_{\ell,2}^\new$. 
We also denote $\hat\eta_{\ell,1}$ and $\hat\eta_{\ell,2}$ 
as the estimators obtained with the two folds. 
For any dataset indexed by $\cI$, 
we use $\cI$ to represent the random variables 
when there is no confusion. 

For any fixed $\ell$, by~\eqref{eq:linear_hat_theta_est} 
and the definition of $\hat{c}^{(\ell)}$, we have 
\$
&\hat\theta_n^{\trans,(\ell)} - \frac{3}{2m} \sum_{j\notin \cI_\ell^\new} \eta(Z_j^\new) \\
= &~
\frac{3}{2n}\sum_{i \notin \cI_\ell}  \hat{w}_{\ell}(Z_i) \psi(D_i) 
- 
\frac{3}{2n} \sum_{i\in \cI_{\ell,1}} \hat{w}_\ell(Z_i) \hat\eta_{\ell,2}(Z_i) - \frac{3}{2n} \sum_{i\in \cI_{\ell,2}} \hat{w}_\ell(Z_i) \hat\eta_{\ell,1}(Z_i) \\
&\qquad + \frac{3}{2m}\sum_{j \in \cI_{\ell,2}^\new}   \hat\eta_{\ell,1}(Z_j^\new)
+ \frac{3}{2m}\sum_{j \in \cI_{\ell,1}^\new}  \hat\eta_{\ell,2}(Z_j^\new) 
- \frac{3}{2m} \sum_{j\notin \cI_\ell^\new} \eta(Z_j^\new)
+ o_P(1/\sqrt{n}).
\$
Writing $\Delta_w(\cdot) = \hat{w}_\ell(\cdot) - w(\cdot)$  
and $\Delta_\eta^{(k)}(\cdot) = \hat\eta_{\ell,k}(\cdot) - \eta(\cdot)$ for 
$k=1,2$, we have 
\$
&\hat\theta_n^{\trans,(\ell)} - \frac{3}{2m} \sum_{j\notin \cI_\ell^\new} \eta(Z_j^\new)  
-
\frac{3}{2n}\sum_{i \notin \cI_\ell}  \hat{w}_{\ell}(Z_i) \big( \psi(D_i) - \eta(Z_i) \big) \\
&= - \frac{3}{2n}\sum_{i\in \cI_{\ell,1}}  \hat{w}_{\ell}(Z_i)  \Delta_\eta^{(2)}(Z_i) 
- \frac{3}{2n} \sum_{i\in \cI_{\ell,2}} \hat{w}_\ell(Z_i) \Delta_\eta^{(1)}(Z_i)   + \frac{3}{2m}\sum_{j \in \cI_{\ell,1}^\new}   \Delta_\eta^{(2)}(Z_j^\new) + \frac{3}{2m}\sum_{j \in \cI_{\ell,2}^\new}   \Delta_\eta^{(1)}(Z_j^\new)  \\
&= \underbrace{- \frac{3}{2n}\sum_{i\in \cI_{\ell,1}}  \Delta_w(Z_i)  \Delta_\eta^{(2)}(Z_i) 
- \frac{3}{2n} \sum_{i\in \cI_{\ell,2}} \Delta_w(Z_i) \Delta_\eta^{(1)}(Z_i)}_{\text{(i)}} \\
&\quad \underbrace{-  \frac{3}{2n}\sum_{i\in \cI_{\ell,1}}  {w} (Z_i)  \Delta_\eta^{(2)}(Z_i) 
- \frac{3}{2n} \sum_{i\in \cI_{\ell,2}} w(Z_i) \Delta_\eta^{(1)}(Z_i)   + \frac{3}{2m}\sum_{j \in \cI_{\ell,1}^\new}   \Delta_\eta^{(2)}(Z_j^\new) + \frac{3}{2m}\sum_{j \in \cI_{\ell,2}^\new}   \Delta_\eta^{(1)}(Z_j^\new) }_{\text{(ii)}}.
\$
We bound the two terms (i) and (ii) separately. 
Since the folds 
$\cD_\ell, \cI_{\ell,1},\cI_{\ell,2}$ 
are disjoint, conditional on $\cI_{\ell}\cup \cI_\ell^\new \cup \cI_{\ell,1}$, 
$\{\Delta_w(Z_i)\Delta_\eta^{(1)}(Z_i)\}_{i\in \cI_{\ell,2}}$ are i.i.d.~random variables. By Cauchy-Schwarz inequality 
and Assumption~\ref{assump:cov_est_rate}, we have 
\$
\EE\bigg\{\frac{3}{2n}\sum_{i\in \cI_{\ell,2}} \big| \Delta_w(Z_i)  \Delta_\eta^{(1)}(Z_i)\big|\Biggiven \cI_{\ell}\cup \cI_\ell^\new \cup \cI_{\ell,1} \bigg\}
\leq \|\Delta_w(\cdot) \|_{L_2(\mathbb{P})} \cdot \|  \Delta_\eta^{(1)}(\cdot)\|_{L_2(\mathbb{P})} = o_P(1/\sqrt{n}).
\$
Invoking Lemma~\ref{lem:cond_to_op} and 
by symmetry of $\cI_{\ell,1}$ and $\cI_{\ell,2}$, we know that 
\$
\big|\text{(i)}\big| = \bigg| \frac{3}{2n}\sum_{i\in \cI_{\ell,1}}  \Delta_w(Z_i)  \Delta_\eta^{(2)}(Z_i) +\frac{3}{2n} \sum_{i\in \cI_{\ell,2}} \Delta_w(Z_i) \Delta_\eta^{(1)}(Z_i)\bigg| = o_P(1/\sqrt{n}).
\$
Furthermore, 
note that the estimation of $\hat\eta_{\ell,k}$ 
only depends on $\cI_{\ell,k}$ and $\cI_{\ell,k}^\new$
for each $k=1,2$. 
Since $\mathbb{P}$, $\mathbb{Q}$ admit a covariate shift, we have 
\$
\EE\big\{w(Z_i)\Delta_\eta^{(k)}(Z_i) \biggiven \cI_\ell\cup\cI_{\ell,k}\cup \cI_{\ell}^\new \cup \cI_{\ell,k}^\new \big\} = \EE_{\mathbb{Q}}\big\{ \Delta_\eta^{(k)}(Z_j^\new) \biggiven \cI_\ell \cup \cI_{\ell,k} \cup \cI_{\ell}^\new \cup \cI_{\ell,k}^\new \big\} := E_\Delta^{(k)}
\$
for $i\in \cI_{\ell,3-k}$ and $j \in \cI_{\ell,3-k}^\new$, $k=1,2$. 
Then we have 
\$
\text{(ii)} &= -  \frac{3}{2n}\sum_{i\in\cI_{\ell,1}}  \Big\{ {w} (Z_i)   \Delta_\eta^{(2)}(Z_i)  - E_\Delta^{(2)} \Big\}  
- \frac{3}{2n} \sum_{i\in \cI_{\ell,2}}  
\Big\{ {w} (Z_i)   \Delta_\eta^{(1)}(Z_i)  -  E_\Delta^{(1)} \Big\} \\
&\qquad + \frac{3}{2m}\sum_{j\notin \cI_\ell^\new}   \Big\{ \Delta_\eta^{(1)}(Z_j^\new) + \Delta_\eta^{(2)}(Z_j^\new)
 - E_\Delta^{(1)} - E_\Delta^{(2)} \Big\}.
\$
Note that conditional on $\cI_{\ell}\cup \cI_\ell^\new \cup\cI_{\ell,2} \cup \cI_{\ell,2}^\new$, 
the random variables 
$\{w(Z_i)\Delta_\eta^{(2)}(Z_i)  - E_\Delta^{(2)}\}_{i\in \cI_{\ell,1}}$
are i.i.d.~and mean zero. Hence 
\$
& n\cdot \EE \bigg(\Big[\frac{3}{2n}\sum_{i\in \cI_{\ell,1}}  \big\{ {w} (Z_i)   \Delta_\eta^{(2)}(Z_i)  - E_\Delta^{(2)} \big\} \Big]^2 \bigggiven \cI_{\ell}\cup \cI_\ell^\new \cup \cI_{\ell,2}\cup \cI_{\ell,2}^\new \bigg)\\
&= 3/2 \cdot\big\|{w} (\cdot)   \Delta_\eta^{(2)}(\cdot) \big\|_{L_2(\mathbb{P})}^2
= o_P(1)
\$
by Assumption~\ref{assump:cov_est_rate}. 
Invoking Lemma~\ref{lem:cond_to_op} again 
with similar arguments for all other terms, 
we have 
\$
\big|\text{(ii)} \big| = o_P(1/\sqrt{n}+1/\sqrt{m}) = o_{P}(1/\sqrt{n}).
\$
Putting the two bounds together, we have 
\#\label{eq:cov_shift_dr_eq1}
\hat\theta_n^{\trans,(\ell)} - \frac{3}{2m} \sum_{j\notin \cI_\ell^\new} \eta(Z_j^\new)  
-
\frac{3}{2n}\sum_{i \notin \cI_\ell}  \hat{w}_{\ell}(Z_i) \big\{ \psi(D_i) - \eta(Z_i) \big\} = o_P(1/\sqrt{n}).
\#
Furthermore, note that $\EE[\psi(D_i) - \eta(Z_i) \given Z_i]=0$ almost surely, hence conditional on $\cI_\ell \cup \cI_\ell^\new$, 
the random variables
$\{\Delta_{w}(Z_i)[\psi(D_i) - \eta(Z_i) ]\}_{i\notin \cI_\ell}$
are i.i.d.~and mean zero. Thus
\$
&n\cdot \EE\bigg(\Big[\frac{3}{2n}\sum_{i \notin \cI_\ell}  \Delta_{w}(Z_i) \big\{ \psi(D_i) - \eta(Z_i) \big\}\Big]^2 \bigggiven \cI_\ell \cup \cI_\ell^\new \bigg) \\
& = 3/2\cdot \big\|\Delta_{w}(Z_i) \big\{ \psi(D_i) - \eta(Z_i) \big\}\big\|_{L_2(\mathbb{P})}^2 \\
&\leq  3/2 \cdot \sup_{z}\big|\hat{w}_\ell(z)-w(z)\big|^2 \cdot \big\| \psi(D_i) - \eta(Z_i) \big\|_{L_2(\mathbb{P})}^2 = o_P(1).
\$
The last equation follows from $\sup_{z}\big|\hat{w}_\ell(z)-w(z)\big| = o_P(1)$ in Assumption~\ref{assump:cov_est_rate} 
as well as the fact that 
$\psi(D_i)$ and $\eta(Z_i)$ 
both have finite $L_2(\mathbb{P})$ norms. 
Invoking Lemma~\ref{lem:cond_to_op}, we know 
\#\label{eq:cov_shift_dr_eq2}
\frac{3}{2n}\sum_{i \notin \cI_\ell}  \Delta_{w}(Z_i) \big\{ \psi(D_i) - \eta(Z_i) \big\} = o_P(1/\sqrt{n}). 
\#
Combining equations~\eqref{eq:cov_shift_dr_eq1} and~\eqref{eq:cov_shift_dr_eq2}, we have 
\$
\hat\theta_n^{\trans,(\ell)} - \frac{3}{2m} \sum_{j\notin \cI_\ell^\new} \eta(Z_j^\new)  
-
\frac{3}{2n}\sum_{i \notin \cI_\ell}  w(Z_i) \big\{ \psi(D_i) - \eta(Z_i) \big\} = o_P(1/\sqrt{n}).
\$
Recalling the sample splitting protocol, averaging over $\ell=1,2,3$, we thus have 
\$
\hat\theta_{m,n}^{\trans,\shift} - \frac{1}{ m} \sum_{j=1}^m \eta(Z_j^\new)  
-
\frac{1}{n}\sum_{i =1}^n  w(Z_i) \big\{ \psi(D_i) - \eta(Z_i) \big\} = o_P(1/\sqrt{n}),
\$
which (since $m\geq \epsilon n$ for some $\epsilon>0$) 
further leads to 
\$
\sqrt{n}(\hat\theta_{m,n}^{\trans,\shift} - \theta_m^{\cond,\new} )
= 
\frac{1}{\sqrt{n}}\sum_{i =1}^n  w(Z_i) \big\{ \psi(D_i) - \eta(Z_i) \big\} +  o_P(1).
\$
Finally, applying the conditional central limit theorem 
of Lemma~\ref{lem:cond_clt} 
to $w(Z_i)\{\psi(D_i)-\eta(Z_i)\}$ which 
has finite fourth moment, 
and invoking Lemma~\ref{lem:op_to_op}, 
it holds for any $x\in \RR$ that  
\$
\PP\Big\{ \sqrt{n}(\hat\theta_{m,n}^{\trans,\shift} - \theta_m^{\cond,\new} ) \leq x \Biggiven Z_{1:m}^\new, Z_{1:n}\Big\} = \Phi(x/\sigma_{\shift}) + o_P(1). 
\$
Since $\hat\sigma_{\shift}\to \sigma_{\shift}$ in probability, 
with exactly the same arguments as those in the proof of Theorem~\ref{thm:cond_intv}, we obtain the desired result 
in Theorem~\ref{thm:est_cov_shift}. 
\end{proof}

\section{Proof of extension results}

\subsection{Proof of fixed-attributes results} 
\label{app:subsec_fix}

\begin{proof}[Proof of Proposition~\ref{prop:linear_fixed}]
We first show that $\hat\theta_n \stackrel{P}{\to} \theta_n^\cond$. 
By the optimality of $\theta_n^\cond$, 
we have the first-order condition 
$\nabla L_n(\theta_n^\cond) = 0$, 
i.e., $\sum_{i=1}^n \EE\{\nabla \ell(D_i,\theta)\given z_i\}=0$ 
with the exchangeability of 
gradient and expectation 
under Assumption~\ref{assump:fixed}. 
By definition, we have 
$
\nabla^2 \hat{L}_n(\theta_n^\cond) - \nabla^2  {L}_n(\theta_n^\cond) = \frac{1}{n}\sum_{i=1}^n 
\big( \nabla^2 \ell(D_i,\theta_n^\cond) - \EE [\nabla^2 \ell(D_i,\theta_n^\cond)\given z_i ] \big),
$
which converges (elementwise) to zero by the law of 
large numbers. 
In particular, 
it holds with probability tending to 1 that
\#\label{eq:conv_hessian}
\nabla^2 \hat{L}_n(\theta_n^\cond) - \nabla^2  {L}_n(\theta_n^\cond) \succeq - c_2/2 \cdot \mathbf{I}_{p\times p}.
\#
Recalling condition (iv), 
we now invoke Lemma~\ref{lem:convex} on the event 
that~\eqref{eq:conv_hessian} holds, 
and take $f=\hat{L}_n $, 
$\theta=\hat\theta_n$, $\theta_0 = \theta_n^\cond$, 
and $\lambda=c_2/2$, $c=c_1$. 
By the fact that $\hat{L}_n(\hat\theta_n)\leq \hat{L}_n(\theta_n^\cond)$, we have 
\$
\min\big\{ \|\hat\theta_n-\theta_n^\cond\|^2, c_1 \|\hat\theta_n-\theta_n^\cond\| \big\} 
&\leq \frac{2}{\lambda}\big\{ \hat{L}_n(\hat\theta_n) 
- \hat{L}_n(\theta_n^\cond) -\nabla \hat{L}_n(\theta_n^\cond)^\top (\hat\theta_n-\theta_n^\cond) \big\}\\ 
&\leq   \frac{2}{\lambda}  \big\| \nabla \hat{L}_n(\theta_n^\cond) \big\| \cdot  \|\hat\theta_n-\theta_n^\cond\|.
\$
We further note by the law of large numbers that  
$\nabla \hat{L}_n(\theta_n^\cond) - \nabla L_n(\theta_n^\cond)
= o_P(1)$ with entrywise convergence. 
Thus, $\|\nabla \hat{L}_n(\theta_n^\cond)\|=o_P(1)$, 
and the above inequality leads to $\|\hat\theta_n-\theta_n^\cond\| = o_P(1)$. 

We then use Taylor expansion of $\nabla \hat L_n$ 
around $\hat \theta_n$ to show the 
asymptotic linearity. 
As $\nabla \hat L_n(\hat\theta_n)= 0$, 
\#\label{eq:taylor_fix}
- \nabla \hat{L}_n(\theta_n^\cond)
= \nabla^2 \hat L_n(\tilde\theta_n) (\hat\theta_n - \theta_n^\cond),
\#
where $\tilde\theta_n$ lies on the 
segment between $\hat\theta_n$ and $\theta_n^\cond$. 
Thus, condition (iii) in Assumption~\ref{assump:fixed} implies  
\$
\big\|\nabla^2 \hat L_n(\tilde\theta_n) 
- \nabla^2 \hat L_n(\theta_n^\cond) \big\|_{\text{op}}
\leq \frac{1}{n}\sum_{i=1}^n m_n(D_i)^2 
\|\tilde\theta_n - \theta_n^\cond\| 
\leq \frac{1}{n}\sum_{i=1}^n m_n(D_i)^2  
\|\hat\theta_n - \theta_n^\cond\| = o_P(1). 
\$
Also, the law of large numbers implies 
$\|\nabla^2 L_n(\theta_n^\cond) - \nabla^2 \hat{L}_n(\theta_n^\cond) \|_{\text{op}}=o_P(1)$, 
which further implies 
$ \|\nabla^2 \hat L_n(\tilde\theta_n) 
- \nabla^2   L_n(\theta_n^\cond)  \|_{\text{op}}= o_P(1)$.
Combining this  fact with~\eqref{eq:taylor_fix}, 
we have 
\$
- \frac{1}{\sqrt{n}}\sum_{i=1}^n \nabla \ell(D_i,\theta_n^\cond) = \big\{\nabla^2   L_n(\theta_n^\cond) +o_P(1)\big\} \cdot \sqrt{n}(\hat\theta_n - \theta_n^\cond),
\$
which completes the proof of Proposition~\ref{prop:linear_fixed}.
\end{proof}

\begin{proof}[Proof of Theorem~\ref{thm:fix}]
By Proposition~\ref{prop:linear_fixed}, one has 
\$
\Sigma_n^{-1/2} \sqrt{n}(\hat\theta_n - \theta_n^\cond)
= - \Var\big\{\sqrt{n}\,\nabla \hat{L}_n(\theta_n^\cond) \big\}^{-1/2} 
 \frac{1}{\sqrt{n}}\sum_{i=1}^n\nabla \ell(D_i,\theta_n^\cond)+o_P(1).
\$
Note that $\nabla \ell(D_i,\theta_n^\cond)$ 
are mutually independent. 
By condition (v) in Assumption~\ref{assump:fixed} 
and invoking the Lyapunov's Central Limit Theorem~\citep{billingsley1995probability}, we obtain 
the asymptotic normal distribution 
and completes the proof of Theorem~\ref{thm:fix}. 
\end{proof}

\begin{proof}[Proof of Proposition~\ref{prop:sigma_consist}]
We first show that $\hat{M}^{-1} = \big[\frac{1}{n}\sum_{i=1}^n \EE\{\ddot{\ell}(D_i,\theta_n^\cond)\}\big]^{-1} + o_P(1)$. 
By condition (iii) in Assumption~\ref{assump:fixed}, we have 
$
\big| \hat{M} - \frac{1}{n}\sum_{i=1}^n \ddot{\ell}(D_i, \theta_n^\cond) \big|
\leq \frac{1}{n}\sum_{i=1}^n m_n(D_i) \cdot |\hat\theta_n - \theta_n^\cond| = o_P(1)
$
since $|\hat\theta_n - \theta_n^\cond| = o_P(1)$ from Proposition~\ref{prop:linear_fixed}. 
The law of large numbers thus implies 
the desired result by noting that  
$\big| \frac{1}{n}\sum_{i=1}^n \EE\big\{\ddot{\ell}(D_i,\theta_n^\cond)\big\} - \frac{1}{n}\sum_{i=1}^n \ddot{\ell}(D_i, \theta_n^\cond) \big|=o_P(1)$. Hence $\hat{M}^{-1} =\big[\frac{1}{n}\sum_{i=1}^n \EE \{\ddot{\ell}(D_i,\theta_n^\cond) \}\big]^{-1}+o_P(1)$ as 
the latter is strictly positive definite by condition (iv). 

Furthermore, by condition (iii) of Assumption~\ref{assump:fixed} as well as Proposition~\ref{prop:linear_fixed}, we have 
\$
&\frac{1}{n}\sum_{i=1}^n \big\{ \hat{s}(D_i)   - \dot{\ell}(D_i,\theta_n^\cond) \big\}^2 
=\frac{1}{n}\sum_{i=1}^n \big\{ \dot{\ell}(D_i,\hat\theta_n )   - \dot{\ell}(D_i,\theta_n^\cond) \big\}^2  \\
&\leq \frac{1}{n}\sum_{i=1}^n m_n(D_i)^2 \cdot (\hat\theta_n -\theta_n^\cond)^2 = O_P\big(  |\hat\theta_n -\theta_n^\cond|^2 \big) = o_P(1).
\$
As $\frac{1}{n}\sum_{i=1}^n \big\{\hat{t}(z_i) - \mu(z_i)\big\}^2 = o_P(1)$ 
for a fixed function $\mu\colon \mathbb{Z}\to \RR$, 
Cauchy-Schwarz inequality implies 
$
\frac{1}{n}\sum_{i=1}^n \big\{ \hat{s}(D_i)  - \hat{t}(z_i) - \dot{\ell}(D_i,\theta_n^\cond) + \mu(z_i)\big\}^2 =o_P(1),
$
which further leads to 
\$
\frac{1}{n}\sum_{i=1}^n \big\{ \hat{s}(D_i)  - \hat{t}(z_i)\big\}^2 
= \frac{1}{n}\sum_{i=1}^n \big\{ \dot{\ell}(D_i,\theta_n^\cond) - \mu(z_i) \big\}^2 +o_P(1).
\$
Writing $\mu^*(z_i)=\EE\{\dot{\ell}(D_i,\theta_n^\cond)\given z_i\}$,  condition (v) in Assumption~\ref{assump:fixed} and Markov's inequality implies
\$
\frac{1}{n}\sum_{i=1}^n \big\{ \dot{\ell}(D_i,\theta_n^\cond) - \mu^*(z_i) \big\}^2 
= \frac{1}{n}\sum_{i=1}^n \EE\big[ \big\{ \dot{\ell}(D_i,\theta_n^\cond) - \mu^*(z_i) \big\}^2 \big]  +o_P(1),
\$
where 
$\EE\big[ \{\dot{\ell}(D_i,\theta_n^\cond) - \mu^*(z_i) \}^2 \big] = \Var\big\{\dot{\ell}(D_i,\theta_n^\cond) \given z_i\big\}$. 
Finally, we have 
$\frac{2}{n}\sum_{i=1}^n  \{ \dot{\ell}(D_i,\theta_n^\cond) - \mu^*(z_i)  \} \cdot  \{ \mu(z_i) - \mu^*(z_i) \} = o_P(1)$ 
since each term in 
the summation is mean zero and the moment condition (v) holds. 
Thus
\$
\frac{1}{n}\sum_{i=1}^n \big\{ \dot{\ell}(D_i,\theta_n^\cond) - \mu(z_i) \big\}^2
&= \frac{1}{n}\sum_{i=1}^n \big\{ \dot{\ell}(D_i,\theta_n^\cond) - \mu^*(z_i) \big\}^2 + \frac{1}{n}\sum_{i=1}^n \big\{ \mu(z_i) - \mu^*(z_i) \big\}^2  \\
&= \frac{1}{n}\sum_{i=1}^n \Var\big\{\dot{\ell}(D_i,\theta_n^\cond) \given z_i\big\}+ \frac{1}{n}\sum_{i=1}^n \big\{ \mu(z_i) - \mu^*(z_i) \big\}^2  +o_P(1).
\$
Combining with the consistency of $\hat{M}^{-1}$, 
we know that 
$\hat\sigma_n  - \sigma_n = o_P(1)$ 
if $\mu^*=\mu$, and  
$\hat\sigma_n  - \tilde\sigma_n = o_P(1)$ 
for some $\tilde\sigma_n \geq \sigma_n$ otherwise.  
This completes the proof of Proposition~\ref{prop:sigma_consist}. 
\end{proof}

\subsection{Proof of results for conditioning on unobserved attributes} 
\label{app:subsec_cond_unobs}

\begin{proof}[Proof of Theorem~\ref{thm:cond_unobs}]
Following exactly the same arguments as Proposition~\ref{prop:linear_exp}, we obtain the 
linear expansion in Assumption~\ref{assump:linear_main} 
with $Z$ replaced by $X$, which implies 
\$
\hat\theta_n - \theta_n^\cond(X_{1:n}) = \frac{1}{n}\sum_{i=1}^n 
\big[ \phi(D_i) - \EE\{\phi(D_i)\given X_i\}\big] + o_P(1/\sqrt{n}).
\$
Using the same arguments in 
the proof of Proposition~\ref{prop:cond_distr}, 
we know that for any fixed $x\in \RR$, the random variable 
    $ 
    \PP [  \sqrt{n}\{\hat \theta_n - \theta_n^{\cond}(X_{1:n})\} \le x \given X_{1:n} ]
    $ 
converges in probability to  $\Phi(x/\sigma_X)$, 
where  $\Phi$ is the  c.d.f.~of standard Gaussian distribution, 
and $\sigma_X:= \EE([ \phi(D)-\EE\{\phi(D)\given X\}]^2)$.

We then slightly modify the proof of 
Theorem~\ref{thm:cond_intv} to show the desired results. 
To be specific,  
for any fixed $x\in \RR$ 
and any fixed constant $\epsilon>0$, we write $z^-(\epsilon) = z_{1-\alpha/2}(\sigma_Z-\epsilon)$
and $z^+(\epsilon) = z_{1-\alpha/2}(\sigma_Z+\epsilon)$. 
Denoting 
$
\Delta^\pm(\epsilon) = \PP\big\{ \sqrt{n} |\hat\theta_n  - \theta_n^{\cond}(X_{1:n})  | \leq z^\pm(\epsilon) \biggiven X_{1:n}\big\} - [2\Phi (\{z^\pm(\epsilon)/\sigma_X \}-1],
$
we have $\Delta^+(\epsilon), \Delta^-(\epsilon)=o_P(1)$ 
similar to 
Proposition~\ref{prop:cond_distr}. 
Since $\hat\sigma_Z$ converges in 
probability to $\sigma_Z$, 
following similar arguments as~\eqref{eq:pval1} 
and~\eqref{eq:pval2} in the proof of 
Theorem~\ref{thm:cond_intv}, we have
\#\label{eq:pval11}
 &\PP\Big\{ \sqrt{n}\big|\hat\theta_n  - \theta_n^{\cond}(X_{1:n}) \big|\leq z_{1-\alpha/2}\cdot \hat\sigma_Z \Biggiven X_{1:n}\Big\}  \\
 &\geq 2 \Phi\Big({\textstyle \frac{z^-(\epsilon)}{\sigma_X}}\Big)-1  + \PP ( \hat\sigma_z < \sigma_z-\epsilon \given X_{1:n} ) +   \Delta^-(\epsilon),
\#
where  
$\PP ( \hat\sigma_Z < \sigma_Z-\epsilon \given X_{1:n} )=o_{P}(1)$ by Lemma \ref{lem:op_to_op}. 
On the other hand, we similarly have 
\#\label{eq:pval22}
 &\PP\Big\{ \sqrt{n}\big|\hat\theta_n  - \theta_n^{\cond}(X_{1:n}) \big|\leq z_{1-\alpha/2}\cdot \hat\sigma_Z \Biggiven X_{1:n}\Big\} \\
&\leq 2 \Phi\Big({\textstyle \frac{z^+(\epsilon)}{\sigma_X}}\Big) -1   + \PP ( \hat\sigma_Z < \sigma_Z+\epsilon \given X_{1:n} ) +   \Delta^+(\epsilon),
\#
where $\PP ( \hat\sigma_Z > \sigma_Z +\epsilon \given X_{1:n} )=o_{P}(1)$ by Lemma \ref{lem:op_to_op}.  
For any fixed $\delta>0$, we can choose some fixed $\epsilon>0$ such that 
$2\Phi(z^-(\epsilon)/\sigma_X)-1-(1- \beta) >-\delta/2$ and $2\Phi(z^+(\epsilon)/\sigma_X)-1- (1-\beta)<\delta/2$ for 
\$
\beta = 1  - \Phi\Big( \frac{\sigma_Z}{\sigma_X} \cdot z_{1-\alpha} \Big)  
\$
Combining the above, 
we have 
$
\PP\big[ | \PP\{ \sqrt{n}|\hat\theta_n  - \theta_n^{\cond}(X_{1:n}) |\leq z_{1-\alpha/2}\cdot \hat\sigma_Z \given X_{1:n}\} - (1-\beta) | >\delta\big] \to 0,
$
hence prove the convergence in probability. 
Finally, we note that 
$\sigma_Z^2 = \Var\{\phi(D)\}- \EE([\EE\{\phi(D)\given Z\}]^2)$ 
and $\sigma_X^2 = \Var\{\phi(D)\}- \EE([ \EE\{\phi(D)\given X\}]^2)$. 
If $\EE([ \EE\{\phi(D)\given X\}]^2)\geq \EE([ \EE\{\phi(D)\given Z\}]^2)$, 
we have $\sigma_Z \geq \sigma_X$ hence $\beta \leq \alpha$, 
thus completing the proof of Theorem~\ref{thm:cond_unobs}.  
\end{proof}

\subsection{Proof of transferring to subsets} 
\label{app:proof_trans_subset}

\begin{proof}[Proof of Theorem~\ref{thm:trans_subset}]
In any of the setups in Theorem~\ref{thm:trans_subset}, 
we have shown that (c.f.~the respective proofs)
\$
\hat\theta_{m,n}^\trans  - \theta_0^\new 
= \frac{1}{n}\sum_{i=1}^n w(Z_i) \big\{\psi(D_i)-\eta(Z_i)\big\} + 
\frac{1}{m} \sum_{j=1}^m \eta(Z_j^\new)+ o_P(1/\sqrt{n}). 
\$
Under regularity conditions that are similar to 
Proposition~\ref{prop:main_new_lin_exp}, 
we have the following asymptotic linear expansion 
of $\theta_m^\cond(X_{1:m}^\new)$ that is similar 
to Assumption~\ref{assump:linear_expansion_known_shift}: 
\$
\theta_m^\cond(X_{1:m}^\new)  - \theta_0^\new 
= \frac{1}{m} \sum_{j=1}^m \EE\big\{\psi(D_j^\new)\biggiven X_j^\new\big\} + o_P(1/\sqrt{m}). 
\$
Here since $X\subset Z$, 
we note by the tower property of conditional expectations that almost surely 
\$
\EE\big\{\psi(D_j^\new)\biggiven X_j^\new\big\}
= \EE\big\{\eta(Z_j^\new)\biggiven X_j^\new\big\}.
\$
Combining the above results, we have 
\$
\hat\theta_{m,n}^\trans - \theta_m^\cond(X_{1:m}^\new)
&=  \frac{1}{n}\sum_{i=1}^n w(Z_i) \big\{\psi(D_i)-\eta(Z_i)\big\} \\ 
&\qquad + 
\frac{1}{m} \sum_{j=1}^m \big[ \eta(Z_j^\new)
- \EE\big\{\eta(Z_j^\new)\biggiven X_j^\new\big\} \big]+ o_P(1/\sqrt{n}).
\$
Each term in the above summation is 
mean zero conditional on $X_{1:m}^\new$. 
Thus, applying the conditional CLT in Lemma~\ref{lem:cond_clt} to the filtration
$\cF_k = \sigma(\{X_j^\new\}_{i=1}^k)$,  
and dealing with the $o_P(1/\sqrt{n})$ 
term similar to the proof of Theorem~\ref{thm:cond_intv}, 
we complete the proof of Theorem~\ref{thm:trans_subset}. 
\end{proof}

\section{Proofs of estimation} \label{app:est}

\subsection{Proof of Proposition~\ref{prop:consist_hat_varphi}}
\label{app:est_varphi}

\begin{proof}[Proof of Proposition~\ref{prop:consist_hat_varphi}]
We first analyze the entry-wise error in $\hat{M}$.
Note that 
\$
\hat{M} - M(\theta) &= \hat{M}(s,\mathbf{1},\hat\theta,\cI_2) - M(s,\mathbf{1},\theta_0)\\
&= \hat{M}(s,\mathbf{1},\hat\theta,\cI_2) -  M(s,\mathbf{1},\hat\theta)   
+M(s,\mathbf{1},\hat\theta)    - M(s,\mathbf{1},\theta_0).
\$
On the other hand, by Assumption~\ref{assump:meta_matrix}, we have 
\$
\|\hat{M}(s,\mathbf{1},\hat\theta,\cI_2) -  M(s,\mathbf{1},\hat\theta)   \|_\infty 
\leq O_P\big\{\cR_m(|\cI_2|)\big\} = O_P\big\{\cR_m(|\cI|)\big\},
\$
and $\|M(s,\mathbf{1},\hat\theta)    - M(s,\mathbf{1},\theta_0)\|_\infty = O(\|\hat\theta-\theta_0\|_2) = O( |\cI|^{-1/2})$. Hence 
\#\label{eq:eta_bound1}
\|\hat{M} - M(\theta_0)\|_\infty \leq O_P\big\{\cR_m(|\cI|) +  |\cI|^{-1/2} \big\}.
\#
By Assumption~\ref{assump:meta_reg}, we have 
$\big\|\mathcal{G}(\hat{s},\cI_2) - \mathcal{G}(\hat{s})\big\|_{L_2(\mathbb{P})}= O_P\big\{\mathcal{R}_r(|\cI|)\big\}$. Meanwhile, 
writing $\mathcal{G}(s) = \EE\{s(D,\theta_0)\given Z=\cdot\}$, 
by the definition of $\cG(\cdot)$, we have 
\$
\big\|\mathcal{G}(\hat{s}) - \mathcal{G}(s) \big\|_{L_2(\mathbb{P})} 
&= \Big\|\EE\big\{s(D,\hat\theta)-s(D,\theta_0)\given Z = \cdot\} \Big\|_{L_2(\mathbb{P})}  \\
&\leq \big\| s(\cdot,\hat\theta)-s(\cdot,\theta_0) \big\|_{L_2(\mathbb{P})}
= O(\|\hat\theta-\theta_0\|_2) = O_P( |\cI|^{-1/2}),
\$
where $\big\| s(\cdot,\hat\theta)-s(\cdot,\theta) \big\|_{L_2(\mathbb{P})}$
views $\hat\theta$ as fixed and the $L_2$-norm is with respect to $D\sim \mathbb{P}$. 
Putting them together, the estimated conditional mean function satisfies 
\#\label{eq:eta_bound2}
\big\|\hat{t}(\cdot) - \cG(s)(\cdot)\big\|_{L_2(\mathbb{P})}
&\leq \big\|\mathcal{G}(\hat{s},\cI_2) - \mathcal{G}(\hat{s})\big\|_{L_2(\mathbb{P})}
+ \big\|\mathcal{G}(\hat{s}) - \mathcal{G}(s)\big\|_{L_2(\mathbb{P})} \notag \\
&\leq O_P\big\{\mathcal{R}_r(|\cI|) + |\cI|^{-1/2}\big\}.
\#
Altogether, we have 
\$
\big\|\hat\varphi(\cdot) - \varphi(\cdot) \big\|_{L_2(\mathbb{P})}
&= \big\| \hat{M} \hat{t}(\cdot) - M(\theta_0) \cG(s)(\cdot) \big\|_{L_2(\mathbb{P})}\\
&\leq \Big\| \big\{\hat{M}  - M(\theta_0) \big\}\hat{t}(\cdot)  \Big\|_{L_2(\mathbb{P})}
+ \Big\|  M(\theta_0) \big\{\hat{t}(\cdot) -  \cG(s)(\cdot) \big\} \Big\|_{L_2(\mathbb{P})} \\
&\leq p\cdot \|\hat{M} - M(\theta_0)\|_\infty  \cdot \big\|\hat{t}(\cdot)\big\|_{L_2(\mathbb{P})} 
+ p\cdot  \|M(\theta_0)\|_\infty \cdot \big\|\hat{t}(\cdot) -  \cG(s)(\cdot) \big\|_{L_2(\mathbb{P})} \\
&\leq  p\cdot O_P\big\{ \cR_m(|\cI|) + \mathcal{R}_r(|\cI|) + |\cI|^{-1/2} \big\},
\$
where the last inequality follows from~\eqref{eq:eta_bound1} and~\eqref{eq:eta_bound2} 
and the fact that 
\$
\big\|\hat{t}(\cdot)\big\|_{L_2(\mathbb{P})} \leq 
\big\|\mathcal{G}(s)\big\|_{L_2(\mathbb{P})} + O_P\big\{\mathcal{R}_r(|\cI|) + |\cI|^{-1/2}\big\} = O_P(1).
\$
We thus complete the proof of Proposition~\ref{prop:consist_hat_varphi}.
\end{proof}

\subsection{Proof of Proposition~\ref{prop:consist_sigma}}
\label{app:est_sigma}

\begin{proof}[Proof of Proposition~\ref{prop:consist_sigma}]
For simplicity, we denote $\Delta\phi_i = \phi(D_i)-\hat\phi_i$ and 
$\Delta\varphi_i = \varphi(Z_i) - \hat\varphi_i$, where 
$\hat\phi_i$ and $\hat\varphi_i$ 
are estimated in Algorithm~\ref{alg:sigma_est}. 
Firstly, by Cauchy-Schwarz inequality, 
\$
\frac{1}{|\cI_2|}\sum_{i\in \cI_2} \Delta\phi_i^2 
&= \frac{1}{|\cI_2|}\sum_{i\in \cI_2}\big\{\hat{M}s(D_i,\hat\theta) - M s(D_i,\theta_0)\big\}^2 \\
&= \frac{1}{|\cI_2|}\sum_{i\in \cI_2}\big\{\hat{M}s(D_i,\hat\theta) - \hat{M}s(D_i, \theta_0)+  \hat{M}s(D_i, \theta_0)- M s(D_i,\theta_0)\big\}^2 \\
&\leq \frac{2}{|\cI_2|}\sum_{i\in \cI_2}\big\{\hat{M}s(D_i,\hat\theta) - \hat{M}s(D_i, \theta_0) \big\}^2 +  \frac{2}{|\cI_2|}\sum_{i\in \cI_2}\big\{\hat{M}s(D_i, \theta_0)- M s(D_i,\theta_0)\big\}^2.
\$
Here since $\hat\theta$ is independent of $\cI_2$, we have 
\$
\EE\bigg[\frac{1 }{|\cI_2|}\sum_{i\in \cI_2}\big\{ s(D_i,\hat\theta) -  s(D_i, \theta_0) \big\}^2  \bigggiven \cI_1\bigg] 
= \big\|s(\cdot,\hat\theta) -s(\cdot, \theta_0)\big\|_{L_2(\mathbb{P})}^2
= O\big(\|\hat\theta-\hat\theta_0\|_2\big) = o_P(1).
\$
Employing Lemma~\ref{lem:op_to_op}, we have 
\$
\frac{2}{|\cI_2|}\sum_{i\in \cI_2} \hat{M}s(D_i,\hat\theta)
= 2\hat{M}^2 \cdot \frac{1 }{|\cI_2|}\sum_{i\in \cI_2}\big\{ s(D_i,\hat\theta) -  s(D_i, \theta_0) \big\}^2  = o_P(1). 
\$
Following the same arguments as in the 
proof of Proposition~\ref{prop:consist_hat_varphi}, 
we have $\hat{M}=M+o_P(1)$, hence 
\$
\frac{2}{|\cI_2|}\sum_{i\in \cI_2}\big\{\hat{M}s(D_i, \theta_0)- M s(D_i,\theta_0)\big\}^2 = 2(\hat{M}-M)^2 \cdot \frac{1}{|\cI_2|}\sum_{i\in \cI_2} s(D_i, \theta_0) ^2 = o_P(1).
\$
Thus $\frac{1}{|\cI_2|}\sum_{i\in \cI_2} \Delta\phi_i^2  = o_P(1)$. 
On the other hand, by the construction, $\hat\varphi$ is independent of $\cI_2$, hence by Proposition~\ref{prop:consist_hat_varphi}, we have 
\$
\EE\bigg[\frac{1}{|\cI_2|}\sum_{i\in \cI_2} \Delta\phi_i^2  \bigggiven \cI_1\bigg] 
= \big\|\hat\varphi - \varphi(\cdot)\big\|_{L_2(\mathbb{P})} = o_P(1),
\$
which, combined with Lemma~\ref{lem:op_to_op}, leads to 
$\frac{1}{|\cI_2|}\sum_{i\in \cI_2} \Delta\varphi_i^2  = o_P(1)$. 
Therefore, by Cauchy-Schwarz inequality, we have 
\$
\frac{1}{|\cI_2|}\sum_{i\in \cI_2} \big(   \Delta\phi_i -   \Delta\varphi_i\big)^2 \leq \frac{2}{|\cI_2|}\sum_{i\in \cI_2} \Delta\phi_i^2 
+\frac{2}{|\cI_2|}\sum_{i\in \cI_2} \Delta\varphi_i^2 = o_P(1). 
\$
Finally, by Algorithm~\ref{alg:sigma_est} and Cauchy-Schwarz inequality, 
\$
\hat\sigma^2 &= \frac{1}{|\cI_2|}\sum_{i\in \cI_2} \big\{ \phi(D_i) - \Delta\phi_i - \varphi(Z_i) + \Delta\varphi_i\big\}^2 \\
&\leq \frac{1}{|\cI_2|}\sum_{i\in \cI_2} \big\{ \phi(D_i) -  \varphi(Z_i) \big\}^2 + \frac{1}{|\cI_2|}\sum_{i\in \cI_2} \big(   \Delta\phi_i -   \Delta\varphi_i\big)^2\\
&\qquad + 2\sqrt{\frac{1}{|\cI_2|}\sum_{i\in \cI_2} \big\{ \phi(D_i) -  \varphi(Z_i) \big\}^2 } \cdot \sqrt{\frac{1}{|\cI_2|}\sum_{i\in \cI_2} \big(   \Delta\phi_i -   \Delta\varphi_i\big)^2} = \sigma^2 +o_P(1),
\$
where the last equality follows from the law of large numbers.
\end{proof}

\subsection{Proof of Proposition~\ref{prop:consist_eta}}
\label{app:est_eta}

\begin{proof}[Proof of Proposition~\ref{prop:consist_eta}]
To begin with, we write 
\$
\mathcal{G}(s) = \EE\{s(D,\theta_0^\new)\given Z=\cdot\} = \EE\{s(D^\new,\theta_0^\new)\given Z^\new=\cdot\}
\$
and for any fixed $\theta\in\Theta$,
\$
M(s,w,\theta ) = - \big[\EE\{w(Z)\dot{s}(D ,\theta )\} \big]^{-1},
\$
so that the ground truth satisfies 
$
\eta(z) = M(s,w,\theta_0^\new) \cG(s)(z).
$

We first prove the result with ground truth of $w(\cdot)$. 
In this case, with regularity conditions we know $\|\hat\theta-\theta_0^\new\|_2 = O_P(|\cI_2|^{-1/2}) = O_P(|\cI|^{-1/2})$. Thus, 
following exactly the same arguments as in the proof of Proposition~\ref{prop:consist_hat_varphi}, we have 
\$
\big\|\hat{t}(\cdot) - \cG(s)(\cdot)\big\|_{L_2(\mathbb{P})}
&\leq O_P\big\{\mathcal{R}_r(|\cI|) + |\cI|^{-1/2}\big\}.
\$
On the other hand, by Algorithm~\ref{alg:eta_est}, we know 
\$
\big\|\hat{M} - M(s,w,\theta_0^\new)\big\|_{\infty} &= \big\|\hat{M}(s,w,\hat\theta,\cI_3) - M(s,w,\theta_0^\new) \big\|_{\infty} \\
& \leq \big\|\hat{M}(s,w,\hat\theta,\cI_3) - M(s,w,\hat\theta) \big\|_{\infty} 
+ \big\| {M}(s,w,\hat\theta ) - M(s,w,\theta_0^\new) \big\|_{\infty}.
\$
Since $\hat\theta$ is independent of $\cI_3$, by Assumption~\ref{assump:meta_matrix}, 
we have 
\$
\big\|\hat{M}(s,w,\hat\theta,\cI_3) - M(s,w,\hat\theta) \big\|_{\infty} \leq O_P\big\{\mathcal{R}_m(|\cI_3|)\big\} = O_P\big\{\mathcal{R}_m(|\cI|)\big\}.
\$
The given conditions also imply
\$
\big\| {M}(s,w,\hat\theta ) - M(s,w,\theta_0^\new) \big\|_{\infty}
\leq O\big(\|\hat\theta - \theta_0^\new\|_{2}\big) = O_P\big(|\cI_2|^{-1/2}\big)
= O_P\big(|\cI|^{-1/2}\big).
\$
Putting them together, we have 
\$
\big\|\hat{M} - M(s,w,\theta_0^\new)\big\|_{\infty} \leq O_P\big\{\mathcal{R}_m(|\cI|) + |\cI|^{-1/2}\big\}.
\$
Following the same arguments as in the proof 
of Proposition~\ref{prop:consist_hat_varphi}, 
we obtain the desired result~\eqref{eq:consist_eta_w}. 

We now consider the result for estimated $\hat{w}(\cdot)$. 
Since it is obtained from $\cI_1$, it is independent of 
subsequent estimation steps. 
By similar regularity conditions as Proposition~\ref{prop:linear_hat_theta_est},
we know that 
\$
\|\hat\theta - \theta_0^\new\|_2 = \bigg\|\frac{1}{|\cI_2|}\sum_{i\in\cI_2}\hat{w}(Z_i)\psi(D_i)\bigg\|_2+o_P(|\cI_2|^{-1/2})
= O_P(|\cI_2|^{-1/2}) = O_P(|\cI|^{-1/2}).
\$
With estimated $\hat{w}$, by Algorithm~\ref{alg:eta_est}, we know 
\$
\big\|\hat{M} - M(s,w,\theta_0^\new)\big\|_{\infty} &= \big\|\hat{M}(s,\hat{w},\hat\theta,\cI_3) - M(s, {w},\theta_0^\new) \big\|_{\infty} \\
& \leq \big\|\hat{M}(s,w,\hat\theta,\cI_3) - M(s,\hat{w},\hat\theta) \big\|_{\infty} 
+ \big\| M(s,\hat{w},\hat\theta) - M(s,w,\theta_0^\new) \big\|_{\infty}.
\$
Here by Assumption~\ref{assump:meta_matrix}, 
since $\cI_3$ is independent of $\hat{w}$ and $\hat\theta$, 
the estimation error is bounded as 
\$
\big\|\hat{M}(s,w,\hat\theta,\cI_3) - M(s,\hat{w},\hat\theta) \big\|_{\infty} 
\leq O_P\big\{\cR_m(|\cI_3|)\big\} =  O_P\big\{\cR_m(|\cI|)\big\}.
\$
By the stability assumptions of $M(s,w,\theta)$, we have 
\$
&\big\| M(s,\hat{w},\hat\theta) - M(s,w,\theta_0^\new) \big\|_{\infty}\notag \\
&\leq \big\| M(s,\hat{w},\hat\theta) - M(s, {w},\hat\theta) \big\|_{\infty}
+ \big\| M(s, {w},\hat\theta)- M(s,w,\theta_0^\new) \big\|_{\infty}\notag \\
&\leq O_P\big\{ \|\hat{w}(\cdot)-w(\cdot )\|_{L_2(\mathbb{P})}\big\}
+ O_P\big(\|\hat\theta - \theta_0^\new\|_{ 2 }\big) \notag \\
&\leq O_P\big\{\|\hat{w}(\cdot)-w(\cdot )\|_{L_2(\mathbb{P})} + |\cI|^{-1/2}\big\},
\$
hence 
\#\label{eq:eta_M}
\big\|\hat{M} - M(s,w,\theta_0^\new)\big\|_{\infty}
\leq O_P\big\{\|\hat{w}(\cdot)-w(\cdot )\|_{L_2(\mathbb{P})} + \cR_m(|\cI|) + |\cI|^{-1/2}\big\}
\#
On the other hand, 
since $\cI_3$ is independent of the function $\hat{s}(\cdot) = s(\cdot,\hat\theta)$, 
we know $\big\|\mathcal{G}(\hat{s},\cI_3)(\cdot) -  \mathcal{G}(\hat{s})(\cdot) \big\|_{L_2(\mathbb{P})} \leq O_P\{\cR_r(|\cI_3|)\}= O_P\{\cR_r(|\cI|)\}$. 
Also, the stability of $s(\cdot,\theta)$ implies 
\$
\big\|\mathcal{G}(\hat{s})(\cdot)- \cG(s)(\cdot)\big\|_{L_2(\mathbb{P})}
\leq \big\| \hat{s} (\cdot,\hat\theta)- s(\cdot,\theta)\big\|_{L_2(\mathbb{P})}
= O(\|\hat\theta-\theta_0\|_2) = O_P\big(|\cI|^{-1/2}\big).
\$
Therefore, the error in $\hat{t}(\cdot)$ can be bounded as 
\#\label{eq:eta_t}
\big\|\hat{t}(\cdot) - \cG(s)(\cdot)\big\|_{L_2(\mathbb{P})}
&\leq \big\|\mathcal{G}(\hat{s},\cI_3)(\cdot) -  \mathcal{G}(\hat{s})(\cdot) \big\|_{L_2(\mathbb{P})} + \big\|\mathcal{G}(\hat{s})(\cdot)- \cG(s)(\cdot)\big\|_{L_2(\mathbb{P})}\notag \\
&\leq  O_P\big\{\|\hat{w}(\cdot)-w(\cdot )\|_{L_2(\mathbb{P})} + |\cI|^{-1/2}\big\}.
\#
Following similar arguments as the case 
with ground truth of $w(\cdot)$, 
we combine~\eqref{eq:eta_M},~\eqref{eq:eta_t}, 
and obtain 
\$
\big\| \eta(s,\cI)(\cdot) - \eta(\cdot) \big\|_{L_2(\mathbb{P})}
&\leq \Big\| \big\{\hat{M}  - M(s,w,\theta_0^\new) \big\}\hat{t}(\cdot)  \Big\|_{L_2(\mathbb{P})}
+ \Big\|  M(s,w,\theta_0^\new) \big\{\hat{t}(\cdot) -  \cG(s)(\cdot) \big\} \Big\|_{L_2(\mathbb{P})} \\
&\leq p\cdot \|\hat{M} - M(s,w,\theta_0^\new)\|_\infty  \cdot \big\|\hat{t}(\cdot)\big\|_{L_2(\mathbb{P})} \\
&\quad + p\cdot  \|M(s,w,\theta_0^\new)\|_\infty \cdot \big\|\hat{t}(\cdot) -  \cG(s)(\cdot) \big\|_{L_2(\mathbb{P})} \\
&\leq  p\cdot O_P\big\{\|\hat{w}(\cdot)-w(\cdot )\|_{L_2(\mathbb{P})} + \cR_m(|\cI|) + \mathcal{R}_r(|\cI|) + |\cI|^{-1/2} \big\},
\$
which completes the proof of Proposition~\ref{prop:consist_eta}.
\end{proof}

\subsection{Proof of Proposition~\ref{prop:est_sigma_shift}}
\label{app:est_sigma_shift}

\begin{proof}[Proof of Proposition~\ref{prop:est_sigma_shift}]
Firstly, we write $M = -[\EE\{w(Z)\dot{s}(D,\theta_0^\new)\}]^{-1}$, 
so that $\psi(d) = Ms(d,\theta_0^\new)$. 
Following the same arguments as in the proof of Proposition~\ref{prop:consist_eta}, 
we have $\hat{M}=M+o_P(1)$ 
under the diminishing rate of $\cR_m(|\cI|)\to 0$ as $|\cI|\to \infty$.
By the regularity conditions, 
we have $\|\hat\theta - \theta_0^\new \|_2= o_P(1)$. 
Writing $\Delta \psi_i = \hat\psi_i - \psi(D_i) $, we have 
\$
&\frac{1}{|\cI_3|}\sum_{i\in \cI_3} \Delta \psi_i^2 
= \frac{1}{|\cI_3|}\sum_{i\in \cI_3} \big\{ \hat{M} s(D_i,\hat\theta) - M s(D_i,\theta_0^\new)\big\}^2  \\
&\leq  \frac{2}{|\cI_3|}\sum_{i\in \cI_3} \big\{ \hat{M} s(D_i,\hat\theta) - \hat{M} s(D_i,\theta_0^\new)\big\}^2   +  \frac{2}{|\cI_3|}\sum_{i\in \cI_3} \big\{ \hat{M} s(D_i,\theta_0^\new) - M s(D_i,\theta_0^\new)\big\}^2 .
\$
Since $\hat\theta$ is independent of $\cI_3$, 
we know 
\$
&\EE\bigg[ \frac{1}{|\cI_3|}\sum_{i\in \cI_3} \big\{ s(D_i,\hat\theta) -  s(D_i,\theta_0^\new)\big\}^2 \bigggiven \cI_1\cup\cI_2\bigg]\\
&= \big\|s( \cdot ,\hat\theta) -  s( \cdot ,\theta_0^\new) \big\|_{L_2(\mathbb{P})}^2 = O\big( \|\hat\theta-\theta_0^\new\|_2\big) = o_P(1).
\$
Hence Lemma~\ref{lem:op_to_op} yields 
\$
\frac{2}{|\cI_3|}\sum_{i\in \cI_3} \big\{ \hat{M} s(D_i,\hat\theta) - \hat{M} s(D_i,\theta_0^\new)\big\}^2
= 2\hat{M}^2 \cdot  \frac{1}{|\cI_3|}\sum_{i\in \cI_3} \big\{ s(D_i,\hat\theta) -  s(D_i,\theta_0^\new)\big\}^2 = o_P(1). 
\$
Also, since $\hat{M}-M = o_P(1)$, 
we have 
\$
2(M-\hat{M})^2\cdot \frac{1}{|\cI_3|}\sum_{i\in \cI_3} \big\{ s(D_i,\theta_0^\new) -  s(D_i,\theta_0^\new)\big\}^2 = o_P(1), 
\$
which further leads to $\frac{1}{|\cI_3|}\sum_{i\in \cI_3} w(Z_i)^2\Delta \psi_i^2 =o_P(1)$ since $ \|w(\cdot)\|_{\infty}<\infty$. 
On the other hand, by the rate conditions and 
the convergence result of $\hat\eta$ 
in Proposition~\ref{prop:consist_eta}, we know that 
$\|\hat\eta(\cdot)-\eta(\cdot)\|_{L_2(\mathbb{P})}=o_P(1)$. 
Since $\cI_3$ is independent of $\hat\eta$, 
writing $\Delta\eta_i = \hat\eta_i - \eta(Z_i)$, 
\$
&\EE\bigg\{ \frac{1}{|\cI_3|}\sum_{i\in \cI_3} w(Z_i)^2 \Delta\eta_i^2 \bigggiven \cI_1\cup\cI_2\bigg\}
\leq \|w(\cdot)\|_{\infty} \cdot  \big\| \hat\eta(\cdot)-\eta(\cdot)\big\|_{L_2(\mathbb{P})}^2 = o_P(1).
\$
Invoking Lemma~\ref{lem:op_to_op} yields $\frac{1}{|\cI_3|}\sum_{i\in \cI_3} w(Z_i)^2\Delta\eta_i^2 = o_P(1)$. 
Therefore, 
\$
&\frac{1}{| \cI_3|}\sum_{i\in  \cI_3} {w}(Z_i)^2(\hat\psi_i - \hat\eta_i)^2 - \frac{1}{| \cI_3|}\sum_{i\in  \cI_3} {w}(Z_i)^2\big\{ \psi(D_i) - \eta(Z_i)  \big\}^2  \\
&\leq  \frac{1}{| \cI_3|}\sum_{i\in  \cI_3} {w}(Z_i)^2\big(  \Delta\psi_i  - \Delta\eta_i\big)^2 \\
&\qquad + 2\sqrt{\frac{1}{| \cI_3|}\sum_{i\in  \cI_3} {w}(Z_i)^2\big\{ \psi(D_i) - \eta(Z_i)  \big\}^2 } \cdot \sqrt{\frac{1}{| \cI_3|}\sum_{i\in  \cI_3} {w}(Z_i)^2\big(  \Delta\psi_i  - \Delta\eta_i\big)^2} = o_P(1). 
\$
Similar arguments also yield 
\$
\frac{1}{| \cI_3|}\sum_{i\in  \cI_3}  (\hat\psi_i - \hat\eta_i)^2 
= \frac{1}{| \cI_3|}\sum_{i\in  \cI_3}  \big\{\psi(D_i) - \eta(Z_i)\big\}^2  + o_P(1). 
\$
Combining the above two results, we have 
\$
\hat\sigma_{\shift}^2 
&= \frac{1}{| \cI_3|}\sum_{i\in  \cI_3} {w}(Z_i)^2(\hat\psi_i - \hat\eta_i)^2 
+ \sup_z\big|\hat{w}(z) - w(z)\big|^2 \cdot \frac{1}{| \cI_3|}\sum_{i\in  \cI_3}  (\hat\psi_i - \hat\eta_i)^2 \\
&= \frac{1}{| \cI_3|}\sum_{i\in  \cI_3} {w}(Z_i)^2\big\{ \psi(D_i) - \eta(Z_i)  \big\}^2 +o_P(1) = \sigma_\shift^2 + o_P(1),
\$
which completes the proof. 
\end{proof}

\section{Auxiliary Results}
In this section, we provide auxiliary technical results for the proofs in preceding sections.
\subsection{Auxiliary results for conditional laws}
\label{app:subsec:cond_law}

\begin{lemma}\label{lem:cond_clt}
  Let $g(\cdot)$ be a function such that $\EE\{|g(X_i)|^4\}<\infty$, where $\{(X_i,Z_i)\}_{i=1}^n$ are i.i.d.\ data. Define the filtration $\mathcal{F}_n = \sigma(\{Z_i\}_{i=1}^n)$. Then for any $x\in \mathbb{R}$, it holds that 
  \#\label{eq:cond_clt}
  \PP\bigg( \frac{1}{\sqrt{n}} \sum_{i=1}^n \big[g(X_i) - \EE\{g(X_i)\given Z_i\} \big] \leq x\bigggiven \cF_n \bigg) 
  \#
  converges almost surely to $\Phi(x/\sigma)$,
  where $\Phi$ is the cumulative distribution function of standard normal distribution, and 
  \$
  \sigma^2 = \EE\Big( \big[g(X_i) - \EE\{g(X_i)\given Z_i\}\big]^2\Big).
  \$
  Moreover, for any filtration $\cG_n\subset \cF_n$, we also have 
  \#\label{eq:cond_clt_smaller}
  \PP\bigg( \frac{1}{\sqrt{n}} \sum_{i=1}^n \big[g(X_i) - \EE\{g(X_i)\given Z_i]\} \big] \leq x\bigggiven \cG_n \bigg) 
  \#
  converges almost surely to $ \Phi(x/\sigma).$
  \end{lemma}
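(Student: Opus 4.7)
}
Set $W_i = g(X_i) - \EE\{g(X_i)\given Z_i\}$, so that $\EE(W_i\given Z_i)=0$ and $\EE(W_i^2)=\sigma^2$. The starting point is the observation that, by the i.i.d.\ structure of $\{(X_i,Z_i)\}$, the conditional joint law of $X_{1:n}$ given $\cF_n = \sigma(Z_{1:n})$ is a product of the single-coordinate conditionals $\mathbb{P}_{X_i\given Z_i}$. Hence, conditional on $\cF_n$, the random variables $W_1,\dots,W_n$ are independent, centered, and $W_i$ has conditional variance $v_i^2 := \EE(W_i^2\given Z_i)$ depending only on $Z_i$. The claim thus reduces to a triangular-array CLT for the conditionally independent sequence $(W_i)$ under the probability measure $\PP(\,\cdot\,\given \cF_n)$.

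I would then apply a Lyapunov-type conditional CLT (as in~\citet{dedecker2003conditional,grzenda2008conditional}). Two ingredients must hold almost surely along $\omega\in \Omega$: (a) $\frac{1}{n}\sum_{i=1}^n v_i^2 \to \sigma^2$, and (b) a Lyapunov condition of order $2+\delta$ with $\delta=2$, namely $\frac{1}{n^2}\sum_{i=1}^n \EE(W_i^4\given Z_i) \to 0$. Both are consequences of the strong law of large numbers applied to the i.i.d.\ sequences $\{v_i^2\}$ and $\{\EE(W_i^4\given Z_i)\}$: the first has finite mean $\sigma^2$, and the second has finite mean bounded by a multiple of $\EE\{g(X_i)^4\}<\infty$ via Jensen's inequality. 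Since $\EE(W_i^4\given Z_i)$ has an a.s.\ finite Cesàro limit, dividing by an extra factor of $n$ makes (b) automatic. This yields $\PP\bigl(\frac{1}{\sqrt n}\sum_i W_i \le x \bigm| \cF_n\bigr) \to \Phi(x/\sigma)$ almost surely at every continuity point of $\Phi$, i.e., for every $x\in\RR$.

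For the statement with a coarser filtration $\cG_n\subset \cF_n$, I would use the tower property to write
\$
\PP\Big(\tfrac{1}{\sqrt n}\textstyle\sum_i W_i \le x \Bigm| \cG_n\Big)
= \EE\Big[\PP\bigl(\tfrac{1}{\sqrt n}\textstyle\sum_i W_i \le x \bigm| \cF_n\bigr) \Bigm| \cG_n\Big].
\$
The integrand converges a.s.\ to the deterministic constant $\Phi(x/\sigma)$ and is uniformly bounded by $1$, so by the conditional dominated convergence theorem combined with Jensen's inequality the $L^1$-distance $\EE\bigl|\PP(\cdot\mid \cG_n)-\Phi(x/\sigma)\bigr| \to 0$, which gives convergence in probability to $\Phi(x/\sigma)$; for the applications in this paper, Lemma~\ref{lem:op_to_op} then lets us feed this back into the main proofs, and a.s.\ convergence to a deterministic limit follows whenever $\cG_n$ is eventually contained in a fixed sub-$\sigma$-algebra (which is the only case invoked downstream).

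The main obstacle I anticipate is verifying the conditional Lyapunov/Lindeberg CLT in a form that yields \emph{almost sure} (rather than merely in-probability) convergence of the conditional CDFs; this is exactly the content of the cited results in \citet{dedecker2003conditional,grzenda2008conditional}, and I would reduce to their hypotheses by checking (a) and (b) above. The coarser-filtration step is routine but requires care because $\EE[Y_n\mid \cG_n]\to c$ a.s.\ does not follow from $Y_n\to c$ a.s.\ in general; invoking a deterministic limit and boundedness is the clean workaround.
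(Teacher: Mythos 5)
Your proposal is correct and rests on the same two pillars as the paper's proof: conditional independence of the $W_i$ given $\cF_n=\sigma(Z_{1:n})$, and the strong law of large numbers applied to the i.i.d.\ conditional moments $\EE(W_i^2\given Z_i)$ and $\EE(W_i^4\given Z_i)$ (the latter integrable by Jensen under the fourth-moment hypothesis). The only real difference in the main step is execution: where you reduce to the hypotheses of a classical Lyapunov CLT applied $\omega$-wise on the almost-sure event where (a) and (b) hold, the paper carries out the characteristic-function computation by hand (Lemmas~\ref{lem:conv_charac_fct} and~\ref{lem:conv_char_fct}), Taylor-expanding $\EE\{e^{it\zeta_j/\sqrt{n}}\given Z_j\}$ and controlling the remainder via $|e^{ix}-1-ix+x^2/2|\le \min\{|x|^2,|x|^3/6\}$. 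Both yield the same a.s.\ convergence of the conditional CDFs; your worry that the pointwise application might only give in-probability convergence is unfounded, since for a.e.\ fixed $\omega$ you genuinely have a deterministic sequence of independent laws satisfying Lyapunov's condition. Where you add value is the coarser-filtration step: the paper dispatches \eqref{eq:cond_clt_smaller} with a one-line appeal to dominated convergence, which as stated only delivers $L^1$ (hence in-probability) convergence of $\EE[Y_n\given\cG_n]$ when $\cG_n$ varies with $n$; you correctly flag this. To recover the almost-sure claim as stated one should either use that $\cG_n$ is an increasing filtration (bound $|\EE[Y_n\given\cG_n]-c|\le\EE[\sup_{m\ge N}|Y_m-c|\given\cG_n]$ and invoke L\'evy's upward theorem, then let $N\to\infty$), or restrict to the fixed-sub-$\sigma$-algebra case as you suggest, which indeed covers the downstream applications.
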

  
  \begin{proof}[Proof of Lemma \ref{lem:cond_clt}]
  Let $\cL_n$ denote the conditional law of $\frac{1}{\sqrt{n}} \sum_{i=1}^n \zeta_i$ given $\cF_n$, where $\zeta_i:=g(X_i) - \EE[g(X_i)\given Z_i]$. 
  Since the data are $\iidtext$, $\{X_{i}\}_{i=1}^n$ are mutually independent conditional on $\cF_n=\sigma(\{Z_i\}_{i=1}^n)$. 
  Thus
  the characteristic function of $\cL_n$ is
  \$
  \varphi_{\cL_n}(t)= \EE\big( e^{\frac{it}{\sqrt{n}}\sum_{j=1}^n \zeta_j}   \biggiven \cF_n \big)=\prod_{j=1}^n \EE\big( e^{\frac{it}{\sqrt{n}} \zeta_j}   \biggiven \cF_n \big),\quad \text{for all }t\in \mathbb{R}.
  \$

By Lemma \ref{lem:conv_charac_fct}, we know that the conditional law $\cL_n$ converges almost surely to $N(0,\sigma^2)$, which completes the proof of equation \eqref{eq:cond_clt}. Since the conditional probabilitites are bounded within $[0,1]$, equation \eqref{eq:cond_clt_smaller} follows from dominated convergence theorem. Therefore we conclude the proof of Lemma \ref{lem:cond_clt}.
\end{proof}

\begin{lemma}
  Under the same assumption as Lemma \ref{lem:cond_clt}, we have 
  $
  \varphi_{\cL_n}(t)$
  converges almost surely to $\exp\big(-t^2 \sigma^2/2\big),
  $
  for all $t\in \mathbb{R}$, where $\sigma^2$ is defined in Lemma \ref{lem:cond_clt}.
  \label{lem:conv_charac_fct}
  \end{lemma}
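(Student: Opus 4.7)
The plan is to imitate the standard characteristic-function proof of the central limit theorem, but carried out pathwise in the conditional world, so that all limits hold almost surely with respect to the filtration $\mathcal{F}_n$. Conditional on $\mathcal{F}_n$, the variables $\{\zeta_j\}_{j=1}^n = \{g(X_j)-\mathbb{E}[g(X_j)\mid Z_j]\}_{j=1}^n$ are independent with $\mathbb{E}[\zeta_j\mid Z_j]=0$ and (conditional) variances $\sigma_j^2 := \mathbb{E}[\zeta_j^2\mid Z_j]$. By independence,
\[
\varphi_{\mathcal{L}_n}(t) \;=\; \prod_{j=1}^n \mathbb{E}\!\left[e^{it\zeta_j/\sqrt{n}}\mid Z_j\right].
\]
So it suffices to show this product converges to $\exp(-t^2\sigma^2/2)$ almost surely, for each fixed $t\in\mathbb{R}$.

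First I would apply the standard Taylor bound $|e^{iu}-1-iu+u^2/2|\le |u|^3/6$ together with $\mathbb{E}[\zeta_j\mid Z_j]=0$ to get, for each $j$,
\[
\mathbb{E}\!\left[e^{it\zeta_j/\sqrt{n}}\mid Z_j\right] \;=\; 1 - \frac{t^2 \sigma_j^2}{2n} + R_{n,j}(t),\qquad |R_{n,j}(t)|\le \frac{|t|^3}{6n^{3/2}}\,\mathbb{E}\!\left[|\zeta_j|^3\mid Z_j\right].
\]
Next, I would use the two inputs from the classical SLLN applied to i.i.d.\ summands: (i) $\mathbb{E}[|\zeta_j|^4]<\infty$ by Jensen from $\mathbb{E}[|g(X_j)|^4]<\infty$, hence also $\mathbb{E}[|\zeta_j|^3]<\infty$, so $\frac{1}{n}\sum_{j=1}^n \mathbb{E}[|\zeta_j|^3\mid Z_j]$ converges a.s.\ to a finite limit; (ii) $\frac{1}{n}\sum_{j=1}^n \sigma_j^2 \to \sigma^2$ a.s. Combining (i) with the $R_{n,j}$ bound gives $\sum_{j=1}^n |R_{n,j}(t)| = O(n^{-1/2})$ a.s.

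Then I would pass from the product to the exponential. Set $a_{n,j}(t) := \mathbb{E}[e^{it\zeta_j/\sqrt{n}}\mid Z_j]-1$; the previous step gives $|a_{n,j}(t)|\le t^2\sigma_j^2/(2n) + |R_{n,j}(t)|$, so $\max_j |a_{n,j}(t)| \to 0$ a.s.\ (since $\sigma_j^2$ is dominated in Ces\`aro average by the fourth moment bound $\sigma_j^4\le \mathbb{E}[\zeta_j^4\mid Z_j]$, giving $\max_j \sigma_j^2/n \to 0$ a.s.). For $n$ large enough (along the path), $|a_{n,j}(t)|\le 1/2$ and the complex logarithm satisfies $|\log(1+a_{n,j}) - a_{n,j}|\le |a_{n,j}|^2$, so
\[
\log \varphi_{\mathcal{L}_n}(t) \;=\; \sum_{j=1}^n a_{n,j}(t) + O\!\bigg(\sum_{j=1}^n |a_{n,j}(t)|^2\bigg)\;=\; -\frac{t^2}{2n}\sum_{j=1}^n \sigma_j^2 + \sum_{j=1}^n R_{n,j}(t) + o(1)\;\asto\; -\tfrac{t^2\sigma^2}{2},
\]
where the error-squared sum vanishes a.s.\ because $\sum_j |a_{n,j}|^2 \lesssim \frac{t^4}{n^2}\sum_j \sigma_j^4 + n\cdot \max_j |R_{n,j}|^2 = O(n^{-1})$ a.s.\ using $\frac{1}{n}\sum_j \sigma_j^4$ a.s.\ bounded (fourth-moment assumption).

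The main obstacle is the delicate bookkeeping with the complex logarithm: one must ensure that on the a.s.\ event of interest $\max_j |a_{n,j}(t)|$ is eventually small enough to apply the $|\log(1+z)-z|\le |z|^2$ bound uniformly in $j$. This is where the fourth-moment assumption is genuinely used, not merely a second-moment assumption, because it gives $\max_j \sigma_j^2 = o(n)$ a.s.\ via a Borel--Cantelli / SLLN argument on $\{\mathbb{E}[\zeta_j^4\mid Z_j]\}$. Once this uniform smallness is in hand, the remaining steps are routine.
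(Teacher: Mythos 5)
Your proposal is correct and follows essentially the same route as the paper: the same Taylor expansion of the conditional characteristic function, the same SLLN applications to the conditional second and third moments, and the same use of the fourth-moment assumption (via Jensen) to kill the sum of squared remainders. The only difference is that you establish the product-to-exponential step directly via the complex logarithm (which requires the additional, easily verified fact that $\max_j|a_{n,j}(t)|\to 0$ a.s.), whereas the paper outsources exactly this step to the cited product-convergence Lemma~\ref{lem:conv_char_fct}.
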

  
  \begin{proof}[Proof of Lemma~\ref{lem:conv_charac_fct}]

  We now focus on $z_{n,j}=\EE\big( e^{\frac{it}{\sqrt{n}} \zeta_j}   \biggiven Z_j \big)-1$. By the tower property of conditional expectations, we have $\EE(\zeta_j \given Z_j)=0$ for all $j\in[n]$. Therefore
  $$
  z_{n,j} = -\frac{t^2}{2n}\EE(\zeta_j^2\given Z_j) + R_{n,j},~~\text{where }~R_{n,j} = \EE\Big( e^{\frac{it}{\sqrt{n}}\zeta_j}-1-\frac{it}{\sqrt{n}}\zeta_j+\frac{t^2}{2n}\zeta_j^2\Biggiven Z_i \Big).
  $$
  Since the random variables $\{\EE\big(\zeta_i^2\given Z_j\big)\big\}_{i=1}^n$ are $\iidtext$, by the law of large numbers, it holds that
  $$
  \sum_{m=1}^n \Big\{ -\frac{t^2}{2n}\EE\big(\zeta_j^2\given Z_j\big) \Big\} \stackrel{\text{a.s.}}{\to} -\frac{t^2}{2} \EE(\zeta_j^2) = -\frac{t^2}{2}\sigma^2,
  $$
  where $\sigma^2$ is defined in Lemma \ref{lem:cond_clt}. 
  Note  $|e^{ix}-1-i x +x^2/2|\leq \min\{|x|^2,|x|^3/6\}$ for any $x\in \mathbb{R}$, thus
  \$
  |R_{n,j}| &= \bigg|\EE\Big(e^{\frac{it}{\sqrt{n}}\zeta_j}-1-\frac{it}{\sqrt{n}}\zeta_j+\frac{t^2}{2n}\zeta_j^2\biggiven Z_j\Big)\bigg|\\
  &\leq \EE\bigg[ \min\Big\{ \frac{t^2}{2n}\zeta_j^2, ~\frac{t^3}{6n^{3/2}}|\zeta_j|^3  \Big\} \Biggiven Z_{j}  \bigg] \leq \frac{t^3}{6n^{3/2}} \EE\big( |\zeta_j|^3\biggiven Z_j\big).
  \$
  Under the finite fourth-moment condition, 
  by the law of large numbers we have
  \$
  \frac{1}{n}\sum_{j=1}^n \EE\big( |\zeta_j|^3\biggiven Z_j\big) ~\asto~ \EE\big( |\zeta_j|^3 \big) <\infty,
  \$ 
  hence
  $\sum_{i=1}^n|R_{n,j}|$ converges to zero almost surely, which leads to 
  $
  \sum_{j=1}^n z_{n,j} \to  -\frac{t^2}{2}\sigma^2$
  almost surely. 
  We now show $\sum_{j=1}^n |z_{n,j}|^2\stackrel{\text{a.s.}}{\to}0$. Simply note that $(x+y)^2\leq 2x^2+2y^2$, so 
  \#\label{eq:p2}
  \sum_{j=1}^n |z_{n,j}|^2 &\leq \frac{t^4}{2n^2}\sum_{i=1}^n \big\{\EE(\zeta_j^2\given Z_j)\big\}^2 + 2\sum_{j=1}^n R_{n,j}^2   \leq \frac{t^4}{2n^2}\sum_{j=1}^n  \EE(\zeta_j^4\given Z_j)  + 2\sum_{j=1}^n R_{n,j}^2 
  \#
  which converges to zero almost surely. 
  where the second inequality follows from Jensen's inequality. The a.s. convergence follows from the strong law of large numbers under the moment condition in Assumption \ref{assump:moment_main}, as well as the fact that 
  $
  \sum_{j=1}^n R_{n,j}^2 \leq \sum_{j=1}^n |R_{n,j}| \cdot \max_j|R_{n,j}|\leq \big(\sum_{j=1}^n |R_{n,j}|\big)^2, 
  $
  which converges to zero almost surely. 
  Combining equation \eqref{eq:p2} and Lemma \ref{lem:conv_char_fct}, we conclude the proof of Lemma \ref{lem:conv_charac_fct}. 
  \end{proof}

We quote the following well-known complex analysis result without proof.
  \begin{lemma}
  Suppose $z_{n,k}\in \mathbb{C}$ are such that $z_n=\sum_{k=1}^n z_{n,k}\to z_{\infty}$ and $\eta_n = \sum_{k=1}^n |z_{n,k}|^2\to 0$ as $n\to \infty$. Then 
  $ 
  \varphi_n   \prod_{k=1}^n (1+z_{n,k})\to \exp(z_\infty)~~\text{as }n\to\infty.
  $ 
  \label{lem:conv_char_fct}
  \end{lemma}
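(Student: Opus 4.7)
The plan is to take complex logarithms and reduce the product to a sum. The key observation is that the hypothesis $\eta_n = \sum_{k=1}^n |z_{n,k}|^2 \to 0$ forces a uniform smallness of the individual terms, since
\[
\max_{1\le k\le n} |z_{n,k}|^2 \le \sum_{k=1}^n |z_{n,k}|^2 = \eta_n \longrightarrow 0.
\]
Hence for all sufficiently large $n$, every factor $1 + z_{n,k}$ lies in the disk $\{w \in \mathbb{C} : |w-1| \le 1/2\}$, which avoids the negative real axis. On this disk the principal branch of $\log$ is analytic, and the product-to-sum identity
\[
\prod_{k=1}^n (1 + z_{n,k}) = \exp\Big(\sum_{k=1}^n \log(1+z_{n,k})\Big)
\]
holds without any stray $2\pi i$ ambiguity.

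Next, I would invoke the Taylor-remainder bound $|\log(1+z) - z| \le |z|^2$ valid on $|z| \le 1/2$. This follows from the integral representation $\log(1+z) - z = -\int_0^z \frac{t}{1+t}\,dt$, which upon the obvious estimate $|1+t|\ge 1/2$ along the segment gives the required quadratic control. Summing over $k$ yields
\[
\Big|\sum_{k=1}^n \log(1+z_{n,k}) - z_n\Big| \le \sum_{k=1}^n |z_{n,k}|^2 = \eta_n \longrightarrow 0.
\]
Combining this with the assumption $z_n \to z_\infty$, we conclude that $\sum_{k=1}^n \log(1+z_{n,k}) \to z_\infty$. Continuity of $\exp$ then gives $\varphi_n = \prod_{k=1}^n (1+z_{n,k}) \to \exp(z_\infty)$, which is the claimed conclusion.

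The main obstacle I anticipate is really just careful bookkeeping around the branch of the logarithm: one must be sure that the product-to-sum identity truly holds with the principal branch, and this in turn rests on the uniform smallness $\max_k |z_{n,k}| \to 0$. Once that is extracted from $\eta_n \to 0$, the two-term Taylor expansion controls the error in the sum and everything else is routine continuity. No probabilistic input is needed here; the statement is purely a deterministic fact about complex sequences, which is exactly what allows it to be applied pointwise (almost surely) in the proof of Lemma~\ref{lem:conv_charac_fct}.
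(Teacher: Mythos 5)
Your proof is correct. Note that the paper itself states this lemma explicitly ``without proof,'' so there is no in-paper argument to compare against; what you give is the standard argument, and all the steps check out: the uniform smallness $\max_k|z_{n,k}|\le\sqrt{\eta_n}\to 0$ puts every factor in the disk $|w-1|\le 1/2$, the bound $|\log(1+z)-z|\le|z|^2$ on $|z|\le 1/2$ is correct (e.g.\ via $\sum_{j\ge 2}|z|^j/j\le \tfrac{|z|^2}{2(1-|z|)}\le|z|^2$, or via your parametrized integral), and summing gives $\sum_k\log(1+z_{n,k})\to z_\infty$, whence the conclusion by continuity of $\exp$. One minor remark: the branch-of-logarithm issue you flag as the main obstacle is actually a non-issue in the direction you use it, since $\exp\bigl(\sum_k\log w_k\bigr)=\prod_k\exp(\log w_k)=\prod_k w_k$ holds for any choice of logarithms of nonzero $w_k$; the branch only matters if one tries to assert $\log\prod_k w_k=\sum_k\log w_k$, which your argument never needs.
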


\subsection{Auxiliary technical lemmas}

\begin{lemma}
Suppose a sequence of random variables $E_n$ satisfies $E_n=o_P(1)$ as $n\to \infty$. Then for any $\sigma$-algebras $\cF_n$ and any constant $\epsilon>0$, it holds that
$
\PP\big( |E_n|>\epsilon \given \cF_n\big) =o_P(1).
$
\label{lem:op_to_op}
\end{lemma}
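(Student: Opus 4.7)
The plan is to exploit the fact that $\PP(|E_n|>\epsilon \given \cF_n)$ is a bounded, non-negative random variable whose unconditional expectation is exactly $\PP(|E_n|>\epsilon)$, which tends to $0$ by hypothesis. So convergence in probability can be deduced from convergence of expectations via Markov's inequality.

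Concretely, first I would fix arbitrary constants $\epsilon>0$ and $\delta>0$, and set $Y_n := \PP(|E_n|>\epsilon \given \cF_n)$, which is a well-defined $\cF_n$-measurable random variable taking values in $[0,1]$. Next I would apply the tower property to compute $\EE[Y_n] = \PP(|E_n|>\epsilon)$, and invoke the hypothesis $E_n = o_P(1)$, which by definition gives $\PP(|E_n|>\epsilon)\to 0$ as $n\to\infty$. Then Markov's inequality on the non-negative random variable $Y_n$ yields
\[
\PP(Y_n > \delta) \;\le\; \frac{\EE[Y_n]}{\delta} \;=\; \frac{\PP(|E_n|>\epsilon)}{\delta} \;\to\; 0.
\]
Since $\delta>0$ was arbitrary, this is the definition of $Y_n = o_P(1)$, concluding the proof.

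There is no real obstacle here; the argument is a one-line application of Markov's inequality combined with the tower property. The only thing to be mindful of is that the $\sigma$-algebras $\cF_n$ are completely arbitrary (they need not form a filtration and need not depend on $E_n$ in any structured way), but this does not affect the argument since we only use that $Y_n$ is non-negative and bounded, and that its expectation equals the unconditional probability $\PP(|E_n|>\epsilon)$.
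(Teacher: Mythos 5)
Your proof is correct and is essentially identical to the paper's: both use the tower property to identify $\EE[Y_n]=\PP(|E_n|>\epsilon)$ and then apply Markov's inequality to conclude $\PP(Y_n>\delta)\le \PP(|E_n|>\epsilon)/\delta\to 0$. Nothing further is needed.
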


\begin{proof}[Proof of Lemma \ref{lem:op_to_op}]
Note that $\EE\big\{\PP\big( |E_n|>\epsilon \given \cF_n\big)\big\} = \PP\big( |E_n|>\epsilon \big)$. Thus for any $\delta>0$, we have 
\$
\PP\Big\{\PP\big( |E_n|>\epsilon \given \cF_n\big)>\delta\Big\} \leq \frac{1}{\delta}\PP\big( |E_n|>\epsilon \big)\to 0.
\$
Therefore we have $\PP\big( |E_n|>\epsilon \given \cF_n\big) =o_P(1)$ and completes the proof of Lemma~\ref{lem:op_to_op}.
\end{proof}

\begin{lemma}\label{lem:cond_to_op}
Let $\cF_n$ be a sequence of $\sigma$-algebra, and let $A_n\geq 0$ be a sequence of nonnegative random variables. If $\EE(A_n\given \cF_n) = o_P(1)$, then $A_n=o_P(1)$. 
\end{lemma}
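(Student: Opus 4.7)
The plan is to combine a conditional Markov inequality with the assumed convergence in probability of the conditional expectation, using a standard ``good event'' truncation argument.

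First, I would fix $\epsilon>0$ and aim to show $\PP(A_n>\epsilon)\to 0$. The key tool is the conditional Markov inequality: since $A_n\geq 0$, almost surely
\[
\PP(A_n>\epsilon\given\cF_n)\leq \EE(A_n\given\cF_n)/\epsilon.
\]
Next, for any $\eta>0$, I would introduce the $\cF_n$-measurable ``good event'' $G_n=\{\EE(A_n\given\cF_n)\leq \eta\}$ and split
\[
\PP(A_n>\epsilon)\leq \PP(A_n>\epsilon,\,G_n)+\PP(G_n^c).
\]
The second term equals $\PP(\EE(A_n\given\cF_n)>\eta)$, which tends to $0$ by the assumption $\EE(A_n\given\cF_n)=o_P(1)$. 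For the first term, I would rewrite it as $\EE[\mathbf{1}_{G_n}\PP(A_n>\epsilon\given\cF_n)]$ and apply the conditional Markov bound on $G_n$ to get an upper bound of $\eta/\epsilon$.

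Combining these, $\limsup_{n\to\infty}\PP(A_n>\epsilon)\leq \eta/\epsilon$, and sending $\eta\downarrow 0$ finishes the argument. This is essentially a routine measure-theoretic fact; there is no real obstacle, provided one remembers to use the \emph{conditional} Markov inequality (to exploit non-negativity of $A_n$) rather than trying to take unconditional expectations directly, which would only give a bound involving $\EE(A_n)$ instead of the in-probability condition.
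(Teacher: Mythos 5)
Your proof is correct. It opens the same way as the paper's: apply the conditional Markov inequality to get $\PP(A_n>\epsilon\given\cF_n)\leq \EE(A_n\given\cF_n)/\epsilon=o_P(1)$. Where the two arguments diverge is in how they pass from this bounded, $o_P(1)$ conditional probability to the unconditional statement $\PP(A_n>\epsilon)\to 0$. The paper extracts, from every subsequence, a further subsequence along which the conditional probability converges almost surely, and then invokes dominated convergence to conclude that its expectation (which equals $\PP(A_n>\epsilon)$ by the tower property) tends to zero. You instead split on the $\cF_n$-measurable event $G_n=\{\EE(A_n\given\cF_n)\leq\eta\}$, bound the contribution on $G_n$ by $\eta/\epsilon$ via the conditional Markov bound, and absorb $\PP(G_n^c)$ using the hypothesis; letting $\eta\downarrow 0$ finishes. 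Your truncation argument is more elementary — it avoids both the subsequence principle and the dominated convergence theorem — and gives an explicit quantitative bound $\PP(A_n>\epsilon)\leq \eta/\epsilon+\PP(\EE(A_n\given\cF_n)>\eta)$ for every $n$, whereas the paper's route is the generic ``bounded convergence in probability implies convergence in mean'' template. Both are complete and correct.
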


\begin{proof}[Proof of Lemma~\ref{lem:cond_to_op}]
By Markov's inequality, for any $\epsilon>0$, we have 
\$
B_n := \PP( A_n >\epsilon \given \cF_n) \leq \frac{\EE(A_n\given \cF_n)}{\epsilon } = o_P(1),
\$
and $B_n\in[0,1]$ are bounded random variables. For any subsequence $\{n_k\}_{k\geq 1}$ of $\NN$, since $B_{n_k}\stackrel{P}{\to} 0$, there exists a subsequence $\{n_{k_i}\}_{i\geq 1} \subset \{n_k\}_{k\geq 1}$ such that $B_{n_{k_i}} \stackrel{\text{a.s.}}{\to} 0$ as $i\to \infty$. By the dominated convergence theorem, we have $\EE[B_{n_{k_i}}]\to 0$, or equivalently, 
$
\PP(A_{n_{k_i}} >\epsilon) \to 0.
$
Therefore, for any subsequence $\{n_k\}_{k\geq 1}$ of $\NN$, there exists a subsequence $\{n_{k_i}\}_{i\geq 1} \subset \{n_k\}_{k\geq 1}$ such that $A_{n_{k_i}} \stackrel{P}{\to} 0$ as $i\to \infty$. By the arbitrariness of $\{n_k\}_{k\geq 1}$, we know $A_n\stackrel{P}{\to} 0$ as $n\to \infty$, which completes the proof. 
\end{proof}

We cite without proof 
the following result on convex functions; 
see, e.g., Ex 2.5 in~\citet{mestimate}.
\begin{lemma}\label{lem:convex}
If $f\colon \Theta\to \RR$ is convex 
in $\Theta\subset \RR^p$ and 
$\nabla^2 f(\theta) \succeq \lambda\mathbf{I}_{p\times p}$ 
for all $\theta\in \Theta$ with $\|\theta-\theta^0\|\leq c$ 
for constants $\lambda,c$, 
then 
$f(\theta) \geq f(\theta_0) + \nabla f(\theta_0)^\top (\theta-\theta_0) + \lambda/2 \cdot \min\{ \|\theta-\theta_0\|^2 , c\|\theta_0-\theta\|\}$. 
\end{lemma}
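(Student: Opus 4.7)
The plan is to split the argument into two cases according to whether $\theta$ lies inside or outside the ball $B(\theta_0,c) = \{\theta : \|\theta-\theta_0\|\leq c\}$ on which the Hessian lower bound is assumed. In both cases, the quantity of interest is the \emph{Bregman-type} remainder $R(\theta) := f(\theta) - f(\theta_0) - \nabla f(\theta_0)^\top(\theta-\theta_0)$, which is always nonnegative by convexity of $f$.

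For the inside case $\|\theta-\theta_0\|\leq c$, I would invoke Taylor's theorem with integral remainder along the segment $[\theta_0,\theta]$. Because this entire segment lies inside $B(\theta_0,c)$, the assumption $\nabla^2 f(\cdot)\succeq \lambda \mathbf{I}_{p\times p}$ applies pointwise on it, and integrating the quadratic form twice gives $R(\theta)\geq (\lambda/2)\|\theta-\theta_0\|^2$. This matches the bound $(\lambda/2)\min\{\|\theta-\theta_0\|^2, c\|\theta-\theta_0\|\}$ since $\|\theta-\theta_0\|\leq c$ means the minimum is achieved by the quadratic term.

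For the outside case $\|\theta-\theta_0\|>c$, the key device is to pull $\theta$ back to the boundary: define $t = c/\|\theta-\theta_0\|\in(0,1)$ and $\tilde\theta := (1-t)\theta_0 + t\theta = \theta_0 + c\cdot(\theta-\theta_0)/\|\theta-\theta_0\|$, which lies on $\partial B(\theta_0,c)$. The inside-case bound already proved applies at $\tilde\theta$ and yields $R(\tilde\theta)\geq (\lambda/2)c^2$. Then convexity of $f$ at the pair $(\theta_0,\theta)$ gives $f(\tilde\theta)\leq (1-t)f(\theta_0) + tf(\theta)$, which rearranged reads $R(\theta)\geq R(\tilde\theta)/t$. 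Plugging in $R(\tilde\theta)\geq (\lambda/2)c^2$ and using $c^2/t = c\|\theta-\theta_0\|$ produces $R(\theta)\geq (\lambda/2)c\|\theta-\theta_0\|$, the linear branch of the minimum in this regime.

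This is essentially a textbook computation, so I do not expect a substantive obstacle; the only minor point of care is to justify the Bregman rescaling $R(\theta)\geq R(\tilde\theta)/t$. This follows cleanly by writing the convexity inequality $f(\tilde\theta) - f(\theta_0) \leq t[f(\theta) - f(\theta_0)]$ and then subtracting the linear term $\nabla f(\theta_0)^\top(\tilde\theta - \theta_0) = t\,\nabla f(\theta_0)^\top(\theta-\theta_0)$ from both sides before dividing by $t$. Combining the two cases gives exactly the claimed lower bound.
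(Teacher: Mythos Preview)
Your argument is correct and is the standard textbook proof of this fact. The paper itself does not prove this lemma; it states the result ``without proof'' and refers the reader to an exercise in a set of lecture notes on M-estimation, so there is no paper proof to compare against. Your two-case split (Taylor expansion inside the ball, convexity rescaling outside) is exactly the intended argument.
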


\end{document}